\documentclass[a4paper,UKenglish,cleveref,autoref,thm-restate,numberwithinsect]{lipics-v2021}

\usepackage{amssymb}
\usepackage{amsmath}
\usepackage{latexsym}
\usepackage{url}
\usepackage{proof}
\usepackage{tikz}
\usetikzlibrary{positioning,arrows}
\usetikzlibrary{decorations.pathreplacing}
\usetikzlibrary{decorations.pathmorphing}
\usetikzlibrary{patterns}
\usepackage{ifthen}
\usepackage[normalem]{ulem}

\usepackage{scalerel}
\usepackage{proof}
\usepackage{ifthen}
\usepackage{mathtools}

\usepackage{tikz}
\usetikzlibrary{positioning,arrows,calc}
\usepackage{amsmath,amssymb}

\newcommand{\m}[1]{\mathsf{#1}}
\newcommand{\mr}[1]{\mathrel{#1}}
\newcommand{\seq}[2][n]{{#2_1},\dots,{#2_{#1}}}
\newcommand{\sig}[2][n]{{#2_1}\times\cdots\times{#2_{#1}}}
\newcommand{\h}[1][.3]{\hspace{#1mm}}
\newcommand{\SET}[1]{\{\h#1\h\}}
\newcommand{\inter}[1]{[\![{#1}]\!]}

\newcommand{\xB}{\mathcal{B}}

\newcommand{\xE}{\mathcal{E}}
\newcommand{\xF}{\mathcal{F}}

\newcommand{\xI}{\mathcal{I}}
\newcommand{\xJ}{\mathcal{J}}
\newcommand{\xM}{\mathcal{M}}
\newcommand{\xR}{\mathcal{R}}
\newcommand{\xS}{\mathcal{S}}
\newcommand{\xT}{\mathcal{T}}

\newcommand{\xV}{\mathcal{V}}
\newcommand{\xFTe}{\xF_{\m{te}}}
\newcommand{\xFTh}{\xF_{\m{th}}}
\newcommand{\xSTe}{\xS_{\m{te}}}
\newcommand{\xSTh}{\xS_{\m{th}}}

\newcommand{\Val}{\xV\m{al}}
\newcommand{\Var}{\xV\m{ar}}
\newcommand{\FVar}{\xF\xV\m{ar}}
\newcommand{\BVar}{\xB\xV\m{ar}}

\newcommand{\ExVar}{\mathcal{E}\m{x}\Var}
\newcommand{\Dom}{\mathcal{D}\m{om}}
\newcommand{\Ran}{\mathcal{R}\m{an}}

\newcommand{\VDom}{\mathcal{V}\Dom}
\newcommand{\Pos}{\mathcal{P}\m{os}}
\newcommand{\VPos}{\Pos_\xV}
\newcommand{\FPos}{\Pos_\xF}

\newcommand{\sort}[1]{\m{#1}}
\newcommand{\Bool}{\sort{Bool}}
\newcommand{\Int}{\sort{Int}}

\newcommand{\CO}[1]{[\h#1\h]} 
\newcommand{\ECO}[2]{\exists #1.\ #2}

\newcommand{\CTerm}[4]{%
\ifthenelse{\equal{#1}{} \and \equal{#2}{} \and \equal{#3}{} \and \equal{#4}{}}%
{\Pi X.\ s~\CO{\ECO{\vec{x}}{\varphi}}}%
{%
\ifthenelse{\equal{#1}{}}{}{\Pi #1.\ }%
#2%
\ifthenelse{\equal{#4}{}}%
{}%
{\ifthenelse{\equal{#3}{}}{~\CO{#4}}{~\CO{\ECO{#3}{#4}}}}%
}%
}
\newcommand{\AllInst}[1]{\langle\!\langle #1 \rangle\!\rangle}

\newcommand{\CEqn}[4]{%
\ifthenelse{\equal{#1}{}}{%
\ifthenelse{\equal{#4}{}}{#2 \approx #3}{#2 \approx #3~\CO{#4}}}{%
\ifthenelse{\equal{#4}{}}{\Pi #1.\, #2 \approx #3}{\Pi #1.\, #2 \approx #3~\CO{#4}}%
}}

\newcommand{\CRu}[4]{%
\ifthenelse{\equal{#1}{} \and \equal{#2}{} \and \equal{#3}{} \and \equal{#4}{}}%
{\Pi X.\ \ell \R r~\CO{\varphi}}%
{%
\ifthenelse{\equal{#1}{}}{}{\Pi #1.\ }%
#2 \R #3%
\ifthenelse{\equal{#4}{}}%
{}%
{~\CO{#4}%
}%
}%
}

\newcommand{\R}{\rightarrow}

\renewcommand{\L}{\leftarrow}

\newcommand{\Lb}[1][]{\mr{\vphantom{\R}_{#1}{\L}}}

\newcommand{\Rbase}[1][]{\R_{\mathsf{base}}}

\newcommand{\Ca}[1][]{\xleftrightarrow{#1}}

\newcommand{\Cb}[1][]{\Ca[]_{#1}}

\newcommand{\Cru}[1][\xE]{\ifthenelse{\equal{#1}{}}{\Cb[\m{rule}]}{\Cb[\m{rule},#1]}}
\newcommand{\Cbase}[1][\xE]{\ifthenelse{\equal{#1}{}}{\Cb[\m{base}]}{\Cb[\m{base},#1]}}

\renewcommand{\ge}{\geqslant}
\renewcommand{\le}{\leqslant}

\newcommand\subsetsim{\mathrel{\substack{
  \textstyle\subset\\[-0.2ex]\textstyle\sim}}}
\newcommand\supsetsim{\mathrel{\substack{
  \textstyle\supset\\[-0.2ex]\textstyle\sim}}}

\newcommand{\Bfnum}[1]{\textcolor{darkgray}{\rm\sffamily\bfseries #1}}

\newcommand{\crest}{\textsf{crest}}

\newcommand{\pvec}[1]{\vec{#1}\mkern2mu\vphantom{#1}}

\title{
Partial Rewriting and Value Interpretation of Logically Constrained Terms
}

\author{Takahito Aoto}{Niigata University, Japan}{aoto@ie.niigata-u.ac.jp}
{https://orcid.org/0000-0003-0027-0759}
{JSPS KAKENHI Grant Number JP24K14817.}

\author{Naoki Nishida}{
Nagoya University, Japan}{nishida@i.nagoya-u.ac.jp}{https://orcid.org/0000-0001-8697-4970}{JSPS KAKENHI Grant Number JP24K02900.}

\author{Jonas Sch\"opf}{Data Lab Hell, Austria}
{jonas.schoepf.pub@datalabhell.at}{https://orcid.org/0000-0001-5908-8519}{}

\authorrunning{T. Aoto, N. Nishida, and J. Sch\"opf}

\Copyright{Takahito Aoto, Naoki Nishida, and Jonas Sch\"opf}

\ccsdesc{Theory of computation~Equational logic and rewriting}

\keywords{
existentially constrained term,
most general constrained rewriting, 
partial constrained rewriting,
interpretation of constrained terms,
logically constrained term rewrite system
}

\category{} 

\funding{}

\acknowledgements{}

\nolinenumbers 

\EventEditors{}
\EventNoEds{2}
\EventLongTitle{}
\EventShortTitle{}
\EventAcronym{}
\EventYear{}
\EventDate{}
\EventLocation{}
\EventLogo{}
\SeriesVolume{}
\ArticleNo{}

\begin{document}
\maketitle              
\begin{abstract}
Logically constrained term rewrite systems (LCTRSs) are a rewriting
formalism that naturally supports built-in data structures, including
integers and bit-vectors. The recent framework of existentially constrained
terms and most general constrained rewriting on them (Takahata et al.,
2025) has many advantages over the original approach of rewriting constrained
terms. In this paper, we introduce partial constrained rewriting, a variant of rewriting
existentially constrained terms whose underlying idea has already appeared
implicitly in previous analyses and applications of LCTRSs. We examine the
differences between these two notions of constrained
rewriting. First, we establish a direct correspondence between them,
leveraging subsumption and equivalence of constrained terms where
appropriate. Then we give characterizations of each of them, using the
interpretation of existentially constrained terms by instantiation. We further introduce
the novel notion of value interpretation, that highlights subtle
differences between partial and most general rewriting.
\end{abstract}

\bibliographystyle{plainurl}

\section{Introduction}

Logically constrained term rewrite systems (LCTRSs) are a formalism that extends
term rewrite systems (TRSs). Many real-world computational problems
involve built-in data structures, such as integers and bit-vectors,
that often limit the applicability of techniques developed for
standard TRSs. By equipping rewrite rules with constraints interpreted
over an arbitrary model, the LCTRS formalism naturally supports such
built-in data structures. A constrained rewrite rule can be applied
only when its instantiated constraint is valid in the
underlying model; consequently, the framework is sufficiently
general to accommodate arbitrary built-in data structures in theory.
Moreover, by incorporating support from SMT-solvers and
leveraging the theories in common SMT libraries, LCTRS tools
can effectively address rewriting problems involving these data
structures.

Many topics from the term rewriting literature have already been
studied within the LCTRS formalism, including such as confluence
analysis~\cite{SM23,SMM24}, (non-)termination analysis~\cite{Kop13,NW18},
completion~\cite{WM18}, rewriting induction~\cite{KN14aplas,FKN17tocl},
algebraic semantics~\cite{ANS24}, and complexity analysis~\cite{WM21}.
Nevertheless, the framework also exhibits aspects that are not
covered by traditional term rewriting. One of such aspect is the
\emph{rewriting of} constrained terms. A constrained term consists of
a term together with a constraint that restricts the ways in
which the term may be instantiated. For example, the constrained term
$\m{f}(x)~\CO{x > 2}$ (in LCTRS notation) can be intuitively viewed as a
set of terms $\SET{ \m{f}(x) \mid x > 2 }$. In the analysis of LCTRSs,
constrained rewrite rules must be handled, and consequently rewriting
constrained terms naturally arises. Indeed, constrained rewriting is an integral
part of many different analysis techniques. For instance, in confluence
analysis, finding a specific joining sequence for constrained critical
pairs often requires reasoning about two terms under a shared
constraint, which leads to working on constrained terms. Similarly, in
rewriting induction, rewriting of constrained terms is employed in several
of its inference steps.

Rewriting constrained terms was already introduced in early work
on LCTRSs~\cite{KN13frocos}. However, the original formulation of
rewriting constrained terms is difficult to deal with, since rewrite
steps are used modulo equivalence transformations of constrained
terms, and the results of rewriting a constrained term remain highly
flexible even when both the rewrite rule and the position are
fixed. These difficulties motivated us 
to revisit the formalism of
constrained rewriting for LCTRSs in
~\cite{TSNA25-PPDP}. In
that work, we have introduced the notions of existentially constrained terms
and most general constrained rewriting for left-linear LCTRSs. We have shown the
uniqueness of reducts in this new setting, as well as commutativity
between rewrite steps and equivalence, under the restriction to
left-value-free rewrite rules\footnote{However, any rewrite rule is simulated by
a left-value-free rewrite rule, and thus this is not a real
restriction.}. Our formalism of most general constrained rewriting extracts the
so-called ``most general'' part of the corresponding original formalism.
As a result, it offers several advantages over the original approach
of rewriting constrained terms. 

In this paper, we introduce another notion of rewriting on existentially
constrained terms, called \emph{partial constrained rewriting}. The difference
between the two notions can be explained informally as follows. On the
one hand, most general constrained rewriting applies a rule to
\emph{all} instantiations of a constrained term. On the other hand,
partial constrained rewriting applies a rule to only \emph{some}
instantiations of a constrained term (see \Cref{fig:intuition for two
kinds of rewriting}).

\begin{example}
\label{exp:strong rewrite step}
Let us consider the following LCTRS over the integer theory:
\[
\mathcal{R} = 
\left\{
\begin{array}{l@{\>}c@{\>}ll}
\mathsf{sum}(x) & \to&  0 &  [0 \ge x ]\\
\mathsf{sum}(x) &\to&  x + \mathsf{sum}(x + -1) &   [ 0 < x ]
\end{array}
\right
\}
\]
The following reduction obtains a normal form by most general constrained rewriting:
\[
\begin{array}{@{}l@{\>}c@{\>}l@{\>}c@{}}
\mathsf{sum}(x)~[1 \le x \land x \le 5] 
& \to_{\xR} &  x + \mathsf{sum}(x + -1)~[1 \le x \land x \le 5]\\
& \to_{\xR} &  x + \mathsf{sum}(y)~[1 \le x \land x \le 5 \land y = x - 1]
& \not\to_{\xR} 
\end{array}
\]
On the other hand,
we have 
$\mathsf{sum}(1) \to_{\xR}^*  1$,
$\mathsf{sum}(2) \to_{\xR}^*  3$,
$\mathsf{sum}(3) \to_{\xR}^*  6$,
$\mathsf{sum}(4) \to_{\xR}^*  10$, and
$\mathsf{sum}(5) \to_{\xR}^*  15$.
However, most general rewriting is not suitable for this computation, 
as there is no single rule that can be applied
to the constrained term $\mathsf{sum}(y)~[1 \le x \land x \le 5 \land y = x - 1]$.
In contrast, partial constrained rewriting allows
the rewriting process to proceed by
\[
\begin{array}{@{}l@{\>}c@{\>}l@{\>}c@{\>}l@{}}
\cdots & \to_{\xR} &  x + \mathsf{sum}(y)~[1 \le x \land x \le 5 \land y = x - 1]
& \leadsto_{\xR} & 1 + 0\\
\end{array}
\]
because the first rule is applicable 
as the constraint $1 \le x \land x \le 5 \land y = x - 1$ is valid
with $x = 1$ and $y = 0$.
Similarly, partial constrained rewriting allows 
us to continue using the second rewrite rule as
\[
\begin{array}{@{}l@{\>}c@{\>}l@{}}
\cdots & \to_{\xR} &  1 + \mathsf{sum}(y)~[1 \le x \land x \le 5 \land y = x - 1]\\
       &\leadsto_{\xR} &  1 + (1 + \mathsf{sum}(z))~[1 \le x \land x \le 5 \land y = x - 1 \land z = y - 1]
\end{array}
\]
Thus, by partial rewriting we obtain
$\SET{ t \in \mathsf{NF} \mid \mathsf{sum}(x)~[1 \le x \land x \le 5] \leadsto^* t }
= \SET{ 1, 3, 6, 10, 15 }$.
\end{example}
The motivation for introducing this variant of rewriting on existentially
constrained terms is twofold: not only is the idea itself natural, but it has
also appeared implicitly in previous analyses and applications of LCTRSs,
although the two rewriting formalisms may need to be applied separately
depending on the case.

In this paper, we study the  differences between most general rewriting and
partial rewriting from several perspectives. First, we establish a direct
correspondence between the two formalisms, showing that partial
constrained rewrite steps are included in most general rewrite steps, but not
vice versa. We further demonstrate that most general rewrite steps can
be simulated by partial rewriting through subsumption and equivalence of
constrained terms.
We also provide characterizations of each type of rewriting
constrained terms in terms of interpretations of constrained terms and
rewrite rules as sets of terms and rewrite rules, respectively.
The idea of interpreting constrained terms as sets of terms
has appeared, for example, in~\cite[Section~2.3]{FKN17tocl} to reduce
certain problems on LCTRSs to problems in TRSs. However, these
characterizations alone are insufficient to distinguish partial
constrained reductions from most general constrained reductions.
This motivates the introduction of a novel notion of \emph{value
interpretation}. Like the standard interpretation, the value interpretation
reflects the equivalence of constrained terms; however, unlike the
standard interpretation, it does not reflect subsumption. Using the
value interpretation, we provide more precise characterizations of
the two notions of rewriting constrained terms, highlighting
subtle differences between partial and most general rewriting.

\begin{figure}
\begin{tikzpicture}[scale=.8,thick]
\node at (0,5) {Two kinds of rewrite steps by the rule};
\draw[dashed]  (4,4.5) rectangle (4.5,5.5);
\draw[->,thick] (4.7,5) -- (5.2,5);
\draw[pattern=north east lines, pattern color=gray](5.5,4.5) rectangle (6,5.5);
\draw[->,thick] (2.2,2.5) -- (3,2.5);
\draw[dashed] (0,1) rectangle (2,4);
\draw (1,2.5) circle (.7 and 1.2);
\draw [pattern=north east lines, pattern color=gray] (4,2.5) circle (.7 and 1.2);
\draw[dashed] (7,1) rectangle (9,4);
\node at (2,-1) {Most General Rewrite Step};
\node at (9,-1) {Partial Rewrite Step};
\draw (8,1) circle (.7 and 1.2);
\draw[->,thick,decorate,decoration=snake] (9.2,1)-- (10,1);
\begin{scope}
\clip (10,1) rectangle (13,3); 
\draw [pattern=north east lines, pattern color=gray] (11,1) circle (.7 and 1.2);
\end{scope}
\draw (10.3,1) -- (11.7,1);
\end{tikzpicture}
\caption{Most General and Partial Rewrite Steps}
\label{fig:intuition for two kinds of rewriting}
\end{figure}
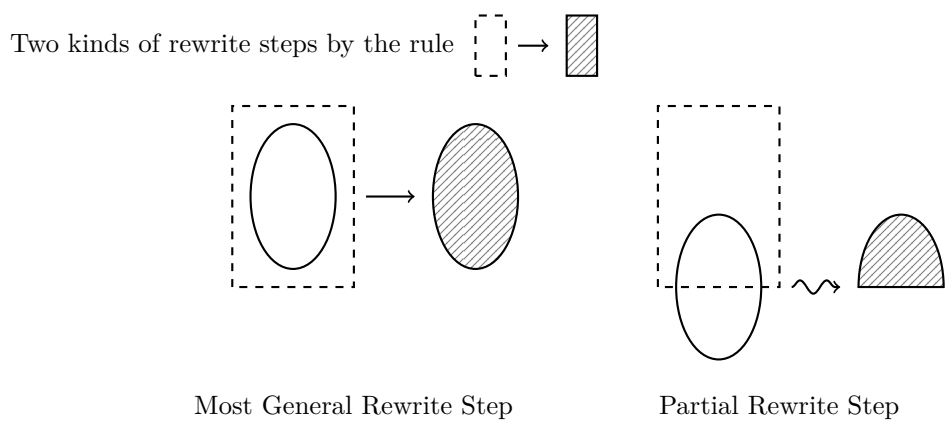

The remainder of the paper is organized as follows.
\Cref{sec:preliminaries} introduces some preliminaries.
In \Cref{sec:partial-reduction}, we introduce partial constrained rewriting
and establish direct correspondences
between partial constrained rewriting and most general constrained rewriting.
\Cref{sec:Interpretations of Weak and Most General Reductions} explores properties of 
interpretations of existentially constrained terms,
introduces the interpretation of constrained rewrite rules,
and characterizes both partial and most general rewrite steps
based on these interpretations, while also highlighting
the limitations of these.
In \Cref{sec:value-interpretation}, we introduce the value interpretation and discuss
its properties. 
\Cref{sec:characterization} provides characterizations of partial 
and most general rewrite steps using value interpretations.
Related work is discussed in \Cref{sec:related-work},
and \Cref{sec:conclusion} concludes the paper.
All omitted proofs of the claims 
are provided in the appendix.

\section{Preliminaries}
\label{sec:preliminaries}

In this section, we briefly recall the basic notions of LCTRSs~\cite{ANS24,KN13frocos,SM23,SMM24,TSNA25LOPSTR}
and fix additional notations used throughout this paper.
Familiarity with the basic notions of term rewriting is assumed (see for example~\cite{BN98}).

We employ a sorted signature $\langle \xS,\xF \rangle$
consisting of a set $\xS$ of sorts 
and a set $\xF$ of function symbols,
where each $f \in \xF$ is attached with its sort declaration 
$f\colon \sig{\tau} \to \tau_0$  ($\tau_0,\ldots,\tau_n \in \xS$).
The set of sorted variables is denoted by $\xV$.
For $X \subseteq \xV$ and $F \subseteq \xF$,
the set of terms over the function symbols in $F$ and 
variables in $X$ is denoted by $\xT(F,X)$.
We write $t^\tau$ to denote that a term $t$ has sort $\tau$.
The set of variables appearing in a term $t$ is denoted by $\Var(t)$.
A term $t$ is said to be \emph{linear} when any variable occurs at most once in $t$.
We denote a sequence of variables by $\vec{x}$, and use the same notation for sequences of function symbols, terms, or similar objects.
The set of variables occurring in $\vec{x}$ is denoted by $\SET{\vec{x}}$.
Similarly, the set of sequences of elements of a set $T$ is denoted by $T^*$.

The set of positions in a term $t$ is denoted by $\Pos(t)$.
The symbol and subterm occurring at a position $p \in \Pos(t)$ is denoted
by $t(p)$ and $t|_p$, respectively. 
A term obtained from $t$ by replacing subterms at positions $\seq{p}$ 
by the terms $\seq{t}$, having the same sort as $t|_{p_1},\ldots,t|_{p_n}$,
is written as $t[\seq{t}]_{\seq{p}}$ or just $t[\seq{t}]$ when no confusions arises.
If we consider an expression obtained by
replacing those subterms $t|_{p_1},\ldots,t|_{p_n}$ in $t$ by holes of the same sorts,
it is called a multihole context and denoted by $t[~]_{\seq{p}}$.
A sort-preserving function $\sigma$ from $\xV$ to $\xT(\xF,\xV)$ 
with a finite domain $\Dom(\sigma) = \SET{x \in \xV \mid \sigma(x) \neq x}$ 
is called a substitution. It is naturally extended to a homomorphism
$\sigma\colon \xT(\xF,\xV) \to \xT(\xF,\xV)$, which we also denote by $\sigma$.
For a set of terms $U$, we write $\sigma(U) = \SET{\sigma(t) \mid t \in U}$.
As usual, $\sigma(t)$ can also be also written as $t\sigma$.
We write $\SET{ x_1\mapsto t_1,\ldots, x_n \mapsto} t_n $
for the substitution $\sigma$
such that $\sigma(x_i) = t_i$ ($1 \le i \le n)$
and $\Dom(\sigma) \subseteq \SET{ x_1,\ldots,x_n }$, 
which may be abbreviated as $\SET{ \vec{x} \mapsto \vec{t} }$.

We assume that the set $\xS$ of sorts and the set $\xF$ of function symbols 
are partitioned into two disjoint sets, respectively: 
$\xS = \xSTh \uplus \xSTe$ and $\xF = \xFTh \uplus \xFTe$,
where each $f\colon \sig{\tau} \to \tau_0 \in \xFTh$ satisfies 
$\tau_i \in \xSTh$ for all $0 \leqslant i \leqslant n$. 
Elements of $\xSTh$ ($\xFTh$) and $\xSTe$ ($\xFTe$)
are called theory sorts (symbols) and term sorts (symbols).
Moreover, we consider a model $\xM = \langle \xI,\xJ \rangle$ 
over the sorted signature $\langle \xSTh,\xFTh \rangle$.
Here, $\xI$ is the interpretation of theory sorts
and assigns any sort $\tau \in \xSTh$ a non-empty set $\xI(\tau)$.
Theory symbols are interpreted by $\xJ$,
which assigns any theory symbol $f\colon\sig{\tau} \to \tau_0 \in \xFTh$ 
to a function $\xJ(f)\colon \xI(\tau_1) \times \cdots \times \xI(\tau_n) \to \xI(\tau_0)$.
A valuation $\rho$ on the model $\xM = \langle \xI, \xJ \rangle$
is a mapping that assigns any $x^\tau \in \xV$ with $\rho(x) \in \xI(\tau)$.
The interpretation of a term $t$ in the model $\xM$ over the valuation $\rho$
is denoted by $\inter{t}_{\xM,\rho}$.

We further assume that 
any element $a \in \xI(\tau)$ appears as a constant $a^\tau$ in $\xFTh$,
for every sort $\tau \in \xSTh$.
These constants are called \emph{values} and the set of all values is denoted by $\Val$.
The set of values contained in a term $t$ is denoted by $\Val(t)$.
A term $t$ is \emph{value-free} if $\Val(t) = \varnothing$.
For a substitution $\gamma$ we define $\VDom(\gamma) = \SET{x \in \xV \mid \gamma(x) \in \Val}$.

We also assume a special sort $\Bool \in \xSTh$ for Booleans
having the standard interpretation
$\xI(\Bool) = \mathbb{B} = \SET{\mathsf{true}, \mathsf{false}}$,
and the existence of necessary theory symbols 
such as $\neg$, ${\land}$, ${\Rightarrow}$, ${\Leftrightarrow}$, ${=_\tau}$, \ldots
with their standard interpretations.
The terms in $\xT(\xFTh,\xV)$ having sort $\Bool$ are called
\emph{logical constraints} (or simply \emph{constraints}).
For a constraint $\varphi$ and a valuation $\rho$ over the model $\xM$, 
we write $\vDash_{\xM,\rho} \varphi$ if $\inter{\varphi}_{\xM,\rho} = \mathsf{true}$, 
and $\vDash_{\xM} \varphi$ if 
$\vDash_{\xM,\rho} \varphi$ for all valuations $\rho$.
In the latter case the constraint $\varphi$ is said to be \emph{valid}.
A constraint $\varphi$ is said to be \emph{satisfiable}
if $\vDash_{\xM,\rho} \varphi$ for some valuation $\rho$.
For $X \subseteq \xV$, a substitution $\gamma$ is said to be \emph{$X$-valued}
if $\gamma(X) \subseteq \Val$.
We write $\gamma \vDash_\xM \varphi$ (and say \emph{$\gamma$ respects $\varphi$})
if the substitution $\gamma$ is $\Var(\varphi)$-valued 
and $\vDash_{\xM} \varphi\gamma$.
If no confusion arises then we drop the subscript $\xM$.

\begin{example}
\label{exp:sorted-signature-for-leading-example}
For the LCTRS in \Cref{exp:strong rewrite step}
we have $\xSTh = \SET{ \Bool, \Int }$, $\xSTe = \varnothing$,
$\xFTh = \mathbb{Z} \cup \mathbb{B} \cup 
\SET{ +: \Int \times \Int \to \Int, 
{\ge}:\Int \times \Int \to \Bool,
{<}:\Int \times \Int \to \Bool, \ldots }$,
and $\xFTe = \SET{ \mathsf{sum}: \Int \times \Int \to \Int }$,
where $\Val = \mathbb{Z} \cup \mathbb{B}$.
We have $\xI(\Bool) = \mathbb{B}$, $\xI(\Int) = \mathbb{Z}$,
$\xJ(+): \mathbb{Z} \times \mathbb{Z} \to \mathbb{Z}$
is the addition function on $\mathbb{Z}$, and so on.
If we take a valuation $\rho$ such that $\rho(x) = \rho(y) = 1$,
then we have $\inter{x + y}_{\xM,\rho} = 2$
and $\inter{x + y \ge 2}_{\xM,\rho} = \mathsf{true}$.
Thus, $\vDash_{\xM,\rho} x + y \ge 2$, as $x + y \ge 2$ is a constraint.
The constraint $x + y \ge 2$ is satisfiable but not valid.
For example, 
$x =_\mathbb{Z} x$, $(x \ge 0) \lor (x < 0)$ and 
$(0 < x ) \Rightarrow (x \ge 0)$ are valid constraints, hence we obtain
$\vDash_{\xM} (x \ge 0) \lor (x < 0)$.
Note that values are terms, therefore
$\sigma = \SET{ x \mapsto 1, y \mapsto 1, z  \mapsto \mathsf{sum}(w) }$
is a substitution. We have $(x + z)\sigma = 1 + \mathsf{sum}(w)$.
As $\VDom(\sigma) = \SET{ x,y  }$,
the substitution $\sigma$ is $\SET{x,y}$-valued which implies
$\sigma \vDash_{\xM} x + y \ge 2$.
\end{example}

An \emph{existential constraint} is a pair $\langle \vec{x},
\varphi \rangle$ of a sequence of variables $\vec{x}$ and a constraint
$\varphi$, written as $\ECO{\vec{x}}{\varphi}$, 
such that $\SET{\vec{x}} \subseteq \Var(\varphi)$.
We define the
sets of \emph{free} and \emph{bound} variables of $\ECO{\vec{x}}{\varphi}$
as follows:
$\FVar(\ECO{\vec{x}}{\varphi}) = \Var(\varphi) \setminus \SET{\vec{x}}$
and
$\BVar(\ECO{\vec{x}}{\varphi}) = \SET{\vec{x}}$.
We define the interpretation of existential constraints as follows:
We have $\vDash_{\xM,\rho} \ECO{\vec{x}}{\varphi}$
if there exists a valuation $\rho' = \rho[\vec{x} \mapsto \vec{w}]$,
i.e.\ $\rho'(x) = \rho(x)$ for any $x \in \xV \setminus \SET{\vec{x}}$
and $\rho'(\vec{x}) = \vec{w}$,
such that $\xM \vDash_{\xM,\rho'} \varphi$.
An \emph{existentially constrained term} is a 
triple $\langle X, s, \ECO{\vec{x}}{\varphi} \rangle$, written as $\CTerm{X}{s}{\vec{x}}{\varphi}$, 
of a set $X$ of variables,
a term $s$, and an existentially constraint $\ECO{\vec{x}}{\varphi}$,
such that
$\FVar(\ECO{\vec{x}}{\varphi}) \subseteq X \subseteq \Var(s)$,
and $\BVar(\ECO{\vec{x}}{\varphi}) \cap \Var(s) = \varnothing$.
Variables in $X$ are called \emph{logical variables} (of $\CTerm{X}{s}{\vec{x}}{\varphi}$).
An existentially constrained term $\CTerm{X}{s}{\vec{x}}{\varphi}$ is said to be \emph{satisfiable} if 
$\ECO{\vec{x}}{\varphi}$ is satisfiable.
\begin{example}
\label{exp:existential-constraint}
Let us continue with the LCTRS and the sorted signature in \Cref{exp:strong rewrite step,exp:sorted-signature-for-leading-example}.
Then $\ECO{x}{x + y \ge 2}$ is an example of an existential constraint
for which we have
$\FVar(\ECO{x}{x + y \ge 2}) = \SET{ y }$
and
$\BVar(\ECO{x}{x + y \ge 2}) = \SET{ x }$.
Also $\ECO{x,y}{x + y \ge 2}$ and $x + y \ge 2$ are existential constraints.
Note that for the latter we drop the sequence of variables as it is empty.
On the other hand, 
we exclude $\ECO{z}{x + y \ge 2}$ from the set of existential constraints,
as $z \notin \Var({x + y \ge 2})$.
Consider $\rho$ from the example above
that yields $\vDash_{\xM,\rho} \ECO{w}{x + y \ge w}$.
Existential constraints can be used to construct the following 
two examples of existentially constrained terms 
$\CTerm{\SET{x,y }}{\mathsf{sum}(x,y)}{w}{x + y \ge w}$
and 
$\CTerm{\SET{x,y }}{\mathsf{sum}(x,y)}{}{x \ge y}$.
The $\Pi$-notation is designed to allow value-only instantiations for
variables that do not appear in the constraint part.
For example, the constrained term
$\CTerm{\SET{x }}{\mathsf{sum}(x,y)}{}{\mathsf{true}}$
allows the instantiations $n \in \mathbb{Z}$
and $t \in \xT(\xF,\xV)$ in the term $\mathsf{sum}(n,t)$.
\end{example}

An existentially constrained term $\CTerm{X}{s}{\vec{x}}{\varphi}$ is said to be
\emph{subsumed by} an existentially constrained term
$\CTerm{Y}{t}{\vec{y}}{\psi}$, denoted by $\CTerm{X}{s}{\vec{x}}{\varphi}
\subsetsim \CTerm{Y}{t}{\vec{y}}{\psi}$, 
if for all $X$-valued substitutions
$\sigma$ with 
$\sigma \vDash_\xM \ECO{\vec{x}}{\varphi}$ 
there exists a
$Y$-valued substitution $\gamma$ with 
$\gamma \vDash_\xM \ECO{\vec{y}}{\psi}$
such that $s\sigma = t\gamma$.
Two existentially constrained terms $\CTerm{X}{s}{\vec{x}}{\varphi}$ 
and $\CTerm{Y}{t}{\vec{y}}{\psi}$ are said to be \emph{equivalent},
denoted by $\CTerm{X}{s}{\vec{x}}{\varphi}
\sim \CTerm{Y}{t}{\vec{y}}{\psi}$
if $\CTerm{X}{s}{\vec{x}}{\varphi} \subsetsim \CTerm{Y}{t}{\vec{y}}{\psi}$
and $\CTerm{Y}{t}{\vec{y}}{\psi} \subsetsim \CTerm{X}{s}{\vec{x}}{\varphi}$.
\begin{example}
Let us continue with \Cref{exp:existential-constraint}.
We obtain that
$\CTerm{\SET{x,y }}{\mathsf{sum}(x,y)}{}{x \ge y}
\subsetsim
\CTerm{\SET{x',y' }}{\mathsf{sum}(x',y')}{w}{w \ge -1 \land x' = y' + w}$
and
$\CTerm{\SET{x,y }}{\mathsf{sum}(x,y)}{}{\mathsf{true}}
\subsetsim 
\CTerm{\SET{x' }}{\mathsf{sum}(x',y')}{}{\mathsf{true}}$,
but both do not satisfy equivalence.
We have, for example, 
$\CTerm{\SET{x,y }}{\mathsf{sum}(x,y)}{}{x = 1 \land y > x}
\sim \CTerm{\SET{y' }}{\mathsf{sum}(1,y')}{}{y' \ge 2}$.
For various characterizations of equivalence, see~\cite{TSNA25LOPSTR}.
\end{example}



A \emph{constrained rewrite rule} is a quadruple
consisting of a set $Z$ of theory variables,
terms $\ell, r$ of a same sort and a constraint $\pi$
such that $\Var(\pi) \cup (\Var(r)\setminus \Var(\ell)) \subseteq Z$,
which is written as $\CRu{Z}{\ell}{r}{\pi}$.
A constrained rule $\CRu{Z}{\ell}{r}{\pi}$
is said to be \emph{left-linear} (\emph{left-value-free})
if $\ell$ is linear (value-free).
The original definition of constrained rewriting 
in~\cite{KN13frocos,FKN17tocl} is defined modulo equivalence $\sim$ of constrained
terms, which makes analyses more complicated and its implementation mostly infeasible.
However, most general constrained rewriting on existential constrained terms~\cite{TSNA25-PPDP}
has several advantages compared to the original definition of constrained rewriting,
even though it is presently formulated only for left-linear constrained rewrite rules.
Since this paper follows the line of our previous work~\cite{TSNA25-PPDP},
we again restrict our attention to left-linear constrained rewrite rules.

Let $\CTerm{X}{s}{\vec{x}}{\varphi}$ be a satisfiable existentially constrained term. 
Suppose
that $\rho\colon \CRu{Z}{\ell}{r}{\pi}$ is a left-linear constrained rule
satisfying $\Var(\rho) \cap \Var(s,\varphi) = \varnothing$. We say
\emph{$\CTerm{X}{s}{\vec{x}}{\varphi}$ has a $\rho$-redex at position $p \in
\Pos(s)$ using substitution $\gamma$} if 
\Bfnum{1.}
$\Dom(\gamma) = \Var(\ell)$,
\Bfnum{2.}
$s|_p = \ell\gamma$,
\Bfnum{3.}
$\gamma(x) \in \Val \cup X$ for all $x \in \Var(\ell) \cap Z$, and
\Bfnum{4.}
$\vDash_{\xM} (\ECO{\vec{x}}{\varphi}) \Rightarrow (\ECO{\vec{z}}{\pi\gamma})$,
where $\SET{\vec{z}} = \Var(\pi) \setminus \Var(\ell)$.
Suppose $\CTerm{X}{s}{\vec{x}}{\varphi}$ has a $\rho$-redex 
at position $p \in \Pos(s)$ using substitution $\gamma$.
Then we have a \emph{most general rewrite step}
$\CTerm{X}{s}{\vec{x}}{\varphi} \R_\rho \CTerm{Y}{t}{\vec{y}}{\psi}$
(or written as $\CTerm{X}{s}{\vec{x}}{\varphi} \R^p_{\rho,\gamma} \CTerm{Y}{t}{\vec{y}}{\psi}$
with explicit $p$ and $\gamma$)
where
\Bfnum{1.}
        $t = s[r\gamma]$,
\Bfnum{2.}
        $\psi = \varphi \land \pi\gamma$,
\Bfnum{3.}
        $\SET{\vec{y}} = \Var(\psi) \setminus \Var(t)$, and
\Bfnum{4.}
        $Y = \ExVar(\rho) \cup (X \cap \Var(t))$.
An LCTRS is a set of constrained rewrite rules
that includes the set of calculation rules
$\xR_\textsf{ca} = \SET{ f(x_1,\ldots,x_n) \to y ~\CO{y = f(x_1,\ldots,x_n)} \mid f
\in \xFTh \setminus \Val }$. However, in examples we will refrain from
explicitly mentioning the calculation rules of $\xR$.
For an LCTRS $\xR$, we have 
$\CTerm{X}{s}{\vec{x}}{\varphi} \R_\xR \CTerm{Y}{t}{\vec{y}}{\psi}$
if 
$\CTerm{X}{s}{\vec{x}}{\varphi} \R_\rho \CTerm{Y}{t}{\vec{y}}{\psi}$
for some $\rho \in \xR$.
The transitive and reflexive closure
$\R_\xR^*$
of $\R_\xR$ is called the \emph{most general constrained reduction}.


\begin{example}
\label{exp:most general rewrite steps}
Following our formulation, the constrained rewrite rules in \Cref{exp:strong rewrite step}
are presented as
$\CRu{\SET{x}}{\mathsf{sum}(x)}{0}{0 \ge x}$
and
$\CRu{\SET{x}}{\mathsf{sum}(x)}{x + \mathsf{sum}(x + -1)}{x > 0}$. 
Let $s$ be the
constrained term $\CTerm{\SET{x}}{\mathsf{sum}(x)}{}{x > 2}$. In order to
perform a rewrite step starting from $s$, we first consider a renamed
variant of the second rule $\rho\colon \CRu{\SET{x'}}{\mathsf{sum}(x')}{x'
+ \mathsf{sum}(x' + -1)}{x' > 0}$.
Then $s$ admits a $\rho$-redex at position $\epsilon$ using
the substitution $\gamma = \SET{ x' \mapsto x }$, as it satisfies $\vDash_\xM x > 2 \Rightarrow x > 0$.
Hence we obtain
a most general rewrite step 
$s \to^\epsilon_{\rho,\gamma} \CTerm{\SET{x }}{x + \mathsf{sum}(x + -1)}{}{x > 2 \land x > 0}$.
In this case, the additional constraint $x > 0$ is auxiliary. 
However, this addition of a constraint is necessary in general: for example, consider a
rewrite rule $\rho': \CRu{\SET{x',y'}}{\mathsf{f}(x')}{x' + y'}{y' \ge x'}$
and a most general rewrite step
$\CTerm{\SET{x }}{\mathsf{f}(x)}{}{x > 2}
\to_{\rho'} \CTerm{\SET{ y' }}{x + y'}{}{x > 2 \land y' > x}$.
Note that the extra variable $y'$ in the rule remains uninstantiated,
since the domain of the substitution is exactly $\Var(\ell)$.
\end{example}

\section{Partial Reduction of Constrained Terms}
\label{sec:partial-reduction}

As explained in \Cref{exp:strong rewrite step},
the existing notion of rewriting logically constrained terms is
not able to capture situations in which a rewrite rule can be applied
only to a part of the instantiations (here denoted as target constrained term).
To reflect such a situation in
this section, we introduce a novel notion called \emph{partial constrained reduction}.

First, given a constrained rewrite rule $\rho$, we define the notion
of a partial $\rho$-redex before we proceed to the definition of the partial rewrite
steps.

\begin{definition}[{Partial} $\rho$-Redex]
\label{def:weak-rho-redex}
Let $\CTerm{X}{s}{\vec{x}}{\varphi}$ be a satisfiable existentially constrained term. Suppose
that $\rho\colon \CRu{Z}{\ell}{r}{\pi}$ is a left-linear constrained rule
satisfying $\Var(\rho) \cap \Var(s,\varphi) = \varnothing$. We say
\emph{$\CTerm{X}{s}{\vec{x}}{\varphi}$ has a partial $\rho$-redex at position $p \in
\Pos(s)$ using substitution $\gamma$} if all of the following statements hold:
\textup{\Bfnum{1.}}
$\Dom(\gamma) = \Var(\ell)$,
\textup{\Bfnum{2.}}
$s|_p = \ell\gamma$,
\textup{\Bfnum{3.}}
$\gamma(x) \in \Val \cup X$ for all $x \in \Var(\ell) \cap Z$, and
\textup{\Bfnum{4.}}
$(\ECO{\vec{x}}{\varphi}) \land(\ECO{\vec{z}}{\pi\gamma})$ is satisfiable in $\xM$
where $\SET{\vec{z}} = \Var(\pi) \setminus \Var(\ell)$.
\end{definition}

The difference to the definition of $\rho$-redex for a most general rewrite step
lies in item~\Bfnum{4}.
For a (non-partial) $\rho$-redex, we require the \emph{validity} of the formula
$(\ECO{\vec{x}}{\varphi}) \Rightarrow (\ECO{\vec{z}}{\pi\gamma})$,
whereas for a partial $\rho$-redex, we only require the \emph{satisfiability} of the formula
$(\ECO{\vec{x}}{\varphi}) \land (\ECO{\vec{z}}{\pi\gamma})$.
It follows that every (non-partial) $\rho$-redex is also a partial $\rho$-redex.
To be precise, this fact is stated as follows.

\begin{restatable}{lemma}{LemmaRedexIsAWeakRedex}
\label{lem:redex is a weak redex}
Let $\CTerm{X}{s}{\vec{x}}{\varphi}$ be a satisfiable existentially constrained term. Suppose
that $\rho\colon \CRu{Z}{\ell}{r}{\pi}$ is a left-linear constrained rule
satisfying $\Var(\rho) \cap \Var(s,\varphi) = \varnothing$. 
Any $\rho$-redex of $\CTerm{X}{s}{\vec{x}}{\varphi}$ at position $p \in \Pos(s)$ using a substitution $\gamma$
is a partial $\rho$-redex at $p$ using $\gamma$.
\end{restatable}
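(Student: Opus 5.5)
Since conditions \Bfnum{1.}--\Bfnum{3.} of a partial $\rho$-redex in \Cref{def:weak-rho-redex} coincide verbatim with conditions \Bfnum{1.}--\Bfnum{3.} of a (non-partial) $\rho$-redex, the plan is simply to take a $\rho$-redex of $\CTerm{X}{s}{\vec{x}}{\varphi}$ at $p$ using $\gamma$, reuse those three conditions unchanged, and reduce everything to deriving the satisfiability condition \Bfnum{4.} from the validity condition. So the only real step is to show that $\vDash_{\xM} (\ECO{\vec{x}}{\varphi}) \Rightarrow (\ECO{\vec{z}}{\pi\gamma})$ implies that $(\ECO{\vec{x}}{\varphi}) \land (\ECO{\vec{z}}{\pi\gamma})$ is satisfiable, with the same $\vec{z}$ in both.

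For this I will use the standing hypothesis that $\CTerm{X}{s}{\vec{x}}{\varphi}$ is satisfiable, i.e.\ there is a valuation $\rho_0$ with $\vDash_{\xM,\rho_0} \ECO{\vec{x}}{\varphi}$. Validity of the implication means it holds under every valuation, in particular under $\rho_0$. Hence $\vDash_{\xM,\rho_0} \ECO{\vec{z}}{\pi\gamma}$, and by the semantics of $\land$ we get $\vDash_{\xM,\rho_0} (\ECO{\vec{x}}{\varphi}) \land (\ECO{\vec{z}}{\pi\gamma})$, which witnesses satisfiability.

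The only point requiring care, and the nearest thing to an obstacle, is the semantics of the combined formula. Both conjuncts are existential constraints, each interpreted under $\rho_0$ by modifying $\rho_0$ only on its own bound variables ($\vec{x}$, respectively $\vec{z}$). The conjunction should therefore be read with each existential scoped separately. Under that reading no clash between $\vec{x}$ and $\vec{z}$ arises, and none is needed, since the witnesses for the two existentials are chosen independently. I will state this reading explicitly. I will also note that, because $\Var(\rho)\cap\Var(s,\varphi)=\varnothing$, the variables of $\vec{z}$ are disjoint from those of $\varphi$, so even a prenex reading would give the same result.
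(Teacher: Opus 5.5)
Your proposal is correct and matches the paper's proof: conditions 1--3 carry over unchanged, and any valuation witnessing satisfiability of $\ECO{\vec{x}}{\varphi}$ also satisfies $\ECO{\vec{z}}{\pi\gamma}$ by the assumed validity, so it witnesses the conjunction. Your closing aside on the prenex reading is not needed; it would also require that no variable of $\vec{x}$ occurs in $\pi\gamma$, which holds because $\SET{\vec{x}}\cap\Var(s)=\varnothing$ and $\SET{\vec{x}}\subseteq\Var(\varphi)$ is disjoint from $\Var(\rho)$.
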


If a satisfiable constrained term contains a partial $\rho$-redex then
partial constrained reduction can be performed.

\begin{definition}[{Partial Constrained Reduction}]
    Let $\CTerm{X}{s}{\vec{x}}{\varphi}$ be a satisfiable existentially constrained term
    and suppose that $\rho\colon \CRu{Z}{\ell}{r}{\pi}$ is a left-linear constrained rule
    satisfying $\Var(\rho) \cap \Var(s,\varphi) = \varnothing$.
    Suppose $\CTerm{X}{s}{\vec{x}}{\varphi}$ has a partial $\rho$-redex 
    at position $p \in \Pos(s)$ using substitution $\gamma$.
    Then we have a \emph{partial rewrite step}
    $\CTerm{X}{s}{\vec{x}}{\varphi} \leadsto_\rho \CTerm{Y}{t}{\vec{y}}{\psi}$
    (or $\CTerm{X}{s}{\vec{x}}{\varphi} \leadsto^p_{\rho,\gamma} \CTerm{Y}{t}{\vec{y}}{\psi}$
    with explicit $p$ and $\gamma$)
    where
\textup{\Bfnum{1.}}
        $t = s[r\gamma]_p$,
\textup{\Bfnum{2.}}
        $\psi = \varphi \land \pi\gamma$,
\textup{\Bfnum{3.}}
        $\SET{\vec{y}} = \Var(\psi) \setminus \Var(t)$, and
\textup{\Bfnum{4.}}
        $Y = \ExVar(\rho) \cup (X \cap \Var(t))$.
For convenience we write 
$\CTerm{X}{s}{\vec{x}}{\varphi} \leadsto_\rho \CTerm{Y}{t}{\vec{y}}{\psi}$
if 
$\CTerm{X}{s}{\vec{x}}{\varphi} \leadsto_{\rho'} \CTerm{Y}{t}{\vec{y}}{\psi}$
for a renamed variant of $\rho$.
For an LCTRS $\xR$,
we have 
$\CTerm{X}{s}{\vec{x}}{\varphi} \leadsto_\xR \CTerm{Y}{t}{\vec{y}}{\psi}$
if 
$\CTerm{X}{s}{\vec{x}}{\varphi} \leadsto_\rho \CTerm{Y}{t}{\vec{y}}{\psi}$
for some $\rho \in \xR$.
The transitive and reflexive closure
$\stackrel{*}{\leadsto}_\xR$
of $\leadsto_\xR$ is called the \emph{partial constrained reduction} (of $\xR$).
\end{definition}

The only difference to the most general rewrite step is that the redex must be
\emph{partial}. Consequently, for the same $\rho$-redex, the reducts obtained by a
most general rewrite step and by a partial rewrite step are identical.

It follows from the definitions of $\SET{\vec{y}}$ and $Y$ that
$\CTerm{Y}{t}{\vec{y}}{\psi}$ is again an existentially constrained term; this
can be shown similarly to the case of the most general rewrite step~\cite{TSNA25-PPDP}.

\begin{example}
\label{exp: weak redex and weak rewrite steps}
Let $\rho$ be a constrained rewrite rule
$\CRu{\SET{x}}{\mathsf{sum}(x)}{0}{0 \ge x}$
and $s$ be a constrained term 
$\CTerm{\SET{y}}{1 + \mathsf{sum}(y)}{w}{1 \le w \land w \le 5 \land y = w - 1}$.
Then $s$ has a partial $\rho$-redex at position $1$ using
the substitution $\gamma = \SET{ x \mapsto y }$.
Note that
$(\exists w.\,1 \le w \land w \le 5 \land y = w - 1)
\land (0 \ge y)$ is satisfiable
as 
$\vDash_{\xM,\rho} (\exists w.\,1 \le w \land w \le 5 \land y = w - 1)
\land (0 \ge y)$ for a valuation $\rho$ with $\rho(y) = 0$.
From this we obtain
\[
s \leadsto^p_{\rho,\gamma}
\CTerm{\varnothing}{1 + 0}{w,y}{1 \le w \land w \le 5 \land y = w - 1 \land 0 \ge y}.
\]
Note that $s$ is a normal form with respect to the most general rewrite step;
this is due to the fact that 
$\vDash_\xM (\exists w.\,1 \le w \land w \le 5 \land y = w - 1) \Rightarrow (0 \ge y)$
does not hold. An obvious witness is the valuation $\rho$ with $\rho(y) = 1$.
\end{example}

We now compare partial rewrite steps with the most general rewrite steps of
existentially constrained terms. 
First, 
it follows from \Cref{lem:redex is a weak redex}
that every most general rewrite
step is also a partial rewrite step.

\begin{restatable}{theorem}{TheoremMostGeneralRewriteStepImpliesWeakRewriteStep}
\label{thm:most general rewrite step implies weak rewrite step}
Let $\CTerm{X}{s}{\vec{x}}{\varphi}$, $\CTerm{Y}{t}{\vec{y}}{\psi}$ be existentially constrained terms.
If $\CTerm{X}{s}{\vec{x}}{\varphi} \to_\rho \CTerm{Y}{t}{\vec{y}}{\psi}$ 
then 
$\CTerm{X}{s}{\vec{x}}{\varphi} \leadsto_\rho \CTerm{Y}{t}{\vec{y}}{\psi}$.
\end{restatable}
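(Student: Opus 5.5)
The plan is to unfold the definition of the most general rewrite step and reuse \Cref{lem:redex is a weak redex}. Suppose $\CTerm{X}{s}{\vec{x}}{\varphi} \to_\rho \CTerm{Y}{t}{\vec{y}}{\psi}$. By definition there is a (possibly renamed) variant $\rho'\colon \CRu{Z}{\ell}{r}{\pi}$ of $\rho$ with $\Var(\rho') \cap \Var(s,\varphi) = \varnothing$, a position $p \in \Pos(s)$ and a substitution $\gamma$ such that $\CTerm{X}{s}{\vec{x}}{\varphi}$ has a $\rho'$-redex at $p$ using $\gamma$. The step yields $t = s[r\gamma]_p$, $\psi = \varphi \land \pi\gamma$, $\SET{\vec{y}} = \Var(\psi)\setminus\Var(t)$ and $Y = \ExVar(\rho') \cup (X \cap \Var(t))$. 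The source term is satisfiable, because this is a standing hypothesis of the most general rewrite step, and $\rho'$ is left-linear.

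Next, I invoke \Cref{lem:redex is a weak redex}. Its hypotheses are exactly the ones just listed, so the $\rho'$-redex at $p$ using $\gamma$ is also a partial $\rho'$-redex at $p$ using $\gamma$. For the record, the only nontrivial content is item~\Bfnum{4}: validity of $(\ECO{\vec{x}}{\varphi}) \Rightarrow (\ECO{\vec{z}}{\pi\gamma})$ together with satisfiability of $\ECO{\vec{x}}{\varphi}$ gives a valuation satisfying both conjuncts. Since the lemma is available, I just cite it.

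Finally, the definition of partial rewrite step from a partial $\rho'$-redex at $p$ using $\gamma$ produces the constrained term determined by the same four clauses. These are $t = s[r\gamma]_p$, $\psi=\varphi\land\pi\gamma$, the same $\vec{y}$, and the same $Y$. Hence $\CTerm{X}{s}{\vec{x}}{\varphi} \leadsto^p_{\rho',\gamma} \CTerm{Y}{t}{\vec{y}}{\psi}$. Because $\leadsto_\rho$ is defined to allow renamed variants, this is $\CTerm{X}{s}{\vec{x}}{\varphi} \leadsto_\rho \CTerm{Y}{t}{\vec{y}}{\psi}$.

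There is no real obstacle. The only point to check carefully is bookkeeping: the renaming conventions for $\to_\rho$ and $\leadsto_\rho$ must agree, and the reduct must be literally identical rather than merely equivalent. Both hold because the reduct clauses in the two definitions are verbatim the same.
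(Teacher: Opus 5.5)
Your proof is correct and follows the paper's own argument: unfold the most general step to obtain a $\rho$-redex at $p$ using $\gamma$, apply \Cref{lem:redex is a weak redex} to get a partial $\rho$-redex, and note that the reduct clauses in the two definitions are identical. Your extra care about renamed variants and about the reduct being literally the same (not merely equivalent) is sound bookkeeping that the paper leaves implicit.
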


As we demonstrated in \Cref{exp: weak redex and weak rewrite steps}, partial
rewrite steps may not be most general rewrite steps. Thus, the converse 
of \Cref{thm:most general rewrite step implies weak rewrite step}
does not hold in general. However, 
one can show the following characterization of the 
partial rewrite steps.

\begin{restatable}{lemma}{LemmaFromWeakRewriteStepToMostGeneralStep}
\label{lem:from weak rewrite step to most general step}
Let $\CTerm{X}{s}{\vec{x}}{\varphi}$, $\CTerm{Y}{t}{\vec{y}}{\psi}$ be existentially constrained terms.
Suppose that $\rho\colon \CRu{Z}{\ell}{r}{\pi}$ is a left-linear constrained rule
satisfying $\Var(\rho) \cap \Var(s,\varphi) = \varnothing$.
If $\CTerm{X}{s}{\vec{x}}{\varphi} \leadsto_{\rho,\gamma} \CTerm{Y}{t}{\vec{y}}{\psi}$ 
then 
$\CTerm{X}{s}{\vec{x},\vec{z}}{\varphi \land \pi\gamma} \to_\rho 
\cdot \sim \CTerm{Y}{t}{\vec{y}}{\psi}$,
where 
$\SET{\vec{z}} = \Var(\pi) \setminus \Var(\ell)$.
\end{restatable}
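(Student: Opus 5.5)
We strengthen the constraint of the source term so that the partial redex becomes a genuine $\rho$-redex, perform the most general step, and then check that the result is equivalent to the given partial reduct. Let $s' = \CTerm{X}{s}{\vec{x},\vec{z}}{\varphi \land \pi\gamma}$. We first check that $s'$ is a satisfiable existentially constrained term.
\begin{itemize}
\item The bound variables $\SET{\vec{x},\vec{z}}$ lie in $\Var(\varphi\land\pi\gamma)$, since $\SET{\vec{z}} \subseteq \Var(\pi)\setminus\Var(\ell)$ and $\gamma$ does not touch $\vec{z}$.
\item The free variables are still contained in $X$: those of $\varphi$ are in $X$ by assumption, and those of $\pi\gamma$ coming from $\Var(\pi)\cap\Var(\ell)$ are mapped by $\gamma$ into $\Val \cup X$ by condition \Bfnum{3}, because $\Var(\pi)\subseteq Z$.
\item The variables $\vec{z}$ are disjoint from $\Var(s)$ because $\Var(\rho)\cap\Var(s,\varphi)=\varnothing$.
\item Satisfiability of $\ECO{\vec{x},\vec{z}}{\varphi\land\pi\gamma}$ is exactly condition \Bfnum{4} of \Cref{def:weak-rho-redex}. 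Here we use that $\vec{x}$ does not occur in $\pi\gamma$ and $\vec{z}$ does not occur in $\varphi$, so the conjunction of the two existentials is equivalent to the single existential.
\end{itemize}

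Next we show that $s'$ has a (most general) $\rho$-redex at the same position $p$ using $\gamma$. Conditions \Bfnum{1}--\Bfnum{3} are unchanged. Condition \Bfnum{4} requires the validity of
\[
(\ECO{\vec{x},\vec{z}}{\varphi\land\pi\gamma}) \Rightarrow (\ECO{\vec{z}}{\pi\gamma}),
\]
which is trivial: any witness for the antecedent yields a witness for $\vec{z}$.

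The most general step therefore gives $s' \to^p_{\rho,\gamma} \CTerm{Y'}{t}{\vec{y}'}{\psi'}$, where:
\begin{itemize}
\item $t = s[r\gamma]_p$, the same term as in the partial step;
\item $\psi' = \varphi\land\pi\gamma\land\pi\gamma$;
\item $\SET{\vec{y}'} = \Var(\psi')\setminus\Var(t) = \SET{\vec{y}}$;
\item $Y' = \ExVar(\rho)\cup(X\cap\Var(t)) = Y$.
\end{itemize}
Thus the result differs from $\CTerm{Y}{t}{\vec{y}}{\psi}$ only in that the conjunct $\pi\gamma$ is duplicated. (If the most general step applied to an existentially constrained term is understood as conjoining $\pi\gamma$ to the matrix $\varphi\land\pi\gamma$, this duplication is exactly what appears.)

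It remains to show $\CTerm{Y}{t}{\vec{y}}{\varphi\land\pi\gamma\land\pi\gamma} \sim \CTerm{Y}{t}{\vec{y}}{\varphi\land\pi\gamma}$. Both terms have the same $Y$, the same term $t$ and the same bound variables, and their constraints are logically equivalent. Hence for every $Y$-valued $\sigma$, we have $\sigma$ respects one existential constraint iff it respects the other. Taking $\gamma=\sigma$ in the definition of subsumption gives both directions.

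The main obstacle is the careful bookkeeping of the bound variables in the first step. We must verify that the existential quantification over $\vec{x},\vec{z}$ is well formed, and that the satisfiability formula in condition \Bfnum{4} of \Cref{def:weak-rho-redex} coincides with satisfiability of the merged existential. Both rely on the variable-disjointness assumption $\Var(\rho)\cap\Var(s,\varphi)=\varnothing$.
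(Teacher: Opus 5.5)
Your preparation of $\CTerm{X}{s}{\vec{x},\vec{z}}{\varphi\land\pi\gamma}$ is correct and is exactly what the paper does: well-formedness, satisfiability, and the triviality of condition \Bfnum{4} all check out. The gap is the step $\CTerm{X}{s}{\vec{x},\vec{z}}{\varphi\land\pi\gamma}\to^p_{\rho,\gamma}\cdots$ using $\rho$ itself.

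A most general step requires $\Var(\rho)\cap\Var(s,\varphi\land\pi\gamma)=\varnothing$. This intersection is precisely $\SET{\vec{z}}$, because you have just moved $\SET{\vec{z}}\subseteq\Var(\pi)$ into the constraint of the source term. So whenever $\pi$ has a variable not in $\ell$, $\rho$ is not applicable to your strengthened term; an example is the rule $\mathsf{f}(x')\to x'+y'~[y'\ge x']$ of \Cref{exp:most general rewrite steps}. The disjointness you invoke in your last paragraph holds for the original term, but the strengthening destroys it. The paper's proof addresses exactly this point. It passes to a renamed variant $\rho'=\rho\delta$ with $\Var(\rho')\cap\Var(s,\varphi\land\pi\gamma)=\varnothing$ and uses the matcher $\gamma'=\gamma\circ\delta^{-1}$.

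After renaming, your closing equivalence argument no longer applies verbatim. With $\Dom(\gamma')=\Var(\ell')$ you still get $\ell'\gamma'=\ell\gamma$. However, $r'\gamma'$ and $\pi'\gamma'$ are $r\gamma$ and $\pi\gamma$ with each variable of $\vec{z}$ and of $\ExVar(\rho)$ replaced by its fresh copy under $\delta$. Hence the reduct has:
\begin{itemize}
\item as term part, $t$ with its extra variables renamed;
\item as logical variables, $\delta(\ExVar(\rho))\cup(X\cap\Var(t))$;
\item as constraint, $\varphi\land\pi\gamma\land\pi'\gamma'$, in which all original variables of $\vec{z}$ are now bound, including extra variables that are free in $\CTerm{Y}{t}{\vec{y}}{\psi}$.
\end{itemize}
This is not ``same $Y$, same $t$, duplicated conjunct'', so you need one more argument, for instance via \Cref{lem:subsumption-by-instances}. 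The conjunct $\pi\gamma$ with $\vec{z}$ bound is implied by $\pi'\gamma'$: take the values of $\delta(\vec{z})$ as witnesses for $\vec{z}$. So the reduct's existential constraint is equivalent to the $\delta$-renamed copy of $\ECO{\vec{y}}{\psi}$. Renaming the variables of a constrained term does not change its set of instances, which gives the equivalence.

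The paper's write-up reaches your duplicated form by asserting $r'\gamma'=r\gamma$ and $\pi'\gamma'=\pi\gamma$, which tacitly lets $\gamma'$ map the renamed non-left variables back. Even so, the renaming step itself is required by the definition of $\to_\rho$, and your proof omits it.
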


From \Cref{lem:from weak rewrite step to most general step},
it follows that every partial rewrite step
is simulated by a most general rewrite step
albeit being equipped with subsumption and equivalence before and after the step.

\begin{theorem}
Let $\CTerm{X}{s}{\vec{x}}{\varphi}$, $\CTerm{Y}{t}{\vec{y}}{\psi}$ be existentially constrained terms.
If $\CTerm{X}{s}{\vec{x}}{\varphi} \leadsto_\rho \CTerm{Y}{t}{\vec{y}}{\psi}$ 
then 
$\CTerm{X}{s}{\vec{x}}{\varphi} \supsetsim \cdot \to_\rho \cdot  \sim \CTerm{Y}{t}{\vec{y}}{\psi}$.
\end{theorem}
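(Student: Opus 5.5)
The plan is to combine \Cref{lem:from weak rewrite step to most general step} with a subsumption statement that is proved directly from the definition of $\subsetsim$. Assume $\CTerm{X}{s}{\vec{x}}{\varphi} \leadsto_{\rho,\gamma} \CTerm{Y}{t}{\vec{y}}{\psi}$ for a suitably renamed variant of $\rho\colon \CRu{Z}{\ell}{r}{\pi}$ with $\Var(\rho) \cap \Var(s,\varphi) = \varnothing$, and put $\SET{\vec{z}} = \Var(\pi)\setminus\Var(\ell)$.

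By \Cref{lem:from weak rewrite step to most general step} we obtain
\[
\CTerm{X}{s}{\vec{x},\vec{z}}{\varphi \land \pi\gamma} \to_\rho \cdot \sim \CTerm{Y}{t}{\vec{y}}{\psi}.
\]
It therefore remains to show
\[
\CTerm{X}{s}{\vec{x},\vec{z}}{\varphi \land \pi\gamma} \subsetsim \CTerm{X}{s}{\vec{x}}{\varphi},
\]
which is the same as $\CTerm{X}{s}{\vec{x}}{\varphi} \supsetsim \CTerm{X}{s}{\vec{x},\vec{z}}{\varphi\land\pi\gamma}$.

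The first task is to check that the left-hand side is a well-formed existentially constrained term. The variables of $\vec{z}$ are fresh, so they are disjoint from $\Var(s)$. Next, $\FVar(\ECO{\vec{x},\vec{z}}{\varphi\land\pi\gamma}) \subseteq X$ must hold. The free variables of $\varphi$ lie in $X$ by assumption. The remaining free variables of $\pi\gamma$ are of the form $\gamma(x)$ with $x \in \Var(\pi)\cap\Var(\ell)\subseteq Z$. By condition~3 of a partial redex, each such $\gamma(x)$ is either a value or a variable in $X$. Presumably the lemma already presupposes this well-formedness, so this step can largely be cited.

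For the subsumption itself, take an $X$-valued $\sigma$ with $\sigma \vDash \ECO{\vec{x},\vec{z}}{\varphi\land\pi\gamma}$. Then some valuation of the bound variables $\vec{x},\vec{z}$ makes $(\varphi\land\pi\gamma)\sigma$ true, and in particular it makes $\varphi\sigma$ true. Since $\vec{z}$ does not occur in $\varphi$, this gives $\sigma \vDash \ECO{\vec{x}}{\varphi}$, so we may choose the witness $\sigma$ itself, with $s\sigma = s\sigma$.

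The main obstacle is the bookkeeping of variable conditions. The substitution $\sigma$ must be $\Var(\ECO{\vec{x}}{\varphi})$-valued in the appropriate sense for free variables. Free variables of $\pi\gamma$ that are not free in $\varphi$ still lie in $X$, so $X$-valuedness covers them. Bound-variable renaming should be harmless.

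Finally, the two relations compose into the required chain:
\[
\CTerm{X}{s}{\vec{x}}{\varphi} \supsetsim \CTerm{X}{s}{\vec{x},\vec{z}}{\varphi\land\pi\gamma} \to_\rho \cdot \sim \CTerm{Y}{t}{\vec{y}}{\psi}.
\]
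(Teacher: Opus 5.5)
Your proposal is correct and takes essentially the same route as the paper, which obtains this theorem directly from \Cref{lem:from weak rewrite step to most general step}. The well-formedness of $\CTerm{X}{s}{\vec{x},\vec{z}}{\varphi \land \pi\gamma}$ and the subsumption $\CTerm{X}{s}{\vec{x},\vec{z}}{\varphi \land \pi\gamma} \subsetsim \CTerm{X}{s}{\vec{x}}{\varphi}$ are already established inside that lemma's proof, the latter via the prenex equivalence of $\ECO{\vec{x},\vec{z}}{\varphi \land \pi\gamma}$ with $(\ECO{\vec{x}}{\varphi}) \land (\ECO{\vec{z}}{\pi\gamma})$; you re-derive both, checking the subsumption directly from the definition with $\sigma$ itself as the witness.
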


\section{Interpretations of Partial and Most General Constrained Reductions}
\label{sec:Interpretations of Weak and Most General Reductions}

It is natural to interpret an (existentially) constrained term as the set of instances of the term
part that satisfy the constrained part. In this section, we provide
interpretations of the most general rewrite steps and partial rewrite steps with
respect to this interpretation of existentially constrained terms. This offers a more precise
characterization of each type of rewrite step, allowing the two notions of
constrained rewriting to be more clearly distinguished.

The idea that an existentially constrained term represents a set of terms
(cf.~\cite[Section~2.3]{FKN17tocl}) is formalized in our setting as follows.

\begin{definition}[Interpretation of Existentially Constrained Terms]
Let $\CTerm{X}{s}{\vec{x}}{\varphi}$ be an existentially constrained term.
A term $t$ is said to be an \emph{instance} of $\CTerm{X}{s}{\vec{x}}{\varphi}$ if 
there exists an $X$-valued substitution $\sigma$ such that 
$\sigma \vDash_\xM \ECO{\vec{x}}{\varphi}$
and $t=s\sigma$.
We denote the set of all instances of $\CTerm{X}{s}{\vec{x}}{\varphi}$ by
$\AllInst{\CTerm{X}{s}{\vec{x}}{\varphi}}$, i.e.,
$
\AllInst{\CTerm{X}{s}{\vec{x}}{\varphi}} 
=
\SET{s\sigma \mid \sigma(X) \subseteq \Val, \sigma \vDash_\xM \ECO{\vec{x}}{\varphi}}
$.
%
\end{definition}

\begin{example}
\label{ex:standard interpretation}
We have
$\AllInst{\CTerm{\SET{x}}{\mathsf{f}(x,z)}{y}{x = y \times 2}}
= \SET{ \mathsf{f}(n,w) \mid n \in \mathbb{Z}/2, w \in \xT(\xF,\xV) }$;
thus, e.g., 
$\mathsf{f}(0,\mathsf{f}(x,y)) 
\in \AllInst{\CTerm{\SET{x}}{\mathsf{f}(x,z)}{y}{x = y \times 2}}$.
\end{example}

The following are some immediate properties of the interpretation.

\begin{lemma}
\label{lem:immediate properties of interpretation}
Let $\CTerm{X}{s}{\vec{x}}{\varphi}$ be an existentially constrained term.
Then, 
\begin{enumerate}
    \item $\CTerm{X}{s}{\vec{x}}{\varphi}$ is satisfiable iff
$\AllInst{\CTerm{X}{s}{\vec{x}}{\varphi}} \ne \varnothing$, and
    \item the set $\AllInst{\CTerm{X}{s}{\vec{x}}{\varphi}}$
is closed under substitutions.
\end{enumerate}
\end{lemma}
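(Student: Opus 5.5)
Both items follow directly by unfolding the definition of $\AllInst{\cdot}$ and of $\sigma \vDash_\xM \ECO{\vec{x}}{\varphi}$. I will treat them separately.

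For item 1, from left to right: satisfiability of $\ECO{\vec{x}}{\varphi}$ gives a valuation $\rho$ with $\vDash_{\xM,\rho} \ECO{\vec{x}}{\varphi}$. I define a substitution $\sigma$ by $\sigma(x) = \rho(x)$, read as the corresponding value constant, for $x \in X$, and $\sigma(y) = y$ otherwise. This is possible because every element of $\xI(\tau)$ is a value. Since $\FVar(\ECO{\vec{x}}{\varphi}) \subseteq X$, $\sigma$ is valued on all free variables of the constraint. The closed formula $\ECO{\vec{x}}{\varphi\sigma}$ then has the same truth value under every valuation as $\ECO{\vec{x}}{\varphi}$ under $\rho$, because the bound variables are untouched. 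Hence $\sigma \vDash_\xM \ECO{\vec{x}}{\varphi}$, and $s\sigma \in \AllInst{\CTerm{X}{s}{\vec{x}}{\varphi}}$.

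From right to left: given $s\sigma$ with $\sigma$ $X$-valued and $\sigma \vDash_\xM \ECO{\vec{x}}{\varphi}$, I take any valuation $\rho$ agreeing with $\sigma$ on $X$. This yields $\vDash_{\xM,\rho} \ECO{\vec{x}}{\varphi}$, so the constraint is satisfiable. The only point needing care is the substitution-versus-valuation correspondence, i.e.\ that $\vDash_\xM \chi\sigma$ iff $\vDash_{\xM,\rho}\chi$ when $\sigma$ maps the free variables of $\chi$ to values and $\rho$ agrees with $\sigma$ there. This is standard, and I take it as a routine semantic fact.

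For item 2, let $t = s\sigma$ be an instance and $\theta$ an arbitrary substitution; I must show that $t\theta$ is an instance. I take $\sigma' = \sigma\theta$, the composition. Since $\sigma(X) \subseteq \Val$ and values are ground constants, $\sigma'(x) = \sigma(x)$ for every $x \in X$, so $\sigma'$ is again $X$-valued. Moreover $\varphi\sigma'$ and $\varphi\sigma$ coincide on the free variables, all of which lie in $X$. One subtlety is that $\theta$ could touch the bound variables $\vec{x}$. I avoid this by reading $(\ECO{\vec{x}}{\varphi})\sigma$ as a substitution that leaves bound variables untouched (capture-avoiding), so $\sigma' \vDash_\xM \ECO{\vec{x}}{\varphi}$ still holds. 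Thus $t\theta = s\sigma' \in \AllInst{\CTerm{X}{s}{\vec{x}}{\varphi}}$.

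The main obstacle, mild as it is, is this bookkeeping about bound variables. It reduces to the fact that $\sigma \vDash_\xM \ECO{\vec{x}}{\varphi}$ depends only on $\sigma$ restricted to $\FVar(\ECO{\vec{x}}{\varphi}) \subseteq X$. I will state and use that fact explicitly.
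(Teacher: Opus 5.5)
Your proof is correct and is the direct unfolding of the definitions that the paper has in mind; the paper states this lemma without proof as immediate, and your two observations cover what is needed, namely that $\sigma$ leaves the bound variables untouched (since $X \cap \{\vec{x}\} = \varnothing$, which follows from $X \subseteq \Var(s)$ and $\BVar(\ECO{\vec{x}}{\varphi}) \cap \Var(s) = \varnothing$) and that $\theta$ fixes the ground values $\sigma(X)$. The only tacit assumption, which the paper shares, is that the logical variables in $X$ are of theory sort, so that in the left-to-right direction of item~1 you can read $\rho(x)$ as a value constant.
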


Subsumption and equivalence of constrained terms are characterized in terms
of their interpretation.

\begin{restatable}{lemma}{LemmaSubsumptionByInstances}
\label{lem:subsumption-by-instances}
Let $\CTerm{X}{s}{\vec{x}}{\varphi}$, $\CTerm{Y}{t}{\vec{y}}{\psi}$ be existentially constrained terms.
Then,
\begin{enumerate}
\item $\CTerm{X}{s}{\vec{x}}{\varphi} \subsetsim \CTerm{Y}{t}{\vec{y}}{\psi}$ 
iff $\AllInst{\CTerm{X}{s}{\vec{x}}{\varphi}} \subseteq \AllInst{\CTerm{Y}{t}{\vec{y}}{\psi}}$, and
\item $\CTerm{X}{s}{\vec{x}}{\varphi} \sim \CTerm{Y}{t}{\vec{y}}{\psi}$ 
iff $\AllInst{\CTerm{X}{s}{\vec{x}}{\varphi}} = \AllInst{\CTerm{Y}{t}{\vec{y}}{\psi}}$.
\end{enumerate}
\end{restatable}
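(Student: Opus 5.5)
Both parts follow by unfolding the definitions of subsumption and of $\AllInst{\cdot}$; part~2 then comes from part~1.

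For the ``only if'' direction of part~1, assume $\CTerm{X}{s}{\vec{x}}{\varphi} \subsetsim \CTerm{Y}{t}{\vec{y}}{\psi}$ and take $u \in \AllInst{\CTerm{X}{s}{\vec{x}}{\varphi}}$. Then $u = s\sigma$ for some $X$-valued $\sigma$ with $\sigma \vDash_\xM \ECO{\vec{x}}{\varphi}$. The definition of subsumption supplies a $Y$-valued $\gamma$ with $\gamma \vDash_\xM \ECO{\vec{y}}{\psi}$ and $s\sigma = t\gamma$. Hence $u = t\gamma \in \AllInst{\CTerm{Y}{t}{\vec{y}}{\psi}}$.

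For the ``if'' direction, assume the inclusion and take an $X$-valued $\sigma$ with $\sigma \vDash_\xM \ECO{\vec{x}}{\varphi}$. By definition $s\sigma \in \AllInst{\CTerm{X}{s}{\vec{x}}{\varphi}}$, so $s\sigma$ lies in $\AllInst{\CTerm{Y}{t}{\vec{y}}{\psi}}$. Thus there is a $Y$-valued $\gamma$ with $\gamma \vDash_\xM \ECO{\vec{y}}{\psi}$ and $s\sigma = t\gamma$, which is exactly the subsumption condition.

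The only point needing care is that both definitions quantify over the same objects. Each uses substitutions valued on the logical variables and respecting the existential constraint, with no further restriction; in particular, values may be assigned to $\FVar(\ECO{\vec{x}}{\varphi}) \subseteq X$. I would check this explicitly, so that no renaming or restriction of domains is needed. I expect no real obstacle here.

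Part~2 follows from part~1 applied in both directions: $\sim$ is defined as mutual $\subsetsim$, and set equality is mutual inclusion.
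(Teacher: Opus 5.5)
Your proposal is correct and matches the paper's proof essentially step for step: both directions of part~1 are direct unfoldings of the definitions of $\subsetsim$ and $\AllInst{\cdot}$, which quantify over exactly the same $X$-valued substitutions respecting $\ECO{\vec{x}}{\varphi}$. Part~2 then follows from part~1, since $\sim$ is mutual subsumption and set equality is mutual inclusion.
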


The following properties, which also directly follow from the definition
of subsumption and equivalence, are
immediate corollaries of \Cref{lem:subsumption-by-instances}:
\begin{itemize}
    \item 
    $\subsetsim$ forms a quasi-order,
    \item 
    $\sim$ is an equivalence relation,
    \item 
    all unsatisfiable existentially constrained terms are equivalent, and
    \item
    no satisfiable existentially constrained term is equivalent to any unsatisfiable existentially constrained term.
\end{itemize}

The interpretation of 
rewrite rules needs to synchronize instantiations of both sides of each rule.

\begin{definition}[Interpretation of Constrained Rules]
Let $\rho: \CRu{Z}{\ell}{r}{\pi}$ be a left-linear constrained rewrite rule.
We define the \emph{interpretation} $\AllInst{\rho}$ of $\rho$ as: 
$\AllInst{\rho} = 
\SET{ \langle \ell\sigma,r \sigma \rangle \mid
\Dom(\sigma) = \VDom(\sigma) = Z \cap \Var(\ell,r),
\sigma \vDash_\xM \ECO{\pvec{z}'}{\pi}
}$,
where $\SET{\pvec{z}'} = \Var(\pi) \setminus \Var(\ell,r)$.
For an LCTRS $\xR$,
we define
$\AllInst{\xR} = \bigcup_{\rho \in \xR} \AllInst{\rho}$.
\end{definition}

\begin{example}
\label{exp:intepretation of rewrite rules}
Let $\rho: \CRu{\SET{x}}{\mathsf{f}(x,y)}{y}{0 \ge x}$ be a constrained rewrite rule.
Then 
$\AllInst{\rho} =  \SET{ \mathsf{f}(n,y) \to y \mid 0 \ge n \in \mathbb{Z} }$.
\end{example}

We note 
that, unlike the interpretation of existentially constrained terms, the
non-logical variables in rewrite rules are not instantiated in the
interpretation of rewrite rules.
We define it this way because variable instantiation is built into the rewrite steps.

We continue with characterizations of most general rewrite steps
and partial rewrite steps in terms of their interpretation. We begin with a
characterization of partial rewrite steps.

\begin{restatable}{lemma}{LemmaCharacterizationOfWeakRewriteSteps}
\label{lem:Characterization of Weak Rewrite Steps}
Let $\CTerm{X}{s}{\vec{x}}{\varphi} \leadsto^p_\rho \CTerm{Y}{t}{\vec{y}}{\psi}$.
Then,
\begin{enumerate}
\item there exists $u \in \AllInst{\CTerm{X}{s}{\vec{x}}{\varphi}}$
such that 
$u \to^p_{\AllInst{\rho}} v$ for some $v$,
\item 
if $w \in \AllInst{\CTerm{Y}{t}{\vec{y}}{\psi}}$
then 
there exists $u \in \AllInst{\CTerm{X}{s}{\vec{x}}{\varphi}}$
such that 
$u \to^p_{\AllInst{\rho}} w$, and

\item 
for any $u,v$
such that $u \in \AllInst{\CTerm{X}{s}{\vec{x}}{\varphi}}$
and 
$u \to^p_{\AllInst{\rho}} v$,
we have 
$v \in \AllInst{\CTerm{Y}{t}{\vec{y}}{\psi}}$.
\end{enumerate}
\end{restatable}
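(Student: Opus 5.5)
We fix the partial step $\CTerm{X}{s}{\vec{x}}{\varphi} \leadsto^p_{\rho,\gamma} \CTerm{Y}{t}{\vec{y}}{\psi}$ with $\rho\colon \CRu{Z}{\ell}{r}{\pi}$, so that $s|_p=\ell\gamma$, $t=s[r\gamma]_p$, $\psi=\varphi\land\pi\gamma$, $Y=\ExVar(\rho)\cup(X\cap\Var(t))$, and $\SET{\vec{z}}=\Var(\pi)\setminus\Var(\ell)$. All three items are proved by translating valuations into value substitutions. The central tool is the following correspondence. If $\theta$ is a substitution that is valued on $\Var(\psi)\cup Y$ and satisfies $\vDash_\xM \varphi\theta\land\pi\gamma\theta$, then we define $\sigma=\theta|_{\Var(\ell,r)}\circ\gamma$ on the rule variables and $\sigma_s=\theta$ on the variables of $s$. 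This gives $u=s\theta\in\AllInst{\CTerm{X}{s}{\vec{x}}{\varphi}}$ and $\langle \ell\sigma, r\sigma\rangle\in\AllInst{\rho}$ with $u|_p=\ell\sigma$. For this we need that $\sigma$ is valued exactly on $Z\cap\Var(\ell,r)$. This holds on $Z\cap\Var(\ell)$ by condition \Bfnum{3} of the redex, since $\gamma(x)\in\Val\cup X$ and $\theta$ is $X$-valued. On $Z\cap(\Var(r)\setminus\Var(\ell))\subseteq\ExVar(\rho)\subseteq Y$ it holds because $\theta$ is $Y$-valued. Non-logical variables of $\ell,r$ are mapped by $\gamma$ to non-logical subterms of $s$. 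The rule interpretation leaves variables outside $Z$ uninstantiated, so we must take $\sigma(x)=x$ there and let $\theta$ act afterwards. Left-linearity of $\ell$ makes this consistent: the step $u\to^p_{\AllInst\rho} v$ uses the instance $\ell\sigma'$ with $\sigma'=\theta\circ\gamma$, and the rule-interpretation pair is instantiated by the outer substitution in the rewrite step.

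\textbf{Item 1.} By \Bfnum{4} of \Cref{def:weak-rho-redex} we pick a valuation $\rho_0$ satisfying $(\ECO{\vec{x}}{\varphi})\land(\ECO{\vec{z}}{\pi\gamma})$. Since $\BVar$ of the first conjunct is disjoint from $\Var(s)$ and $\vec{z}$ are rule variables, both conjuncts can be witnessed simultaneously by one valuation extending $\rho_0$ on $\vec{x},\vec{z}$. We turn it into a value substitution $\theta$ on $X\cup\Var(\psi)$ and apply the correspondence with $v=u[r\sigma']_p$.

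\textbf{Item 3.} Let $u=s\delta$ with $\delta$ $X$-valued and $\delta\vDash\ECO{\vec{x}}{\varphi}$, and let $u|_p=\ell\sigma\tau$ with $\langle\ell\sigma,r\sigma\rangle\in\AllInst\rho$ and $v=u[r\sigma\tau]_p$. Because $u|_p=\ell\gamma\delta$ and $\ell$ is linear, for $x\in\Var(\ell)$ we get $x\sigma\tau=x\gamma\delta$. We then define $\theta$ to extend $\delta$ by sending each $y\in\Var(r)\setminus\Var(\ell)$ to $y\sigma\tau$, which is $y\sigma$ and is a value when $y\in Z$. We also send the witnesses for $\vec{z}'$ in $\sigma\vDash\ECO{\pvec{z}'}{\pi}$ to the corresponding values. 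Then $t\theta=s[r\gamma]_p\theta=v$ and $\theta\vDash\ECO{\vec{y}}{\psi}$, and $\theta$ is $Y$-valued. Hence $v\in\AllInst{\CTerm{Y}{t}{\vec{y}}{\psi}}$.

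\textbf{Item 2.} Let $w=t\theta$ with $\theta$ $Y$-valued and $\theta\vDash\ECO{\vec{y}}{\psi}$. We extend $\theta$ to $\vec{y}$ by the witnessing values, so that $\vDash\varphi\theta\land\pi\gamma\theta$. Its restriction is $X$-valued: variables of $X\cap\Var(t)$ lie in $Y$, and variables of $X\setminus\Var(t)$ occur in $\varphi$ and are therefore instantiated by the witnesses. We set $u=s\theta$ and apply the correspondence, obtaining $u\to^p_{\AllInst\rho} s[r\gamma\theta]_p=t\theta=w$. The only point that needs care is that variables of $s|_p$ vanishing from $t$ may still be non-logical. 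Any such variable not in $\Var(\psi)$ can be instantiated arbitrarily, for instance left as itself.

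The main obstacle I expect is the bookkeeping in item 3, namely matching the rule instance $\sigma$ (with its outer substitution $\tau$) against $\gamma\delta$ and verifying that the extended $\theta$ is $Y$-valued. This relies on left-linearity and on $\ExVar(\rho)\subseteq Z$.
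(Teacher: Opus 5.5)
Your overall strategy matches the paper's: turn a satisfying valuation into a value substitution $\theta$, take the rule instance $(\theta\circ\gamma)|_{Z\cap\Var(\ell,r)}$, and extend by witnesses. Your item~3 is essentially the paper's argument. However, the domain bookkeeping, which you yourself call the crux, fails as written in item~2. There you claim that every variable of $X\setminus\Var(t)$ occurs in $\varphi$. Constrained terms only require $\FVar(\ECO{\vec{x}}{\varphi})\subseteq X\subseteq\Var(s)$, so a logical variable erased by the step need not occur in $\psi$ at all. For instance, take $\CTerm{\SET{x}}{\mathsf{f}(x)}{}{\mathsf{true}}$ and $\CRu{\SET{y}}{\mathsf{f}(y)}{\mathsf{a}}{\mathsf{true}}$. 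The reduct is $\CTerm{\varnothing}{\mathsf{a}}{}{\mathsf{true}\land\mathsf{true}}$, and $w=\mathsf{a}$ with $\theta$ the identity. Your $u=s\theta=\mathsf{f}(x)$ is then neither an instance of the source nor $\AllInst{\rho}$-reducible, since every rule of $\AllInst{\rho}$ has the form $\mathsf{f}(n)\R\mathsf{a}$ with $n\in\Val$.

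The same defect is built into your central correspondence. Being valued on $\Var(\psi)\cup Y$ gives neither $s\theta\in\AllInst{\CTerm{X}{s}{\vec{x}}{\varphi}}$ nor, via $\gamma(x)\in\Val\cup X$, valuedness of $\sigma$ on $Z\cap\Var(\ell)$; for both you need $\theta$ to be $X$-valued. The repair is what the paper does with $\delta''$: also send the variables of $X\setminus(\Var(t)\cup\Var(\psi))$ to arbitrary values of their sorts. Those in $\Var(\psi)$ have already received witnesses and must not be overwritten.

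Item~1 has the dual slip. Restricting the valuation to $X\cup\Var(\psi)$ discards $\ExVar(\rho)\setminus\Var(\pi)$. These variables lie in $Z\cap\Var(r)$ and must be valued for the pair to belong to $\AllInst{\rho}$. Keep them from the total valuation, as the paper's $\hat\delta$ does by restricting to $X\cup\Var(\pi\gamma)\cup\ExVar(\rho)$.

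You also never verify the second membership condition of $\AllInst{\rho}$, namely $\sigma\vDash\ECO{\pvec{z}'}{\pi}$ with $\SET{\pvec{z}'}=\Var(\pi)\setminus\Var(\ell,r)$. It follows from $\vDash\pi\gamma\theta$: $\sigma$ agrees with $\theta\circ\gamma$ on $\Var(\pi)\cap\Var(\ell,r)$, and $\gamma$ fixes $\pvec{z}'$, so $\theta(\pvec{z}')$ serve as witnesses.

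Two minor points. The restriction in your formula for $\sigma$ belongs outside the composition: $\theta|_{\Var(\ell,r)}$ acts trivially on the terms $\gamma(x)$ over $\Var(s)$. And left-linearity is not what yields $x\sigma\tau=x\gamma\delta$ from $\ell\sigma\tau=\ell\gamma\delta$; that holds for any $\ell$.
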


\Cref{lem:Characterization of Weak Rewrite Steps} immediately yields
the following characterization of
partial constrained reductions in terms of interpretation.

\begin{restatable}[Characterization by Interpretation for Partial Constrained Reduction]{theorem}{TheoremCharacterizationbyInterpreationForWeakReduction}
\label{thm:Characterization by Interpreation for Weak Reduction}
Let $\CTerm{X}{s}{\vec{x}}{\varphi} \stackrel{*}{\leadsto}_\xR  \CTerm{Y}{t}{\vec{y}}{\psi}$.
Then,  
both of the following statements hold:
\begin{enumerate}
\item
for each $v \in \AllInst{\CTerm{Y}{t}{\vec{y}}{\psi}}$
there exists $u \in \AllInst{\CTerm{X}{s}{\vec{x}}{\varphi}}$
such that $u \stackrel{*}{\to}_{\AllInst{\xR}} v$, and
\item
there exist $u \in \AllInst{\CTerm{X}{s}{\vec{x}}{\varphi}}$ 
and $v \in \AllInst{\CTerm{Y}{t}{\vec{y}}{\psi}}$
such that $u \stackrel{*}{\to}_{\AllInst{\xR}} v$.
\end{enumerate}
\end{restatable}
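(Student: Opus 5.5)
The plan is to argue by induction on the length $n$ of the reduction $\CTerm{X}{s}{\vec{x}}{\varphi} \stackrel{*}{\leadsto}_\xR \CTerm{Y}{t}{\vec{y}}{\psi}$, using \Cref{lem:Characterization of Weak Rewrite Steps} as the single-step engine. I would treat the two items separately, since item~1 uses part~2 of that lemma and item~2 uses parts~1 and~3.

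\emph{Item~1.} For $n=0$ the two constrained terms coincide, so each $v$ is its own witness with the empty reduction. For the induction step, write the reduction as $\CTerm{X}{s}{\vec{x}}{\varphi} \stackrel{*}{\leadsto}_\xR \CTerm{X'}{s'}{\vec{x}'}{\varphi'} \leadsto_\rho \CTerm{Y}{t}{\vec{y}}{\psi}$ with $\rho \in \xR$. Given $v \in \AllInst{\CTerm{Y}{t}{\vec{y}}{\psi}}$, part~2 of \Cref{lem:Characterization of Weak Rewrite Steps} gives $u' \in \AllInst{\CTerm{X'}{s'}{\vec{x}'}{\varphi'}}$ with $u' \to_{\AllInst{\rho}} v$. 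The induction hypothesis then gives $u \in \AllInst{\CTerm{X}{s}{\vec{x}}{\varphi}}$ with $u \stackrel{*}{\to}_{\AllInst{\xR}} u'$. Since $\AllInst{\rho} \subseteq \AllInst{\xR}$, we can append the last step to obtain $u \stackrel{*}{\to}_{\AllInst{\xR}} v$.

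\emph{Item~2.} Here the induction must run forwards from the start. I would prove the stronger auxiliary claim that every $u$ in the initial interpretation admitting an $\AllInst{\xR}$-reduction that follows the steps has a successor; but this fails, because a partial step need not apply to every instance. So the right approach is to derive item~2 from item~1 instead. It suffices to exhibit some $v \in \AllInst{\CTerm{Y}{t}{\vec{y}}{\psi}}$, after which item~1 supplies the required $u$.

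To obtain such a $v$, I need $\CTerm{Y}{t}{\vec{y}}{\psi}$ to be satisfiable, and then part~1 of \Cref{lem:immediate properties of interpretation} gives a nonempty interpretation. For $n=0$, satisfiability is immediate only if the initial term is satisfiable. For $n\ge1$, the final step is a partial step, whose definition requires a satisfiable source. Parts~1 and~3 of \Cref{lem:Characterization of Weak Rewrite Steps} then give $u'$ in the source interpretation with $u' \to v$, and $v$ lies in the target interpretation, so the target is nonempty.

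The main obstacle is the case $n=0$ with an unsatisfiable initial term, where item~2 would fail. I would resolve this by noting that the reductions under consideration start from satisfiable terms; this holds implicitly, since partial steps are only defined on satisfiable terms, and it is the intended reading for the reflexive case. I would state this assumption explicitly in the proof.
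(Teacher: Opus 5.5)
Your proof is correct and follows the paper's route. Item~1 goes by induction using part~2 of \Cref{lem:Characterization of Weak Rewrite Steps}, and item~2 uses parts~1 and~3 of that lemma. You are more careful than the paper's one-line proof in two ways: parts~1 and~3 alone cannot be chained across several steps, so you rightly combine them with item~1 via nonemptiness of the final interpretation, and you correctly observe that item~2 fails for the zero-step reduction from an unsatisfiable term, so a satisfiability assumption is genuinely needed there.
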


Note that 
by \Cref{thm:most general rewrite step implies weak rewrite step},
the properties that hold for partial rewrite steps
should hold also for most general rewrite steps.
Thus, 
the properties presented
in \Cref{thm:Characterization by Interpreation for Weak Reduction}
also hold for most general rewrite steps.

In fact, 
a property stronger than (2) holds for most general rewrite steps.
To show this, the following lemma is used.

\begin{restatable}{lemma}{LemmaCharacterizationOfMGRewriteSteps}
\label{lem:Characterization of MG Rewrite Steps}
Let $\CTerm{X}{s}{\vec{x}}{\varphi} \to^p_\rho \CTerm{Y}{t}{\vec{y}}{\psi}$.
For any 
$u \in \AllInst{\CTerm{X}{s}{\vec{x}}{\varphi}}$
there exists $v$ such that
$u \to^p_{\AllInst{\rho}} v$.
\end{restatable}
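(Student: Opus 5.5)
We argue directly from the definitions. Fix the redex data of the step: $\rho\colon \CRu{Z}{\ell}{r}{\pi}$ with $\Var(\rho)\cap\Var(s,\varphi)=\varnothing$, position $p$, and substitution $\gamma$ with $\Dom(\gamma)=\Var(\ell)$, $s|_p=\ell\gamma$, $\gamma(x)\in\Val\cup X$ for $x\in\Var(\ell)\cap Z$, and $\vDash_\xM (\ECO{\vec{x}}{\varphi})\Rightarrow(\ECO{\vec{z}}{\pi\gamma})$, where $\SET{\vec{z}}=\Var(\pi)\setminus\Var(\ell)$. Let $u\in\AllInst{\CTerm{X}{s}{\vec{x}}{\varphi}}$. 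Then $u=s\sigma$ for some $X$-valued $\sigma$ with $\sigma\vDash_\xM\ECO{\vec{x}}{\varphi}$. Since bound variables and rule variables are disjoint from $\Var(s)$, we may assume without loss of generality that $\Dom(\sigma)\subseteq\Var(s)$. We then have $u|_p=\ell\gamma\sigma$, and we aim to exhibit a pair $\langle \ell\delta, r\delta\rangle\in\AllInst{\rho}$ with $\ell\delta=\ell\gamma\sigma$. Setting $v=u[r\delta]_p$ then gives $u\to^p_{\AllInst{\rho}}v$; recall that a rewrite step with $\AllInst{\rho}$ may itself instantiate the non-$Z$ variables of the rule.

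\textbf{Key step: validity gives a witness.} Take any valuation $\xi$ that agrees with $\sigma$ on $X$ (the values $\sigma(x)$ for $x\in X$ are values, hence elements of the model). From $\sigma\vDash\ECO{\vec{x}}{\varphi}$ we get $\vDash_{\xM,\xi}\ECO{\vec{x}}{\varphi}$. Validity of the implication then yields $\vDash_{\xM,\xi}\ECO{\vec{z}}{\pi\gamma}$. So there are values $\vec{c}$ for $\vec{z}$ with $\vDash_{\xM,\xi[\vec{z}\mapsto\vec{c}]}\pi\gamma$.

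For the free variables of $\pi\gamma$, observe that $\Var(\pi)\cap\Var(\ell)\subseteq Z$, so each such $x$ has $\gamma(x)\in\Val\cup X$. Hence $\gamma\sigma$ maps these variables to values.

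\textbf{Constructing $\delta$.} Define $\delta$ on $Z\cap\Var(\ell,r)$ as follows:
\[
\delta(x)=
\begin{cases}
\gamma\sigma(x) & \text{for } x\in\Var(\ell)\cap Z,\\
c_i & \text{for } x=z_i\in\Var(r)\setminus\Var(\ell),\\
\text{an arbitrary value} & \text{for the remaining extra variables of } r.
\end{cases}
\]
The last case covers variables of $r$ outside $\Var(\ell)\cup\Var(\pi)$, which lie in $Z$ and need only be value-instantiated. This $\delta$ satisfies $\Dom(\delta)=\VDom(\delta)=Z\cap\Var(\ell,r)$.

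To show $\delta\vDash\ECO{\pvec{z}'}{\pi}$ with $\SET{\pvec{z}'}=\Var(\pi)\setminus\Var(\ell,r)$, take the remaining $c_i$ as witnesses for $\pvec{z}'$. Then $\pi$ under $\delta$ plus these witnesses coincides with $\pi\gamma$ under $\xi[\vec{z}\mapsto\vec{c}]$, so it holds.

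\textbf{Matching the left-hand side.} Extend $\delta$ to the non-$Z$ variables of $\ell$ by $\gamma\sigma$; such an extension is allowed as the instantiating substitution in a rewrite step with $\AllInst{\rho}$. This gives $\ell\delta=\ell\gamma\sigma=u|_p$, and the step $u\to^p_{\AllInst{\rho}}u[r\delta]_p$ follows.

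The main obstacle is this bookkeeping. We must separate the $Z$-variables of $\ell$ from the non-theory ones, confirm that $\gamma\sigma$ is value-valued exactly where the interpretation requires it, and split $\vec{z}$ into the parts occurring in $r$ and those bound in $\pvec{z}'$. None of this is deep, but the argument relies on left-linearity and the variable-disjointness hypotheses to keep the substitutions well defined.
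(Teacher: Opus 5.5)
Your proposal is correct and follows essentially the same route as the paper's proof. Both write $u=s\sigma$, apply the validity of the implication at a valuation agreeing with $\sigma$ on $X$ to get witnesses for $\vec{z}$, instantiate the rule by $\sigma\circ\gamma$ on $\Var(\ell)\cap Z$ and by those witnesses (or arbitrary values) on the extra variables of $r$, and verify the constraint by comparing $\pi$ under this substitution with $\pi\gamma$ under the extended valuation; as a minor aside, left-linearity is not actually used here, since $\gamma$ is supplied by the given step.
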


We now arrive at a characterization of most general constrained reductions
via interpretations.

\begin{restatable}[Characterization by Interpretation for Most General Constrained Reduction]{theorem}{TheoremCharacterizationByInterpreationForMostGeneralReduction}
\label{thm:Characterization by Interpreation for Most General Reduction}
Let $\CTerm{X}{s}{\vec{x}}{\varphi} \stackrel{*}{\to}_\xR  \CTerm{Y}{t}{\vec{y}}{\psi}$.
Then,
\begin{enumerate}
\item
for each $u \in \AllInst{\CTerm{X}{s}{\vec{x}}{\varphi}}$
there exists $v \in \AllInst{\CTerm{Y}{t}{\vec{y}}{\psi}}$
such that $u \stackrel{*}{\to}_{\AllInst{\xR}} v$, and
\item
for each $v \in \AllInst{\CTerm{Y}{t}{\vec{y}}{\psi}}$
there exists $u \in \AllInst{\CTerm{X}{s}{\vec{x}}{\varphi}}$
such that $u \stackrel{*}{\to}_{\AllInst{\xR}} v$.
\end{enumerate}
\end{restatable}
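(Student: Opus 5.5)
We proceed by induction on the length $n$ of the reduction $\CTerm{X}{s}{\vec{x}}{\varphi} \stackrel{*}{\to}_\xR \CTerm{Y}{t}{\vec{y}}{\psi}$, proving both statements at once. For $n=0$ the two constrained terms coincide, and both claims hold trivially with $v=u$ (resp.\ $u=v$) and the empty $\AllInst{\xR}$-reduction. For the induction step we split the reduction as $\CTerm{X}{s}{\vec{x}}{\varphi} \stackrel{*}{\to}_\xR \CTerm{X'}{s'}{\vec{x}'}{\varphi'} \to^p_\rho \CTerm{Y}{t}{\vec{y}}{\psi}$ with $\rho \in \xR$, taken as a suitably renamed variant. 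It then suffices to establish each statement for a single most general step, since the single-step properties compose along the reduction, as we now explain.

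For statement (1), take $u \in \AllInst{\CTerm{X}{s}{\vec{x}}{\varphi}}$. The induction hypothesis gives $u' \in \AllInst{\CTerm{X'}{s'}{\vec{x}'}{\varphi'}}$ with $u \stackrel{*}{\to}_{\AllInst{\xR}} u'$. By \Cref{lem:Characterization of MG Rewrite Steps} applied to the last step, there is $v$ with $u' \to^p_{\AllInst{\rho}} v$. Since the last step is a most general step, it is also a partial step by \Cref{thm:most general rewrite step implies weak rewrite step}. Hence \Cref{lem:Characterization of Weak Rewrite Steps}(3) yields $v \in \AllInst{\CTerm{Y}{t}{\vec{y}}{\psi}}$. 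Because $\AllInst{\rho} \subseteq \AllInst{\xR}$, we obtain $u \stackrel{*}{\to}_{\AllInst{\xR}} v$, as required.

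For statement (2), we note that it is exactly \Cref{thm:Characterization by Interpreation for Weak Reduction}(1). Indeed, by \Cref{thm:most general rewrite step implies weak rewrite step} every step of the most general reduction is a partial step, so $\CTerm{X}{s}{\vec{x}}{\varphi} \stackrel{*}{\leadsto}_\xR \CTerm{Y}{t}{\vec{y}}{\psi}$ holds, and that theorem applies directly. If one prefers to keep the argument inside the same induction, the step case is analogous to (1). Given $v \in \AllInst{\CTerm{Y}{t}{\vec{y}}{\psi}}$, \Cref{lem:Characterization of Weak Rewrite Steps}(2) provides $u' \in \AllInst{\CTerm{X'}{s'}{\vec{x}'}{\varphi'}}$ with $u' \to^p_{\AllInst{\rho}} v$. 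The induction hypothesis then provides the required $u$, and we conclude by transitivity.

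The argument has no real obstacle, since the substantive work is in \Cref{lem:Characterization of MG Rewrite Steps} and \Cref{lem:Characterization of Weak Rewrite Steps}. The one point needing care is the use of \Cref{lem:Characterization of Weak Rewrite Steps}(3) in statement (1). This requires the given $\AllInst{\rho}$-step $u' \to^p v$ to be performed with the same renamed variant of $\rho$ and at the same position $p$ as the constrained step. That is exactly what \Cref{lem:Characterization of MG Rewrite Steps} delivers, so we just record that the position and rule are matched. We also observe that $\AllInst{\rho}$ is invariant under renaming of $\rho$, so using a renamed variant causes no mismatch with $\AllInst{\xR}$.
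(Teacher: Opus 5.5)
Your proof is correct and follows the paper's argument. Part (1) comes from \Cref{lem:Characterization of MG Rewrite Steps}; you rightly make explicit two things the paper leaves implicit: the induction on the length of the reduction, and the use of \Cref{thm:most general rewrite step implies weak rewrite step} with \Cref{lem:Characterization of Weak Rewrite Steps}(3) to place $v$ in the reduct's interpretation. Part (2) comes directly from \Cref{thm:most general rewrite step implies weak rewrite step,thm:Characterization by Interpreation for Weak Reduction}, exactly as in the paper. One small precision: renaming $\rho$ leaves the rewrite relation $\to_{\AllInst{\rho}}$ invariant, not the set $\AllInst{\rho}$ itself, because non-logical variables stay uninstantiated in $\AllInst{\rho}$.
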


\begin{proof}
\Bfnum{1} follows from \Cref{lem:Characterization of MG Rewrite Steps},
and~\Bfnum{2} follows from 
\Cref{thm:most general rewrite step implies weak rewrite step,thm:Characterization by Interpreation for Weak Reduction}.
\end{proof}

Note that 
the property of \Cref{thm:Characterization by Interpreation for Most General Reduction}~\Bfnum{1}
together with consistency
implies 
the property of \Cref{thm:Characterization by Interpreation for Weak Reduction}~\Bfnum{2}.
On the other hand, 
the property of \Cref{thm:Characterization by Interpreation for Most General Reduction}~\Bfnum{1}
does not hold for partial rewrite steps, as shown in the following example.

\begin{example}
\label{exp:characterization of weak rewrite step I}
Consider \Cref{exp: weak redex and weak rewrite steps}.
We have
\[
\begin{array}{@{}l@{\>}c@{\>}l@{}}
\lefteqn{\CTerm{\SET{y}}{1 + \mathsf{sum}(y)}{w}{1 \le w \land w \le 5 \land y = w - 1}}\\
& \leadsto_{\rho}&
\CTerm{\varnothing}{1 + 0}{w,y}{1 \le w \land w \le 5 \land y = w - 1 \land 0 \ge y}   
\end{array}
\]
for $\rho: \CRu{\SET{x}}{\mathsf{sum}(x)}{0}{0 \ge x}$.
Let $s = 1 + \mathsf{sum}(1)$.
Then 
$s \in 
\AllInst{\CTerm{\SET{y}}{1 + \mathsf{sum}(y)}{w}{1 \le w \land w \le 5 \land y = w - 1}}$.
However, 
we have
$\AllInst{\CTerm{\varnothing}{1 + 0}{w,y}{1 \le w \land w \le 5 \land y = w - 1 \land 0 \ge y}}
=  \SET{ 1 + 0 }$,
and 
$1 + \mathsf{sum}(1)
\to_\rho 1 + 0$ does not hold.
In fact, $s$ is a normal form with respect to the rule $\rho$.
\end{example}

Thus, 
the property of \Cref{thm:Characterization by Interpreation for Most General Reduction}~\Bfnum{1}
discriminates most general rewrite steps
from partial rewrite steps.
Then, a natural question arise:
Is there any property that holds for partial rewrite steps,
but not for most general rewrite steps?

\begin{example}
\label{exp:characterization of weak rewrite step II}
Consider \Cref{exp:characterization of weak rewrite step I}.
We have
$\AllInst{\CTerm{\SET{y}}{1 + \mathsf{sum}(y)}{w}{1 \le w \land w \le 5 \land y = w - 1}}
= \SET{ 1 + \mathsf{sum}(n) \mid 0 \le n \le 4 }$.
The reason for the partial rewrite step
by the rule $\rho: \CRu{\SET{x}}{\mathsf{sum}(x)}{0}{0 \ge x}$
is the existence of $1 + \mathsf{sum}(0)$ in this set.
On the other hand,
$s' = \CTerm{\SET{y}}{1 + \mathsf{sum}(w)}{}{1 \le w \land w \le 5}$
is a normal form with respect to the rule $\rho$.
In this case, 
we have $\AllInst{s'} = \SET{ 1 + \mathsf{sum}(n) \mid 1 \le n \le 5 }$, whose members are all 
normal forms with respect to the rule $\rho$.
This motivates us to characterize normal forms of partial write steps
by the absence of reducible terms in interpretations.
However, the interpretation so far is not useful for this.
Consider the rule $\rho': \CRu{\SET{x}}{\mathsf{f}(x,y)}{y}{0 = x}$
and the existentially constrained term 
$t = \CTerm{\SET{w}}{\mathsf{f}(w,z)}{}{1 \le w \land w \le 5}$.
Then, although $t$ is normal with respect to the rule $\rho'$,
$\AllInst{t}$ includes
the term $\mathsf{f}(1,\mathsf{f}(0,z))$,
which is reducible by the rule $\rho'$.
For a different illustration, 
consider the following rewrite rule 
$\rho'': \Pi \varnothing. \mathsf{f}(\mathsf{a}) \to \mathsf{b}[\mathsf{true}]$
and the constrained term $\CTerm{\varnothing}{\mathsf{f}(x)}{}{\mathsf{true}}$.
On the one hand
$\mathsf{f}(\mathsf{a}) \in \AllInst{\CTerm{\varnothing}{\mathsf{f}(x)}{}{\mathsf{true}}}$
is reducible, while on the other hand
$\mathsf{f}(\mathsf{b}) \in \AllInst{\CTerm{\varnothing}{\mathsf{f}(x)}{}{\mathsf{true}}}$
is a normal form.
Thus the standard interpretation can hardly be used to distinguish
normal form instances from reducible instances.
\end{example}

In the following sections, we give a
characterization of partial rewrite steps
which does not hold for most general rewrite steps,
by introducing another kind of interpretations.

\section{%
Value Interpretation of Existentially Constrained Terms}
\label{sec:value-interpretation}

The standard interpretation of existentially constrained terms
in \Cref{sec:Interpretations of Weak and Most General
Reductions} is inadequate for
characterizing partial rewrite steps by their normal forms.
In this section, we introduce an alternative interpretation of 
constrained terms. The idea is that logical variables are
instantiated by values, as before, but for non-logical variables we only allow
renaming instead of proper instantiation.

\begin{definition}[Value Interpretation of Constrained Terms]
A term $t$ is said to be a \emph{value instance} of 
an existentially constrained term
$\CTerm{X}{s}{\vec{x}}{\varphi}$ if 
$t=s\sigma$ for some substitution $\sigma$ such that 
\begin{enumerate}
\item 
$\sigma(X) \subseteq \Val$,
\item 
$\sigma(\Var(s) \setminus X) \subseteq \xV$,
\item 
$x \neq y$ implies $\sigma(x) \neq \sigma(y)$ for any variable 
$x,y \in \Var(s) \setminus X$, and
\item 
$\vDash_{\xM,\sigma} \ECO{\vec{x}}{\varphi}$.
\end{enumerate}
We denote the set of all value instances of $\CTerm{X}{s}{\vec{x}}{\varphi}$ by
$\AllInst{\CTerm{X}{s}{\vec{x}}{\varphi}}_\mathsf{v}$,
which is called the \emph{value interpretation} of 
the constrained term $\CTerm{X}{s}{\vec{x}}{\varphi}$.
\end{definition}
Note here that, as $\FVar(\ECO{\vec{x}}{\varphi}) \subseteq X$,
it follows from the items~\Bfnum{1} and~\Bfnum{3}
that $\sigma \vDash_{\xM} \ECO{\vec{x}}{\varphi}$.
The reason to allow renaming of non-logical variables 
for value interpretations is to enable it to
characterize equivalence ($\sim$) between (existentially) constrained terms
(\cref{thm:characterization-of-equivalence-between-constrained-terms-by-value-interpretations});
see the note after \Cref{ex:counter example for reflection of subsumption in value intepretation}.

\begin{example}
\label{ex:value interpretation}
We have
$\AllInst{\CTerm{\SET{x}}{\mathsf{f}(x,z)}{y}{x = y \times 2}}_\mathsf{v}
= \SET{ \mathsf{f}(n,w) \mid n \in \mathbb{Z}/2, w \in \xV }$.
Thus, $\mathsf{f}(0,x),\mathsf{f}(2,y),\ldots \in 
\AllInst{\CTerm{\SET{x}}{\mathsf{f}(x,z)}{y}{x = y \times 2}}_\mathsf{v}$.
Unlike the standard interpretation,
$\mathsf{f}(0,\mathsf{f}(x,y))
\notin \AllInst{\CTerm{\SET{x}}{\mathsf{f}(x,z)}{y}{x = y \times 2}}_\mathsf{v}$
(cf.\ \Cref{ex:standard interpretation}).
\end{example}

Some immediate properties of value instantiations follow.

\begin{lemma}
\label{lem:simple properties of value instantiation}
\Bfnum{1.}
An existentially constrained term $\CTerm{X}{s}{\vec{x}}{\varphi}$ is satisfiable 
iff $\AllInst{\CTerm{X}{s}{\vec{x}}{\varphi}}_\mathsf{v} \ne \varnothing$.
\Bfnum{2.}
The set $\AllInst{\CTerm{X}{s}{\vec{x}}{\varphi}}_\mathsf{v}$
is closed under variable renaming.
\Bfnum{3.}
$\AllInst{\CTerm{X}{s}{\vec{x}}{\varphi}}_\mathsf{v}
\subseteq \AllInst{\CTerm{X}{s}{\vec{x}}{\varphi}}$.
\end{lemma}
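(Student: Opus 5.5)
We prove the three items separately, each directly from the definition of value instances and the definitions of satisfiability and of the standard interpretation. Throughout we use the observation made right after the definition: since $\FVar(\ECO{\vec{x}}{\varphi}) \subseteq X$, items~\Bfnum{1} and~\Bfnum{4} give $\sigma \vDash_\xM \ECO{\vec{x}}{\varphi}$ for every substitution $\sigma$ witnessing a value instance.

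For \Bfnum{1}, the direction from right to left is immediate. If $s\sigma \in \AllInst{\CTerm{X}{s}{\vec{x}}{\varphi}}_\mathsf{v}$, then $\sigma \vDash_\xM \ECO{\vec{x}}{\varphi}$. The valuation induced by the values $\sigma(X)$, extended arbitrarily elsewhere, then satisfies $\ECO{\vec{x}}{\varphi}$, so the constrained term is satisfiable.

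For the converse, take a valuation $\rho$ with $\vDash_{\xM,\rho} \ECO{\vec{x}}{\varphi}$. Define $\sigma(x)$ to be the value constant $\rho(x)$ for $x \in X$, using the assumption that every element of $\xI(\tau)$ is a value. Let $\sigma$ be the identity on $\Var(s) \setminus X$, which is injective and variable-valued. The free variables of $\ECO{\vec{x}}{\varphi}$ lie in $X$, so $\vDash_{\xM,\sigma} \ECO{\vec{x}}{\varphi}$ holds because it coincides with the truth value under $\rho$. Hence $s\sigma$ is a value instance. The only point needing care is that the bound variables $\vec{x}$ are disjoint from $\Var(s)$ and hence from $X$, so they are untouched by $\sigma$. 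This is the one place where the side conditions on existentially constrained terms matter, and it is the closest thing to an obstacle.

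For \Bfnum{2}, let $t = s\sigma$ be a value instance and let $\delta$ be a variable renaming, i.e.\ an injective substitution mapping variables to variables. We show $t\delta = s(\sigma\delta)$ is again a value instance by checking the four conditions for $\sigma\delta$.
\begin{itemize}
\item Values are ground, so $(\sigma\delta)(X) = \sigma(X) \subseteq \Val$.
\item On $\Var(s) \setminus X$, $\sigma\delta$ maps into $\xV$, and it is injective there as a composition of injective maps.
\item The constraint condition depends only on the values on $X$, which are unchanged, so it is preserved.
\end{itemize}

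For \Bfnum{3}, every $\sigma$ witnessing a value instance is $X$-valued by item~\Bfnum{1} of the definition and satisfies $\sigma \vDash_\xM \ECO{\vec{x}}{\varphi}$ by the observation above. Therefore $s\sigma \in \AllInst{\CTerm{X}{s}{\vec{x}}{\varphi}}$ by definition of the standard interpretation.
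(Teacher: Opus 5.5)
Your proof is correct and matches the paper's treatment: the paper calls these three properties immediate and gives no separate proof, and your argument is the direct unfolding of the definitions it has in mind. That unfolding uses that the free variables of $\ECO{\vec{x}}{\varphi}$ lie in $X$, that every element of $\xI(\tau)$ is a value, and that renamings are injective; you also rightly derive $\sigma \vDash_\xM \ECO{\vec{x}}{\varphi}$ from items~\Bfnum{1} and~\Bfnum{4} of the definition, where the paper's remark cites items~\Bfnum{1} and~\Bfnum{3}.
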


We focus now on less obvious properties of value instantiations.
We first clarify notable relations between value instantiations and instantiations.
The standard interpretation is obtained from 
the value interpretation in the following way.

\begin{restatable}{lemma}{LemmaObtainAllInstFromValueInst}
\label{lem:obtain all inst from value inst}
For any existentially constrained term
$\CTerm{X}{s}{\vec{x}}{\varphi}$, we have the interpretation
$\AllInst{\CTerm{X}{s}{\vec{x}}{\varphi}}
= \SET{ t\theta \mid 
t \in \AllInst{\CTerm{X}{s}{\vec{x}}{\varphi}}_\mathsf{v},
\mbox{$\theta$ is a substitution} }$.
\end{restatable}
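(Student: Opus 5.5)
We prove the two inclusions separately. For $\supseteq$, take $t \in \AllInst{\CTerm{X}{s}{\vec{x}}{\varphi}}_\mathsf{v}$ and an arbitrary substitution $\theta$. By \Cref{lem:simple properties of value instantiation}~\Bfnum{3}, $t$ belongs to $\AllInst{\CTerm{X}{s}{\vec{x}}{\varphi}}$. By \Cref{lem:immediate properties of interpretation}~(2), this set is closed under substitutions. Hence $t\theta \in \AllInst{\CTerm{X}{s}{\vec{x}}{\varphi}}$. A direct argument also works: writing $t = s\sigma$, the composite $\sigma\theta$ agrees with $\sigma$ on $X$, because values are ground. So $\sigma\theta$ is $X$-valued and respects $\ECO{\vec{x}}{\varphi}$, using $\FVar(\ECO{\vec{x}}{\varphi}) \subseteq X$.

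For $\subseteq$, let $u = s\sigma$, where $\sigma$ is $X$-valued and $\sigma \vDash_\xM \ECO{\vec{x}}{\varphi}$. We factor $\sigma$ as $\sigma' \theta$ on $\Var(s)$. Let $\{\seq[k]{v}\} = \Var(s) \setminus X$. Choose pairwise distinct variables $\seq[k]{w}$ of matching sorts. Define $\sigma'$ by $\sigma'(x) = \sigma(x)$ for $x \in X$ and $\sigma'(v_i) = w_i$. Define $\theta = \SET{w_i \mapsto \sigma(v_i)}$.

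We then check the four conditions for $s\sigma'$ to be a value instance. Item~1 holds since $\sigma'(X) = \sigma(X) \subseteq \Val$. Items~2 and~3 hold by construction. For item~4, $\Var(\varphi) \setminus \{\vec{x}\} \subseteq X$, and $\sigma'$ agrees with $\sigma$ on $X$. So $\vDash_{\xM,\sigma'} \ECO{\vec{x}}{\varphi}$ follows from $\sigma \vDash_\xM \ECO{\vec{x}}{\varphi}$. Here a valuation is read off the values assigned to the free variables, and the bound variables $\vec{x}$ are irrelevant to the satisfaction of the existential constraint.

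Finally, $s\sigma'\theta = s\sigma$. On $X$, $\sigma'(x)$ is a value and is fixed by $\theta$. On $v_i$, $\theta(w_i) = \sigma(v_i)$.

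The only delicate point is the choice of fresh variables $w_i$. Because $\theta$ is applied to all of $s\sigma'$, the $w_i$ must not clash with anything else in that term. The values are ground, so no real conflict arises. Taking the $w_i$ distinct suffices, and no further freshness against $\Var(s)$ is needed.
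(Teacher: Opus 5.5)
Your proof is correct and essentially the paper's argument. For $\supseteq$ the paper does exactly your direct composition $\theta\circ\sigma$. For $\subseteq$ it factors $\sigma = \sigma|_{\overline{X}} \circ \sigma|_X$, which is your construction with the trivial choice $w_i = v_i$, so the fresh variables are not needed.
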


From now on we write $s \leqslant t$ if $s\theta = t$.
Conversely, the value interpretation can be obtained from 
the standard interpretation as follows.

\begin{restatable}{lemma}{LemmaObtainValueInstFromAllInst}
\label{lem: obtain value inst from all inst}
Let $\CTerm{X}{s}{\vec{x}}{\varphi}$ be an existentially constrained term.
Then, the set of terms $\AllInst{\CTerm{X}{s}{\vec{x}}{\varphi}}_\mathsf{v}$
is minimal with respect to $\leqslant$ in 
$\AllInst{\CTerm{X}{s}{\vec{x}}{\varphi}}$.
\end{restatable}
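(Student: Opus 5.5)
We need to show that $\AllInst{\CTerm{X}{s}{\vec{x}}{\varphi}}_\mathsf{v}$ is precisely the set of $\leqslant$-minimal elements of $\AllInst{\CTerm{X}{s}{\vec{x}}{\varphi}}$. Since $\leqslant$ is only a quasi-order (terms equal up to renaming are mutually related), we read ``minimal'' as follows: $t$ is minimal if $u \leqslant t$ with $u$ in the set implies $t \leqslant u$. The plan is to prove two inclusions, relying on \Cref{lem:simple properties of value instantiation}~(3) and \Cref{lem:obtain all inst from value inst}.

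\emph{Every value instance is minimal.} Let $t = s\sigma$ be a value instance and suppose $u \in \AllInst{\CTerm{X}{s}{\vec{x}}{\varphi}}$ with $u\theta = t$. Write $u = s\delta$ with $\delta(X) \subseteq \Val$. By \Cref{lem:obtain all inst from value inst}, $u$ is an instance of some value instance $s\sigma'$, so we may take $u = s\delta$ for a $\delta$ that agrees with $\sigma'$ on $X$.
\begin{itemize}
\item On $X$: we have $\delta(x)\theta = \sigma(x)$, and $\delta(x)$ is a value, so $\delta(x) = \sigma(x)$.
\item On $\Var(s) \setminus X$: we have $\delta(y)\theta = \sigma(y)$, which is a variable, so $\delta(y)$ is a variable. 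Since $s$ may have several occurrences of $y$ and $\sigma$ is injective on $\Var(s)\setminus X$, distinct $y$ must receive distinct variables $\delta(y)$. Otherwise $\theta$ would map them to the same variable, contradicting injectivity of $\sigma$.
\end{itemize}
Hence $\delta$ restricted to $\Var(s)$ is $\sigma$ composed with a renaming. So $u$ is a renaming of $t$, and $t \leqslant u$ follows. (Indeed $u$ is itself a value instance.)

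\emph{Every minimal element is a value instance.} Let $t \in \AllInst{\CTerm{X}{s}{\vec{x}}{\varphi}}$ be minimal. By \Cref{lem:obtain all inst from value inst}, $t = t_0\theta$ for some value instance $t_0$, and $t_0$ lies in the standard interpretation by \Cref{lem:simple properties of value instantiation}~(3). Thus $t_0 \leqslant t$, and minimality gives $t \leqslant t_0$. Now $t$ and $t_0$ are mutually instances of each other, so they are variants: $t = t_0\eta$ for a variable renaming $\eta$, injective on $\Var(t_0)$. This is the standard fact that mutual matching implies equality up to renaming, proved by a size/variable-count argument. By closure of value instances under renaming (\Cref{lem:simple properties of value instantiation}~(2)), $t$ is a value instance.

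\emph{Main obstacle.} The delicate step is the first inclusion. There one must establish that $\delta$ sends the non-logical variables injectively to variables, and one must handle $X$-variables correctly, including those occurring in $s$ only through constraint-free positions. Values are closed terms, which forces $\delta = \sigma$ on $X$; this ensures the constraint is satisfied by the same assignment. Care is also needed to justify the precise reading of ``minimal'' modulo renaming.
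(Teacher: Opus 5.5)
Your proof is correct, and its first half is essentially the paper's argument: the paper shows that if $u<v$ in the standard interpretation then $v$ cannot be a value instance. The reason is that the $X$-positions of $v$ carry the same values as those of $u$, while the remaining positions form a renaming of the non-logical variables of $s$, which yields $v\leqslant u$; this is exactly your comparison of $\delta$ with $\sigma$ on $X$ and on $\Var(s)\setminus X$.

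Your second half shows that every minimal element is a value instance, via \Cref{lem:obtain all inst from value inst}, the variant argument, and closure under renaming. The paper's proof does not spell this out, since it only establishes that value instances are minimal. It is, however, the direction needed for the only-if part of \Cref{thm:characterization-of-equivalence-between-constrained-terms-by-value-interpretations}, so including it makes your version more complete.
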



From \Cref{lem:obtain all inst from value inst,lem: obtain value inst from all inst}, it follows
that equivalence of constrained terms coincides
with the equality of their value instantiations,
similarly with that of their standard instantiations.

\begin{restatable}{theorem}{TheoremCharacterizationOfEquivalenceBetweenConstrainedTermsByValueInterpretations}
\label{thm:characterization-of-equivalence-between-constrained-terms-by-value-interpretations}
Let $\CTerm{X}{s}{\vec{x}}{\varphi}$, $\CTerm{Y}{t}{\vec{y}}{\psi}$ be existentially constrained terms.
Then,
$\CTerm{X}{s}{\vec{x}}{\varphi} \sim \CTerm{Y}{t}{\vec{y}}{\psi}$ 
iff $\AllInst{\CTerm{X}{s}{\vec{x}}{\varphi}}_\mathsf{v}
 = \AllInst{\CTerm{Y}{t}{\vec{y}}{\psi}}_\mathsf{v}$.
\end{restatable}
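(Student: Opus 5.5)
We reduce the statement to \Cref{lem:subsumption-by-instances}~(2), which says that $\CTerm{X}{s}{\vec{x}}{\varphi} \sim \CTerm{Y}{t}{\vec{y}}{\psi}$ iff the standard interpretations coincide. It therefore suffices to show, for the two constrained terms at hand, that
\[
\AllInst{\CTerm{X}{s}{\vec{x}}{\varphi}} = \AllInst{\CTerm{Y}{t}{\vec{y}}{\psi}}
\quad\text{iff}\quad
\AllInst{\CTerm{X}{s}{\vec{x}}{\varphi}}_\mathsf{v} = \AllInst{\CTerm{Y}{t}{\vec{y}}{\psi}}_\mathsf{v}.
\]
Write $S,T$ for the standard interpretations and $S_\mathsf{v},T_\mathsf{v}$ for the value interpretations.

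\emph{The direction ``if''.} Suppose $S_\mathsf{v}=T_\mathsf{v}$. By \Cref{lem:obtain all inst from value inst}, $S$ is the closure of $S_\mathsf{v}$ under substitution instances. Likewise $T$ is the closure of $T_\mathsf{v}$. Hence $S=T$ immediately.

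\emph{The direction ``only if''.} Suppose $S=T$. We show $S_\mathsf{v}\subseteq T_\mathsf{v}$; the other inclusion is symmetric. The key step uses \Cref{lem: obtain value inst from all inst}, which says that $S_\mathsf{v}$ is the set of $\leqslant$-minimal elements of $S$, and similarly $T_\mathsf{v}$ is the set of minimal elements of $T$. Since $S=T$, the two sets of minimal elements are literally the same set, so $S_\mathsf{v}=T_\mathsf{v}$.

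The expected obstacle is how to read ``minimal'' in \Cref{lem: obtain value inst from all inst}. Because $\leqslant$ is only a quasi-order (variable renamings make terms mutually $\leqslant$), the lemma must be understood as saying that $S_\mathsf{v}$ is \emph{exactly} the set of $u\in S$ such that every $u'\in S$ with $u'\leqslant u$ also satisfies $u\leqslant u'$. If the lemma only gives one inclusion, we will verify the missing one directly, as follows.
\begin{itemize}
\item[(a)] \emph{Every minimal element of $S$ lies in $S_\mathsf{v}$.} Take $u\in S$ minimal. By \Cref{lem:obtain all inst from value inst}, $u=u_0\theta$ for some $u_0\in S_\mathsf{v}$. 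Minimality gives $u\leqslant u_0$. Hence $u$ is a renaming of $u_0$ (mutual instances differ by a variable renaming). By \Cref{lem:simple properties of value instantiation}~(2), $S_\mathsf{v}$ is closed under renaming, so $u\in S_\mathsf{v}$.
\item[(b)] \emph{Every element of $S_\mathsf{v}$ is minimal in $S$.} Let $u=s\sigma\in S_\mathsf{v}$, and suppose $u'=s\sigma'\theta'\leqslant u$ with $s\sigma'\in S$. Here $\sigma$ and $\sigma'$ agree in the positions of $X$: both place values there, and an instance relation cannot change a value. The non-logical variables are mapped injectively to variables by $\sigma$. From this one argues that $u\leqslant u'$.
\end{itemize}
This combinatorial check in (b), comparing value-level positions of the two instances, is the step that needs the most care.
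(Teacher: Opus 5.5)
Your proposal is correct and follows the paper's own proof: you reduce via \Cref{lem:subsumption-by-instances}~(2) to equality of the standard interpretations, take the ``if'' part from \Cref{lem:obtain all inst from value inst}, and take the ``only if'' part from the minimality lemma \Cref{lem: obtain value inst from all inst}. Your step (a) is worth keeping, because the paper's appendix proof of that lemma only shows that every value instance is $\leqslant$-minimal (your (b)); the ``only if'' direction also needs the converse, that every minimal element lies in the value interpretation, and your renaming argument supplies exactly that.
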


\begin{proof}
We show that
$\AllInst{\CTerm{X}{s}{\vec{x}}{\varphi}}
 = \AllInst{\CTerm{Y}{t}{\vec{y}}{\psi}}$
iff $\AllInst{\CTerm{X}{s}{\vec{x}}{\varphi}}_\mathsf{v}
 = \AllInst{\CTerm{Y}{t}{\vec{y}}{\psi}}_\mathsf{v}$.
The \textit{if}-part follows from \Cref{lem:obtain all inst from value inst},
while the \textit{only-if} part follows from \Cref{lem: obtain value inst from all inst}.
The full claim then follows from \Cref{lem:subsumption-by-instances}~\Bfnum{2}.
\end{proof}

Contrast to the equivalence $\sim$ of existentially constraint
terms, the value interpretation does not reflect subsumption relation on it,
i.e., 
the following is not correct:
$\CTerm{X}{s}{\vec{x}}{\varphi} \subsetsim \CTerm{Y}{t}{\vec{y}}{\psi}$ 
iff $\AllInst{\CTerm{X}{s}{\vec{x}}{\varphi}} \subseteq \AllInst{\CTerm{Y}{t}{\vec{y}}{\psi}}$.
This is witnessed in the following example.

\begin{example}
\label{ex:counter example for reflection of subsumption in value intepretation}
Consider the two existential constrained terms
$\CTerm{\SET{ x }}{\mathsf{f}(x)}{}{\mathsf{true}}$
and
$\CTerm{\varnothing}{\mathsf{f}(x)}{}{\mathsf{true}}$.
Then on the one hand we have
$\CTerm{\SET{ x }}{\mathsf{f}(x)}{}{\mathsf{true}}
\subsetsim 
\CTerm{\varnothing}{\mathsf{f}(x)}{}{\mathsf{true}}$.
On the other hand,
$\AllInst{\CTerm{\SET{ x }}{\mathsf{f}(x)}{}{\mathsf{true}}}_\mathsf{v}
\subseteq 
\AllInst{\CTerm{\varnothing}{\mathsf{f}(x)}{}{\mathsf{true}}}_\mathsf{v}$
does not hold.
It gets more obvious by $\AllInst{\CTerm{\SET{ x }}{\mathsf{f}(x)}{}{\mathsf{true}}}_\mathsf{v}
= \SET{ \mathsf{f}(v) \mid v \in  \Val }
\neq \SET{ \mathsf{f}(y) \mid y \in  \xV }
= \AllInst{\CTerm{\varnothing}{\mathsf{f}(x)}{}{\mathsf{true}}}_\mathsf{v}$.
\end{example}

Note that if one defines the value interpretation
without allowing to rename non-logical variables, 
then \Cref{thm:characterization-of-equivalence-between-constrained-terms-by-value-interpretations}
does not hold.
Consider 
$\CTerm{\varnothing}{\mathsf{f}(x)}{}{\mathsf{true}}
\sim  \CTerm{\varnothing}{\mathsf{f}(y)}{}{\mathsf{true}}$
for which we obtain
$\AllInst{\CTerm{\varnothing}{\mathsf{f}(x)}{}{\mathsf{true}}}_\mathsf{v}
= \SET{ \mathsf{f}(x) } \neq \SET{ \mathsf{f}(y) } =
\AllInst{\CTerm{\varnothing}{\mathsf{f}(y)}{}{\mathsf{true}}}_\mathsf{v}$.

%
%


\section{Characterization of Partial and Most General Constrained Reductions 
by Value Instantiation}
\label{sec:characterization}

In this section, we provide a characterization of partial constrained reductions and most
general constrained reductions with respect to their normal forms, expressed in terms of
value interpretation. In contrast to the characterizations in
\Cref{sec:Interpretations of Weak and Most General Reductions}, this
characterization distinguishes partial rewrite steps from most general rewrite steps.
Firstly, we extend \Cref{lem:Characterization of Weak Rewrite Steps}
to value interpretation.

\begin{restatable}{lemma}{LemmaCharacterizationOfWeakRewriteStepsByValueInterpretation}
\label{lem:Characterization of Weak Rewrite Steps by Value Interpretation}
Suppose $\CTerm{X}{s}{\vec{x}}{\varphi} \leadsto^p_\rho \CTerm{Y}{t}{\vec{y}}{\psi}$.
Then all of the following holds:
\begin{enumerate}
\item 
There exists $u \in \AllInst{\CTerm{X}{s}{\vec{x}}{\varphi}}_\mathsf{v}$
such that $u \to^p_{\AllInst{\rho}} v$ for some $v$.

\item 
If $w \in \AllInst{\CTerm{Y}{t}{\vec{y}}{\psi}}_\mathsf{v}$
then there exists $u \in \AllInst{\CTerm{X}{s}{\vec{x}}{\varphi}}_\mathsf{v}$
such that $u \to^p_{\AllInst{\rho}} w$.

\item 
If $u \in \AllInst{\CTerm{X}{s}{\vec{x}}{\varphi}}_\mathsf{v}$
and $u \to^p_{\AllInst{\rho}} v$,
then $v \in \AllInst{\CTerm{Y}{t}{\vec{y}}{\psi}}_\mathsf{v}$.
\end{enumerate}
\end{restatable}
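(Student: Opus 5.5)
We argue directly from the definitions, along the lines of the proof of \Cref{lem:Characterization of Weak Rewrite Steps}, additionally tracking the restriction that non-logical variables are only renamed injectively. Throughout, fix the partial $\rho$-redex at $p$ using $\gamma$, with $\rho\colon \CRu{Z}{\ell}{r}{\pi}$, $t = s[r\gamma]_p$, $\psi = \varphi \land \pi\gamma$, and $\SET{\vec{z}} = \Var(\pi)\setminus\Var(\ell)$. The basic observation is this. A value instance $s\sigma$ with $\sigma$ a valuation-like substitution (values on $X$, injective renaming elsewhere) satisfies $s\sigma|_p = \ell(\gamma\sigma)$. On variables of $\ell$ in $Z$, $\gamma$ maps into $\Val \cup X$, so $\gamma\sigma$ maps them to values. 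On the remaining variables of $\ell$, $\gamma\sigma$ is simply some term. Hence $\langle \ell\gamma\sigma', r\gamma\sigma'\rangle \in \AllInst{\rho}$ exactly when there is an extension $\sigma'$ of $\sigma$ to $\vec{z}$ and to the extra variables of $r$ with values such that $\pi\gamma\sigma'$ holds.

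For \Bfnum{1}, we use item \Bfnum{4} of \Cref{def:weak-rho-redex} to pick a valuation satisfying $(\ECO{\vec{x}}{\varphi})\land(\ECO{\vec{z}}{\pi\gamma})$. We restrict it to $X$ (values) and map $\Var(s)\setminus X$ injectively to fresh variables, giving $\sigma$ with $u = s\sigma \in \AllInst{\CTerm{X}{s}{\vec{x}}{\varphi}}_\mathsf{v}$. Since the free variables of $\pi\gamma$ besides $\vec{z}$ lie in $X\cup\ExVar(\rho)$, the witnessing values yield the rule instance, so $u \to^p_{\AllInst{\rho}} u[r\gamma\sigma']_p$.

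For \Bfnum{3}, take $u = s\sigma$ a value instance with $u \to^p_{\AllInst{\rho}} v$ via some instance $\langle\ell\delta, r\delta\rangle$. Left-linearity together with $s|_p = \ell\gamma$ gives $\ell\delta = \ell\gamma\sigma$, so $\delta$ agrees with $\gamma\sigma$ on $\Var(\ell)$. We then set $\sigma' = \sigma \cup \delta|_{Z\setminus\Var(\ell)}$. The extra variables are valued, $\sigma' \vDash \ECO{\vec{y}}{\psi}$ holds since $\varphi$ and $\pi\gamma$ are both witnessed, and $v = t\sigma'$. It remains to check that $\sigma'$ restricted to $\Var(t)$ satisfies the four value-instance conditions w.r.t.\ $Y = \ExVar(\rho)\cup(X\cap\Var(t))$. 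The non-logical variables of $t$ are among those of $s$ (non-$X$ ones) and those introduced by $r\gamma$ from variables of $\ell$ not in $Z$. These are mapped by $\gamma$ to subterms of $s$, so their variables are already non-logical variables of $s$, which $\sigma$ renames injectively. This check is the main technical point. The subtlety is that $Y$ might contain a variable of $t$ lying outside $X \cup \ExVar(\rho)$, or that a variable in $X$ could become non-logical in $t$. The latter cannot happen, since $Y \supseteq X\cap\Var(t)$. The former is excluded by the definition of $Y$.

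For \Bfnum{2}, take $w = t\tau \in \AllInst{\CTerm{Y}{t}{\vec{y}}{\psi}}_\mathsf{v}$ with witnesses for $\vec{y}$. We combine them into a single substitution $\tau'$ that values $X$, $\ExVar(\rho)$, and $\vec{z}$ and satisfies $\varphi\land\pi\gamma$. This requires choosing values for variables of $X$ that disappeared from $t$, which the existential witnesses for $\vec{y}$ supply. On the non-logical variables of $s$ we keep the injective renaming given by $\tau$, extending it with fresh variables for those absent from $t$. Then $u = s\tau'$ is a value instance, the pair $\langle \ell\gamma\tau', r\gamma\tau'\rangle$ lies in $\AllInst{\rho}$, and $u \to^p_{\AllInst{\rho}} s[r\gamma]_p\tau' = w$. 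The care needed here is in keeping the combined substitution injective on non-logical variables while respecting $\tau$ on $\Var(t)$. We choose fresh names distinct from $\tau(\Var(t))$, and this works.
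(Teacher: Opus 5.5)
Your proposal is correct and follows the paper's route: you reuse the witness substitutions from the proof of \Cref{lem:Characterization of Weak Rewrite Steps} and additionally check that non-logical variables are only injectively renamed, using $\Var(t)\setminus Y \subseteq \Var(s)\setminus X$ for item~3. In item~2, your choice of fresh names (distinct from $\tau(\Var(t))$) for the non-logical variables of $s$ erased by the step is genuinely needed: the paper leaves them fixed, which breaks injectivity when $\tau$ renames a surviving variable onto an erased one. Also, erased logical variables that do not occur in $\psi$ need arbitrarily chosen values, since the witnesses for $\vec{y}$ do not cover them.
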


Now we proceed with a characterization in the reverse direction,
that is, when a partial rewrite step from a constrained term
is guaranteed by the rewrite step of its instantiation.

\begin{restatable}{lemma}{LemmaNormalFormsOfWeakReduction}
\label{lem:Normal Forms of Weak Reduction}
Let $\CTerm{X}{s}{\vec{x}}{\varphi}$ be an existentially constrained term.
Suppose there exists 
$u \in \AllInst{\CTerm{X}{s}{\vec{x}}{\varphi}}_\mathsf{v}$
such that 
$u \to^p_{\AllInst{\rho}} v$
for some position $p$ and a term $v$.
Then, 
there exists an existentially constrained term $\CTerm{Y}{t}{\vec{y}}{\psi}$ such that
$\CTerm{X}{s}{\vec{x}}{\varphi} \leadsto^p_\rho \CTerm{Y}{t}{\vec{y}}{\psi}$
and 
$v \in \AllInst{\CTerm{Y}{t}{\vec{y}}{\psi}}_\mathsf{v}$.
\end{restatable}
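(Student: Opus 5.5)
Unfold the two hypotheses and build the partial redex directly. Since $u \in \AllInst{\CTerm{X}{s}{\vec{x}}{\varphi}}_\mathsf{v}$, we have $u = s\sigma$ for some $\sigma$ with $\sigma(X)\subseteq\Val$, with $\sigma$ injective from $\Var(s)\setminus X$ into $\xV$, and with $\vDash_{\xM,\sigma}\ECO{\vec{x}}{\varphi}$. Since $u \to^p_{\AllInst{\rho}} v$, there are a pair $\langle \ell\delta, r\delta\rangle \in \AllInst{\rho}$ and a substitution $\theta$ such that $u|_p = \ell\delta\theta$ and $v = u[r\delta\theta]_p$. Here $\Dom(\delta)=\VDom(\delta)=Z\cap\Var(\ell,r)$ and $\delta \vDash_\xM \ECO{\pvec{z}'}{\pi}$. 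We may assume $\theta$ acts only on the non-logical variables of $\ell$, and rename $\rho$ apart from $s,\varphi$ and from the range of $\sigma$.

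First I would show that $p\in\Pos(s)$ and that $s|_p$ is an instance of $\ell$. The only non-variable symbols that $\sigma$ introduces are values, and these sit at positions of $X$-variables. Conversely, every variable of $\ell$ is instantiated by $\delta\theta$ either with a value (if it lies in $Z$) or with an arbitrary term. Two situations must be excluded. \Bfnum{(a)} The position $p$ lies strictly inside a value $\sigma(x)$. This is possible only if $\ell$ is a variable or a value; the value case is fine because we can still take $p$ at $x$. \Bfnum{(b)} Some non-variable position of $\ell$ is matched by a renamed variable $\sigma(y)$ with $y\notin X$. This is impossible, since a variable cannot equal a non-variable subterm of $\ell\delta\theta$. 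Using left-linearity, define $\gamma$ on $\Var(\ell)$ by letting $\gamma(x')$ be the subterm of $s|_p$ at the position of $x'$ in $\ell$. Then $\Dom(\gamma)=\Var(\ell)$ and $s|_p=\ell\gamma$, which are conditions \Bfnum{1} and \Bfnum{2} of \Cref{def:weak-rho-redex}.

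The main obstacle is condition \Bfnum{3}. For $x'\in\Var(\ell)\cap Z$, the term $\gamma(x')\sigma=\delta(x')$ is a value, so $\gamma(x')$ is either a value or a variable mapped by $\sigma$ to a value. Since $\sigma$ sends non-logical variables to variables, that variable must lie in $X$. The delicate subcase is when $\gamma(x')$ is a proper non-variable term whose $\sigma$-image is a value; this cannot happen because values are constants. If left-value-freeness is not assumed, value positions of $\ell$ need the same care as in (a) and (b), and I would check them there. Condition \Bfnum{4} follows by taking the valuation $\sigma$, extended on $\vec{x}$ by the witnesses for $\ECO{\vec{x}}{\varphi}$ and on $\vec{z}=\Var(\pi)\setminus\Var(\ell)$ by $\delta$ together with its witnesses for $\pvec{z}'$. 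Because $\pi\gamma\sigma$ agrees with $\pi\delta$ on $\Var(\ell)$, this valuation satisfies $(\ECO{\vec{x}}{\varphi})\land(\ECO{\vec{z}}{\pi\gamma})$. So we have a partial redex, and the step $\CTerm{X}{s}{\vec{x}}{\varphi}\leadsto^p_{\rho,\gamma}\CTerm{Y}{t}{\vec{y}}{\psi}$ with $t=s[r\gamma]_p$ and $\psi=\varphi\land\pi\gamma$ exists.

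It remains to show $v\in\AllInst{\CTerm{Y}{t}{\vec{y}}{\psi}}_\mathsf{v}$. Define $\sigma'$ as $\sigma$ on $\Var(s)$, as $\delta$ on $\ExVar(\rho)\cap\Var(r)$, and, for non-logical variables of $r$ that lie in $\Var(\ell)$, as dictated by $\theta$. Then $t\sigma' = s\sigma[r\delta\theta]_p = v$. Each condition of the value interpretation must be checked. The logical variables $Y$ go to values. Non-logical variables go to variables injectively: these come only from $\Var(s)\setminus X$, because non-logical variables of $\ell$ are matched against subterms of $s$ that $\sigma$ maps to variables. Finally, the constraint $\psi$ is satisfied by the valuation built for condition \Bfnum{4}. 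Alternatively, once \Bfnum{4} is established one can invoke \Cref{lem:Characterization of Weak Rewrite Steps by Value Interpretation}\Bfnum{(3)}, which yields exactly $v\in\AllInst{\CTerm{Y}{t}{\vec{y}}{\psi}}_\mathsf{v}$. I would prefer this route and do the direct check only if the lemma's hypotheses on $\rho$ differ.
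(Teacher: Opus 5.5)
Your route is the paper's: match $s|_p$ against $\ell$, check conditions 3 and 4 of a partial redex, then show that $v$ is a value instance of the reduct. The gap is in the matching step, and it is exactly the point you deferred.

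Your case (a) is not a real case. Since $\sigma$ maps $\Var(s)$ into $\Val\cup\xV$ and values are constants, $\Pos(s\sigma)=\Pos(s)$. So $p\in\Pos(s)$ holds outright, and no position lies strictly inside a value. What can actually go wrong is this: a value $c$ at a position $q$ of $\ell$ is produced in $u$ by $\sigma(x)=c$, where $x\in X$ sits at $p.q$ in $s$. Then $s|_p$ is not an instance of $\ell$. ``Taking $p$ at $x$'' does not help, because $\ell\gamma$ still has $c$ where $s$ has $x$.

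This cannot be checked away. Take $\rho\colon \Pi\varnothing.\ \mathsf{f}(0)\to\mathsf{b}~[\mathsf{true}]$ and the constrained term $\Pi\{x\}.\ \mathsf{f}(x)~[x=0]$. Its only value instance $\mathsf{f}(0)$ satisfies $\mathsf{f}(0)\to^\epsilon_{\AllInst{\rho}}\mathsf{b}$. Yet no $\gamma$ gives $\mathsf{f}(x)=\mathsf{f}(0)\gamma$, so there is no partial step at $\epsilon$.

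So the lemma as stated is false, and it needs $\rho$ to be left-value-free. The paper's proof uses exactly this at this point (``as $\ell$ is linear and value-free''), even though the statement omits it. The lemma is only applied under that hypothesis, in \Cref{thm:Normal Forms of Weak Reduction}. With it, every function position $q$ of $\ell$ carries a non-value symbol. Since $\sigma$ cannot create such a symbol, $s(p.q)=\ell(q)$, and left-linearity gives $\gamma$.

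Once that hypothesis is added, the rest of your argument is correct and matches the paper. For condition 3, $\gamma(x')\sigma=\delta(x')\in\Val$, together with $\sigma(\Var(s)\setminus X)\subseteq\xV$, does the job. For condition 4, use a valuation that agrees with $\sigma$ on $X$, with witnesses on $\vec{x}$, and with $\delta$ plus witnesses on $\Var(\pi)\setminus\Var(\ell)$.

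Your preferred shortcut for the last part is legitimate, and it saves work compared with the paper, which re-verifies the value-instance conditions directly for an extended substitution. Item 3 of \Cref{lem:Characterization of Weak Rewrite Steps by Value Interpretation} applies to any value instance and any $\AllInst{\rho}$-step at $p$. The matcher $\gamma$ is unique, so the reduct is the one that lemma talks about.

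In your direct route, the clause defining $\sigma'$ on variables of $\ell$ via $\theta$ is vacuous. Those variables do not occur in $t=s[r\gamma]_p$.
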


%
%

\begin{restatable}[Normal Forms of Partial Constrained Reduction]{theorem}{TheoremNormalFormsOfWeakReduction}
\label{thm:Normal Forms of Weak Reduction}
Suppose $\xR$ is a left-linear and left-value-free LCTRS.
Let $\CTerm{X}{s}{\vec{x}}{\varphi}$ be an existentially constrained term.
Then, 
$\CTerm{X}{s}{\vec{x}}{\varphi}$
is a normal form with respect to ${\leadsto}_\xR$
iff
any $u \in \AllInst{\CTerm{X}{s}{\vec{x}}{\varphi}}_\mathsf{v}$
is a normal form with respect to  $\to_{\AllInst{\xR}}$.
\end{restatable}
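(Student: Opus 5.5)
The plan is to prove the two directions of the equivalence separately, each by contraposition, using \Cref{lem:Characterization of Weak Rewrite Steps by Value Interpretation} for one direction and \Cref{lem:Normal Forms of Weak Reduction} for the other. Throughout, I would use that $\to_{\AllInst{\xR}}$ is the union of the $\to_{\AllInst{\rho}}$ for $\rho \in \xR$, and that ${\leadsto}_\xR$ is the union of the ${\leadsto}_\rho$ over renamed variants of the $\rho \in \xR$.

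\emph{(If $\CTerm{X}{s}{\vec{x}}{\varphi}$ is a normal form w.r.t.\ ${\leadsto}_\xR$, then every value instance is a normal form w.r.t.\ $\to_{\AllInst{\xR}}$.)} Suppose some $u \in \AllInst{\CTerm{X}{s}{\vec{x}}{\varphi}}_\mathsf{v}$ is not normal. Then $u \to^p_{\AllInst{\rho}} v$ for some $\rho \in \xR$, some position $p$ and some term $v$.
\begin{itemize}
\item By \Cref{lem:simple properties of value instantiation}~\Bfnum{1}, the value interpretation is non-empty, so the constrained term is satisfiable. This is a precondition of \Cref{def:weak-rho-redex}.
\item Taking a renamed variant of $\rho$ disjoint from $\Var(s,\varphi)$, \Cref{lem:Normal Forms of Weak Reduction} directly gives $\CTerm{X}{s}{\vec{x}}{\varphi} \leadsto^p_\rho \CTerm{Y}{t}{\vec{y}}{\psi}$. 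This contradicts normality.
\end{itemize}

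One point must be checked here, and this is where I expect the hypotheses on $\xR$ to be used. Write $u = s\sigma$, where $\sigma$ maps logical variables to values and the other variables injectively to variables. The redex position $p$ must lie in $\Pos(s)$ at a non-variable position, rather than inside some $\sigma(x)$. Every $\sigma(x)$ is a value or a variable, hence a constant or a variable. The left-hand side $\ell\sigma'$ of a rule in $\AllInst{\rho}$ is neither of these when $\ell$ is value-free and is not a variable. Left-value-freeness rules out $\ell$ being a value. A variable left-hand side is excluded by the usual convention on rules; if it is not, I would handle it separately, since it would make both sides of the equivalence trivially fail. If \Cref{lem:Normal Forms of Weak Reduction} already covers this positional matter internally, this check only confirms that its application is legitimate; otherwise it is the piece to verify explicitly. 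Correspondingly, in the matching $s|_p\sigma = \ell\sigma'$, left-linearity and value-freeness ensure that the values $\sigma(x)$ land on variables of $\ell$, so that a substitution $\gamma$ with $s|_p = \ell\gamma$ exists.

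\emph{(If every value instance is normal, then $\CTerm{X}{s}{\vec{x}}{\varphi}$ is a normal form.)} Suppose instead that $\CTerm{X}{s}{\vec{x}}{\varphi} \leadsto^p_\rho \CTerm{Y}{t}{\vec{y}}{\psi}$ for a renamed variant of some $\rho \in \xR$. By \Cref{lem:Characterization of Weak Rewrite Steps by Value Interpretation}~\Bfnum{1}, there is $u \in \AllInst{\CTerm{X}{s}{\vec{x}}{\varphi}}_\mathsf{v}$ with $u \to^p_{\AllInst{\rho}} v$ for some $v$. Renaming the rule does not change $\AllInst{\rho}$ up to the names of non-logical variables, and $\to_{\AllInst{\rho}}$ is closed under such renaming. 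Hence $u$ is reducible by $\to_{\AllInst{\xR}}$, a contradiction.

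The main obstacle is the first direction, specifically the bookkeeping that the $\AllInst{\rho}$-redex in the value instance $u$ really sits at a position of $s$ and induces a substitution $\gamma$ satisfying items \Bfnum{1}--\Bfnum{3} of \Cref{def:weak-rho-redex}. Satisfiability (item \Bfnum{4}) should then follow from the valuation witnessed by $\sigma$ together with the witness for $\ECO{\pvec{z}'}{\pi}$ coming from $\AllInst{\rho}$. Since this is exactly the content of \Cref{lem:Normal Forms of Weak Reduction}, I would rely on that lemma and only make sure that its implicit assumptions are covered by left-linearity and left-value-freeness of $\xR$.
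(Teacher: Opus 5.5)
Your proposal is correct and follows the paper's proof. The direction from normality of the constrained term to normality of all value instances comes, by contraposition, from \Cref{lem:Normal Forms of Weak Reduction}; the converse comes from item~1 of \Cref{lem:Characterization of Weak Rewrite Steps by Value Interpretation}; satisfiability and rule renaming are handled just as you describe. The positional check you flag is already done inside the proof of \Cref{lem:Normal Forms of Weak Reduction}, and it also covers variable left-hand sides, so no separate case is needed (note that $\ell\sigma'$ can be a non-value constant such as $\mathsf{a}$; the relevant point is only that it is neither a value nor a variable).
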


\begin{proof}
($\Longleftarrow$)
We show the contraposition.
Suppose $\CTerm{X}{s}{\vec{x}}{\varphi} \leadsto^p_\rho \CTerm{Y}{t}{\vec{y}}{\psi}$.
From 
\Cref{lem:Characterization of Weak Rewrite Steps by Value Interpretation}~\Bfnum{1},
there exists 
$u \in \AllInst{\CTerm{X}{s}{\vec{x}}{\varphi}}_{\mathsf{v}}$ and 
$v$ such that 
$u \to^p_{\AllInst{\rho}} v$.
($\Longrightarrow$)
We proceed with a proof by contradiction.
Assume there exists $u \in \AllInst{\CTerm{X}{s}{\vec{x}}{\varphi}}_\mathsf{v}$
such that $u \to_{\AllInst{\xR}}^p v$ for some $v$.
Then, by the definition of value instantiation,
$u = s\mu$ for some $X$-valued substitution $\mu$
such that $\Dom(\mu) = X$ and $\mu \vDash_\xM \ECO{\vec{x}}{\varphi}$.
By $u \to_{\AllInst{\xR}}^p v$,
we have $\langle u|_p, v|_p \rangle \in \AllInst{\rho}$ for some $\rho \in \xR$.
However, it follows from \Cref{lem:Normal Forms of Weak Reduction}
that $\CTerm{X}{s}{\vec{x}}{\varphi} \leadsto^p_\rho \CTerm{Y}{t}{\vec{y}}{\psi}$.
\end{proof}

Note that the theorem above can be reformulated as follows.
The following statements are equivalent:
(i) $\CTerm{X}{s}{\vec{x}}{\varphi} \leadsto_\xR \CTerm{Y}{t}{\vec{y}}{\psi}$
for some $\CTerm{Y}{t}{\vec{y}}{\psi}$, and
(ii)
there exists some $u \in \AllInst{\CTerm{X}{s}{\vec{x}}{\varphi}}_\mathsf{v}$
such that $u \to_{\AllInst{\xR}} v$ for some $v$.

\begin{example}
\label{exp:characterization of weak rewrite step III}
The ($\Longrightarrow$)-direction of \Cref{thm:Normal Forms of Weak Reduction}
does not hold if we replace 
the value interpretation $\AllInst{\CTerm{X}{s}{\vec{x}}{\varphi}}_\mathsf{v}$
with the standard interpretation  $\AllInst{\CTerm{X}{s}{\vec{x}}{\varphi}}$.
To see this, let us revisit \Cref{exp:characterization of weak rewrite step II}.
Consider 
the rule $\rho': \CRu{\SET{x}}{\mathsf{f}(x,y)}{y}{0 = x}$
and the existentially constrained term 
$t = \CTerm{\SET{w}}{\mathsf{f}(w,z)}{}{1 \le w \land w \le 5}$.
Then $\mathsf{f}(1,\mathsf{f}(0,z)) \in \AllInst{t}$
and $\mathsf{f}(1,\mathsf{f}(0,z)) \to^2_{\AllInst{\rho'}}  \mathsf{f}(1,z)$
by the rule $\mathsf{f}(\mathsf{0},y) \to y \in \AllInst{\rho'}$.
However, there does not exist an $s$ such that $t \leadsto^2_{\rho'} s$.
Consider 
the rule $\rho'': \Pi \varnothing. \mathsf{f}(\mathsf{a}) \to \mathsf{b}[\mathsf{true}]$
and the constrained term $t' = \CTerm{\varnothing}{\mathsf{f}(x)}{}{\mathsf{true}}$.
Then 
$\mathsf{f}(\mathsf{a}) \in \AllInst{t'}$
and 
$\mathsf{f}(\mathsf{a}) \to^\epsilon_{\AllInst{\rho''}} \mathsf{b}$
by using the rule $\mathsf{f}(\mathsf{a}) \to \mathsf{b} \in \AllInst{\rho''}$.
However, there does not exist an $s'$ such that 
$t' \leadsto^\epsilon_{\rho''} s'$.
\end{example}

Under the view of \Cref{thm:Characterization by Interpreation for Weak Reduction,thm:Characterization by Interpreation for Most General Reduction},
one might conjecture that
if one replaces ``any $u$'' in the statement of \Cref{thm:Normal Forms of Weak Reduction}
by ``there exists $u$'',
then the normal form of constrained terms with respect to most general constrained reduction
can be obtained.
However, this is not the case.
Consider $\mathcal{R}$ in \Cref{exp:strong rewrite step}
and the constrained term $\mathsf{sum}(x)~[0 \le x \land x \le 1]$.
Then $\AllInst{\mathsf{sum}(x)~[0 \le x \land x \le 1]}_\mathsf{v}
= \SET{ \mathsf{sum}(0), \mathsf{sum}(1) }$
and we have $\mathsf{sum}(0) \to_{\AllInst{\xR}} 0$
and $\mathsf{sum}(1) \to_{\AllInst{\xR}} 1 + \mathsf{sum}(1 + -1)$.
However, 
$\mathsf{sum}(x)~[0 \le x \land x \le 1]$
is a normal form with respect to $\to_\xR$.

We are now going to show a variation of \Cref{thm:Normal Forms of Weak Reduction}
for the most general rewrite step, by 
modifying the form of the necessary and sufficient condition of the theorem.
To aim for this, we first present a variant of 
\Cref{lem:Normal Forms of Weak Reduction} for most general rewrite steps, 
which can be shown with a help of
additional condition that the rewrite rules is left-value-free.

\begin{restatable}{lemma}{LemmaInstantiationsAreRedexIsEnoughForRewriteSteps}
\label{lem:instantiations are redex is enough for rewrite steps}
Let $\CTerm{X}{s}{\vec{x}}{\varphi}$ be a satisfiable existentially constrained term
and $\rho$ a left-value-free and left-linear rewrite rule.
Let $p \in \Pos(s)$.
Suppose that for any $u \in \AllInst{\CTerm{X}{s}{\vec{x}}{\varphi}}_\mathsf{v}$
there exists $v$ such that
$u \to^p_{\AllInst{\rho}} v$.
Then 
$\CTerm{X}{s}{\vec{x}}{\varphi}
\to^p_\rho
\CTerm{Y}{t}{\vec{y}}{\psi}$
for some $\CTerm{Y}{t}{\vec{y}}{\psi}$.
\end{restatable}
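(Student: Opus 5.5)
We need to construct a $\rho$-redex of $\CTerm{X}{s}{\vec{x}}{\varphi}$ at $p$, i.e.\ find $\gamma$ with $\Dom(\gamma)=\Var(\ell)$, $s|_p=\ell\gamma$, $\gamma(x)\in\Val\cup X$ for $x\in\Var(\ell)\cap Z$, and $\vDash_\xM (\ECO{\vec{x}}{\varphi})\Rightarrow(\ECO{\vec{z}}{\pi\gamma})$. The step itself then exists by definition of $\R_\rho$.

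First fix one value instance $u_0=s\sigma_0$, which exists by satisfiability and \Cref{lem:simple properties of value instantiation}~\Bfnum{1}. Here $\sigma_0$ maps $X$ to values, renames the remaining variables of $s$ injectively, and respects $\ECO{\vec{x}}{\varphi}$. By hypothesis $u_0|_p=\ell\delta$ for some $\langle\ell\delta,r\delta\rangle\in\AllInst{\rho}$ and some $\delta$ that is valued exactly on $Z\cap\Var(\ell,r)$. We use left-value-freeness and left-linearity: $\ell$ contains no values, so every value in $\ell\delta$ comes from $\delta$ on $Z$-variables. Since $\sigma_0$ replaces variables of $X$ by values and other variables by fresh distinct variables, we compare $s|_p$ and $\ell$ position by position. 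Suppose some position of $\ell$ holding a function symbol corresponds in $s|_p$ to a variable $y$. If $y\notin X$, then $\sigma_0(y)$ is a variable, contradicting that position in $\ell\delta$ holding a symbol of $\ell$. If $y\in X$, then $\sigma_0(y)$ is a value, while $\ell$ has a non-value symbol there, which is again a contradiction. Hence $s|_p=\ell\gamma$ for a unique $\gamma$ with $\Dom(\gamma)=\Var(\ell)$; by linearity $\gamma$ is well defined, and $\ell\gamma\sigma_0=\ell\delta$, i.e.\ $\delta=\gamma\sigma_0$ on $\Var(\ell)$. For $x\in\Var(\ell)\cap Z$, $\delta(x)$ is a value, so $\gamma(x)\sigma_0$ is a value. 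Then $\gamma(x)$ is either a value (a value subterm of $s$) or a variable in $X$, because non-$X$ variables are mapped to variables and any compound term is mapped to a non-value. This gives conditions \Bfnum{1}--\Bfnum{3}. Note that $\gamma$ is determined by $s$ and $\ell$ alone, independently of the chosen instance.

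For condition \Bfnum{4}, let $\rho'$ be a valuation with $\vDash_{\xM,\rho'}\ECO{\vec{x}}{\varphi}$. Define $\sigma$ by $\sigma(y)=\rho'(y)$, as a value, for $y\in X$, and let $\sigma$ rename the other variables of $s$ injectively. Then $u=s\sigma$ is a value instance, since $\FVar(\ECO{\vec{x}}{\varphi})\subseteq X$. By hypothesis $u|_p=\ell\gamma\sigma\to_{\AllInst{\rho}}$ via some $\delta'$ with $\ell\delta'=\ell\gamma\sigma$, hence $\delta'=\gamma\sigma$ on $\Var(\ell)$, and $\delta'\vDash\ECO{\pvec{z}'}{\pi}$. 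The variables of $\pi$ split into those in $\Var(\ell)$, which are $Z$-variables on which $\delta'$ agrees with $\gamma\sigma$ and so with $\rho'$ on $\Var(\pi\gamma)\cap X$; those in $\Var(r)\setminus\Var(\ell)$, which $\delta'$ sends to values; and $\vec{z}'$. Taking the witnesses for $\vec{z}=\Var(\pi)\setminus\Var(\ell)$ to be $\delta'$ on $\Var(r)\setminus\Var(\ell)$ together with the existential witnesses for $\vec{z}'$ yields $\vDash_{\xM,\rho'}\ECO{\vec{z}}{\pi\gamma}$. Since the rule variables are disjoint from $\Var(s,\varphi)$, no clash arises.

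The main obstacle is the structural matching in the first step. We must show that $\gamma$ really exists and satisfies condition \Bfnum{3}, and this is exactly where left-value-freeness is needed: a value in $\ell$ could otherwise be matched only by specific instances of a logical variable. We also need care with the bookkeeping of which variables of $\pi$ are instantiated by $\delta'$ versus existentially quantified.
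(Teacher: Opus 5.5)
Your proposal is correct and follows the paper's proof essentially step by step. As there, you use one value instance, which exists by satisfiability, together with left-value-freeness and left-linearity to obtain the matcher $\gamma$ and the condition $\gamma(\Var(\ell)\cap Z)\subseteq \Val\cup X$; then, for each valuation satisfying $\ECO{\vec{x}}{\varphi}$, you apply the hypothesis to the value instance valued by it on $X$ and read off $\ECO{\vec{z}}{\pi\gamma}$. You also choose the witnesses for $\Var(\pi)\cap(\Var(r)\setminus\Var(\ell))$ explicitly when passing from $\ECO{\pvec{z}'}{\pi}$ to $\ECO{\vec{z}}{\pi\gamma}$, which spells out a step the paper only cites as ``by (3')''.
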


As a side product, now it follows that 
\Cref{lem:Characterization of MG Rewrite Steps}
can be strengthened with the converse direction
for the case that the rewrite rule
is left-value-free.

\begin{restatable}{corollary}{CorollaryCharacterizationOfMGRewriteSteps}
Let $\CTerm{X}{s}{\vec{x}}{\varphi}$ be an existentially constrained term
and $\rho$ a left-linear and left-value-free rewrite rule.
Let $p \in \Pos(s)$.
Then the following statements are equivalent:
\begin{enumerate}
\item
for any $u \in \AllInst{\CTerm{X}{s}{\vec{x}}{\varphi}}$
there exists $v$ such that
$u \to^p_{\AllInst{\rho}} v$,
and
\item 
$\CTerm{X}{s}{\vec{x}}{\varphi}
\to^p_\rho
\CTerm{Y}{t}{\vec{y}}{\psi}$
for some  $\CTerm{Y}{t}{\vec{y}}{\psi}$.
\end{enumerate}
\end{restatable}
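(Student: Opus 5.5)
The corollary is an equivalence, and we treat the two directions separately; each reduces to a lemma already established.

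\emph{(2) $\Rightarrow$ (1).} Suppose $\CTerm{X}{s}{\vec{x}}{\varphi} \to^p_\rho \CTerm{Y}{t}{\vec{y}}{\psi}$. Then \Cref{lem:Characterization of MG Rewrite Steps} applies verbatim: for every $u \in \AllInst{\CTerm{X}{s}{\vec{x}}{\varphi}}$ there is some $v$ with $u \to^p_{\AllInst{\rho}} v$. This direction needs neither left-value-freeness nor satisfiability beyond what the definition of a most general step already builds in.

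\emph{(1) $\Rightarrow$ (2).} We want to apply \Cref{lem:instantiations are redex is enough for rewrite steps}, whose hypothesis quantifies over value instances rather than all instances. By \Cref{lem:simple properties of value instantiation}~\Bfnum{3}, $\AllInst{\CTerm{X}{s}{\vec{x}}{\varphi}}_\mathsf{v} \subseteq \AllInst{\CTerm{X}{s}{\vec{x}}{\varphi}}$. Hence (1) gives, for every value instance $u$, some $v$ with $u \to^p_{\AllInst{\rho}} v$, which is exactly the hypothesis of \Cref{lem:instantiations are redex is enough for rewrite steps}.

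That lemma, however, also requires $\CTerm{X}{s}{\vec{x}}{\varphi}$ to be satisfiable, and the corollary does not assume this. This is the only real obstacle, so we handle it separately.

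If the constrained term is satisfiable, \Cref{lem:instantiations are redex is enough for rewrite steps}, together with the left-linearity and left-value-freeness of $\rho$, yields the required step $\CTerm{X}{s}{\vec{x}}{\varphi} \to^p_\rho \CTerm{Y}{t}{\vec{y}}{\psi}$.

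If it is unsatisfiable, then $\AllInst{\CTerm{X}{s}{\vec{x}}{\varphi}} = \varnothing$ by \Cref{lem:immediate properties of interpretation}~\Bfnum{1}, so (1) holds vacuously. On the other hand, most general rewrite steps are defined only from satisfiable constrained terms, so (2) fails. The equivalence therefore holds only under a satisfiability assumption, which is implicit in the surrounding development since rewriting is defined only on satisfiable terms. We read the corollary with $\CTerm{X}{s}{\vec{x}}{\varphi}$ satisfiable, as in \Cref{lem:instantiations are redex is enough for rewrite steps}, and note this restriction explicitly in the proof.
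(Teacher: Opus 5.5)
Your proof is correct and matches the paper's own argument: (2)$\Rightarrow$(1) by \Cref{lem:Characterization of MG Rewrite Steps}, and (1)$\Rightarrow$(2) by restricting to value instances via \Cref{lem:simple properties of value instantiation}~\Bfnum{3} and then applying \Cref{lem:instantiations are redex is enough for rewrite steps}. Your satisfiability remark is also right: the paper's proof silently uses the satisfiability hypothesis of \Cref{lem:instantiations are redex is enough for rewrite steps}, and for an unsatisfiable constrained term (1) holds vacuously while (2) cannot hold, so the corollary should be read (or stated) for satisfiable terms, as you do.
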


\begin{proof}
(\Bfnum{2} $\Longrightarrow$ \Bfnum{1}) follows by \Cref{lem:Characterization of MG Rewrite Steps}.
(\Bfnum{1} $\Longrightarrow$ \Bfnum{2})
follows
from 
\Cref{lem:instantiations are redex is enough for rewrite steps}
and 
\Cref{lem:simple properties of value instantiation}~\Bfnum{3}.
\end{proof}

Now we arrive at 
a characterization of normal forms with respect
to most general constrained reduction
in terms of value interpretation.

\begin{restatable}[Normal Forms of Most General Constrained Reduction]{theorem}{TheoremNormalFormsOfMostGeneralReduction}
\label{thm:Normal Forms of Most General Reduction}
Suppose $\xR$ is a left-linear and left-value-free LCTRS.
Let $\CTerm{X}{s}{\vec{x}}{\varphi}$ be a satisfiable existentially constrained term.
Then, 
$\CTerm{X}{s}{\vec{x}}{\varphi}$ is a normal form with respect to 
$\to_{\xR}$
iff
for any $\rho \in \xR$ and position $p \in \Pos(s)$,
there exists $u \in \AllInst{\CTerm{X}{s}{\vec{x}}{\varphi}}_\mathsf{v}$
such that $u \to^p_{\AllInst{\rho}} v$ for no $v$.
\end{restatable}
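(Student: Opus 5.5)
The plan is to prove both directions by contraposition. Each direction reduces to one of the two results linking value instances to most general rewrite steps: \Cref{lem:instantiations are redex is enough for rewrite steps} in one direction, and \Cref{lem:Characterization of MG Rewrite Steps} combined with \Cref{lem:simple properties of value instantiation}~\Bfnum{3} in the other. Note that being a normal form with respect to $\to_\xR$ means: for no $\rho\in\xR$ and no $p\in\Pos(s)$ is there a step $\CTerm{X}{s}{\vec{x}}{\varphi}\to^p_\rho\CTerm{Y}{t}{\vec{y}}{\psi}$.

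($\Longrightarrow$) Assume the condition on the right fails. Then there are $\rho\in\xR$ and $p\in\Pos(s)$ such that every $u\in\AllInst{\CTerm{X}{s}{\vec{x}}{\varphi}}_\mathsf{v}$ admits some $v$ with $u\to^p_{\AllInst{\rho}}v$. The term $\CTerm{X}{s}{\vec{x}}{\varphi}$ is satisfiable by hypothesis, and $\rho$ is left-linear and left-value-free because $\xR$ is. Hence \Cref{lem:instantiations are redex is enough for rewrite steps} applies and gives a most general step $\CTerm{X}{s}{\vec{x}}{\varphi}\to^p_\rho\CTerm{Y}{t}{\vec{y}}{\psi}$, so $\CTerm{X}{s}{\vec{x}}{\varphi}$ is not a normal form.

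($\Longleftarrow$) Suppose $\CTerm{X}{s}{\vec{x}}{\varphi}\to^p_\rho\CTerm{Y}{t}{\vec{y}}{\psi}$ for some $\rho\in\xR$ and $p$. Then $p\in\Pos(s)$, since redex positions lie in $s$. By \Cref{lem:Characterization of MG Rewrite Steps}, every $u\in\AllInst{\CTerm{X}{s}{\vec{x}}{\varphi}}$ satisfies $u\to^p_{\AllInst{\rho}}v$ for some $v$. By \Cref{lem:simple properties of value instantiation}~\Bfnum{3}, every value instance is an instance, so this holds in particular for all of $\AllInst{\CTerm{X}{s}{\vec{x}}{\varphi}}_\mathsf{v}$. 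Thus, for this particular pair $(\rho,p)$, no value instance is irreducible at $p$, which contradicts the right-hand condition.

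One point needs care. The rule in the redex is a renamed variant of some rule of $\xR$, and $\AllInst{\rho}$ is invariant under renaming of $\rho$ only up to renaming of its non-logical variables. Since $\to_{\AllInst{\rho}}$ instantiates rules anyway, the relation $u\to^p_{\AllInst{\rho}}v$ is unaffected by this renaming, so it makes no difference whether $\rho$ or its variant is used. The theorem itself has no real obstacle: the substantive work sits in \Cref{lem:instantiations are redex is enough for rewrite steps}, which is the only place where left-value-freeness and satisfiability are used.
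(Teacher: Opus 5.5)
Your proof is correct and follows the paper's argument essentially verbatim: both directions go by contraposition, with \Cref{lem:instantiations are redex is enough for rewrite steps} giving ($\Longrightarrow$), and \Cref{lem:Characterization of MG Rewrite Steps} together with \Cref{lem:simple properties of value instantiation}~\Bfnum{3} giving ($\Longleftarrow$). Your remark that $\to^p_{\AllInst{\rho}}$ is unaffected by passing to a renamed variant of $\rho$ makes explicit a point that the paper leaves implicit.
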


\begin{proof}
($\Longleftarrow$)
We show the contraposition.
Suppose $\CTerm{X}{s}{\vec{x}}{\varphi} \to_{\xR} \CTerm{Y}{t}{\vec{y}}{\psi}$.
Then,
$\CTerm{X}{s}{\vec{x}}{\varphi} \to^p_\rho \CTerm{Y}{t}{\vec{y}}{\psi}$
for some $p \in \Pos_\xF(s)$ and $\rho \in \xR$.
Let $u \in \AllInst{\CTerm{X}{s}{\vec{x}}{\varphi}}_\mathsf{v}$.
Then, as $\AllInst{\CTerm{X}{s}{\vec{x}}{\varphi}}_\mathsf{v}
\subseteq \AllInst{\CTerm{X}{s}{\vec{x}}{\varphi}}$
from \Cref{lem:simple properties of value instantiation}~\Bfnum{3},
we know $u \in \AllInst{\CTerm{X}{s}{\vec{x}}{\varphi}}$.
By \Cref{lem:Characterization of MG Rewrite Steps},
it follows from $\CTerm{X}{s}{\vec{x}}{\varphi} \to^p_\rho \CTerm{Y}{t}{\vec{y}}{\psi}$
that there exists $v$ such that $u \to^p_{\AllInst{\rho}} v$.
($\Longrightarrow$)
We prove the contraposition.
Suppose there exist $\rho \in \xR$ and $p \in \Pos(s)$
such that
for any $u \in \AllInst{\CTerm{X}{s}{\vec{x}}{\varphi}}_\mathsf{v}$
$u \to^p_{\AllInst{\rho}} v$ for some $v$.
Then by \Cref{lem:instantiations are redex is enough for rewrite steps},
$\CTerm{X}{s}{\vec{x}}{\varphi} \to^p_\rho \CTerm{Y}{t}{\vec{y}}{\psi}$
for some $\CTerm{Y}{t}{\vec{y}}{\psi}$.
\end{proof}

Note that the theorem above can be reformulated as follows.
The following statements are equivalent:
(i) $\CTerm{X}{s}{\vec{x}}{\varphi} \to_{\xR} \CTerm{Y}{t}{\vec{y}}{\psi}$
for some $\CTerm{Y}{t}{\vec{y}}{\psi}$, and
(ii) there exists $\rho \in \xR$ and position $p \in \Pos(s)$
such that for any $u \in \AllInst{\CTerm{X}{s}{\vec{x}}{\varphi}}_\mathsf{v}$
there exists $v$ such that $u \to^p_{\AllInst{\rho}} v$.
In contrast to the characterization of
partial constrained reduction (\Cref{thm:Normal Forms of Weak Reduction}),
our characterization of most general constrained reduction
refers to 
reducibility of all $u \in \AllInst{\CTerm{X}{s}{\vec{x}}{\varphi}}_\mathsf{v}$
with respect to a fixed position and a fixed rewrite rule.


\begin{example}
\label{exp:strong rewrite step II}
Let us recall the LCTRS $\mathcal{R}$ in \Cref{exp:strong rewrite step}:
\[
\mathcal{R} = 
\left\{
\begin{array}{l@{\>}c@{\>}ll}
\rho_1: \Pi \SET{x}.~\mathsf{sum}(x) & \to&  0 &  [0 \ge x ]\\
\rho_2: \Pi \SET{x}.~\mathsf{sum}(x) &\to&  x + \mathsf{sum}(x + -1) &   [ 0 < x ]
\end{array}
\right
\}
\]
Consider the rewrite steps starting from the constrained term
$\CTerm{\SET{x}}{\mathsf{sum}(x)}{}{0 \le x \land x \le 4}$.
Note that we have
$\AllInst{\CTerm{\SET{x}}{\mathsf{sum}(x)}{}{0 \le x \land x \le 4}}_\mathsf{v}
= \SET{ \mathsf{sum}(0),\ldots, \mathsf{sum}(4) }$,
$\AllInst{\rho_1} = \SET{ \mathsf{sum}(n)  \to  0 \mid 0 \ge n \in \mathbb{Z}}$,
and 
$\AllInst{\rho_2} = \SET{ \mathsf{sum}(n)  \to n +  \mathsf{sum}(n + -1) 
\mid 0 < n \in \mathbb{Z} }$.
We further have 
$\CTerm{\SET{x}}{\mathsf{sum}(x)}{}{0 \le x \land x \le 4}
\leadsto^\epsilon_{\rho_1}
\CTerm{\SET{x}}{0}{}{0 \le x \land x \le 4 \land 0 \ge x}$,
and
$\CTerm{\SET{x}}{\mathsf{sum}(x)}{}{0 \le x \land x \le 4}
\leadsto^\epsilon_{\rho_2}
\CTerm{\SET{x}}{x + \mathsf{sum}(x + -1)}{}{0 \le x \land x \le 4 \land 0 < x}$.
By \Cref{thm:Normal Forms of Weak Reduction},
the former is characterized by 
$\mathsf{sum}(0) \to^\epsilon_{\AllInst{\rho}} 0$
and the latter by 
$\mathsf{sum}(4) \to^\epsilon_{\AllInst{\rho}} 4 + \mathsf{sum}(4 + -1)$.
Also none of the TRSs $\AllInst{\rho_1},\AllInst{\rho_2}$ can 
reduce all of the terms $\mathsf{sum}(0),\ldots, \mathsf{sum}(4)$.
By \Cref{thm:Normal Forms of Most General Reduction},
we obtain that $\CTerm{\SET{x}}{\mathsf{sum}(x)}{}{0 \le x \land x \le 4}$
is a normal form with respect to $\to_\mathcal{R}$.
\end{example}

In~\cite[Definition~1]{SM25},
a notion of ``normal form'' for constrained terms
is defined by the condition that all their standard instantiations are
normal with respect to rewriting (non-constrained) terms.
Let us call a constrained term $\CTerm{X}{s}{\vec{x}}{\varphi}$ 
\emph{instantiation-normal} (with respect to an LCTRS $\xR$)
if $u$ is normal with respect to the rewrite step $\to_\xR$ for any (non-constrained) terms
$u \in \AllInst{\CTerm{X}{s}{\vec{x}}{\varphi}}$.
From this we construct the following claim.

\begin{restatable}{proposition}{PropositionIntantiationNormalImpliesNormalwrtPartialRewriting}
\label{prop: instantiation-normal implies normal w.r.t. partial rewriting}
If $\CTerm{X}{s}{\vec{x}}{\varphi}$ is instantiation-normal,
then $\CTerm{X}{s}{\vec{x}}{\varphi}$ is a normal form with respect to $\leadsto_{\xR}$
(and hence, it is also a normal form with respect to $\to_{\xR}$).
\end{restatable}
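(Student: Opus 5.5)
We argue by contraposition. Suppose $\CTerm{X}{s}{\vec{x}}{\varphi}$ is not a normal form with respect to $\leadsto_\xR$, so $\CTerm{X}{s}{\vec{x}}{\varphi} \leadsto^p_\rho \CTerm{Y}{t}{\vec{y}}{\psi}$ for some $\rho \in \xR$, position $p$ and constrained term $\CTerm{Y}{t}{\vec{y}}{\psi}$. We must exhibit some $u \in \AllInst{\CTerm{X}{s}{\vec{x}}{\varphi}}$ that is reducible with respect to the (non-constrained) rewrite relation $\to_\xR$.

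The first step is to invoke \Cref{lem:Characterization of Weak Rewrite Steps}~\Bfnum{1}. This yields $u \in \AllInst{\CTerm{X}{s}{\vec{x}}{\varphi}}$ and $v$ with $u \to^p_{\AllInst{\rho}} v$. Alternatively, one can use \Cref{lem:Characterization of Weak Rewrite Steps by Value Interpretation}~\Bfnum{1} together with \Cref{lem:simple properties of value instantiation}~\Bfnum{3}; either route produces such a $u$ in the standard interpretation.

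The second step is to relate $\to_{\AllInst{\rho}}$ to the ordinary rewrite step $\to_\xR$ on terms used in the definition of instantiation-normality. By definition, $\langle \ell\sigma, r\sigma\rangle \in \AllInst{\rho}$ requires $\sigma$ to be value-valued on $Z \cap \Var(\ell,r)$ and $\sigma \vDash_\xM \ECO{\pvec{z}'}{\pi}$. Then $u|_p = \ell\sigma\theta$ for some substitution $\theta$. We extend $\sigma\theta$ by values witnessing the existential variables $\pvec{z}'$, which gives a substitution that instantiates all variables of $\pi$ by values and satisfies $\pi$. This is exactly the condition for a (standard LCTRS) rewrite step on terms by $\rho$ at $p$, so $u \to_\xR u[r\sigma\theta]_p$. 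Hence $u$ is not normal, and $\CTerm{X}{s}{\vec{x}}{\varphi}$ is not instantiation-normal.

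The parenthetical claim then follows from \Cref{thm:most general rewrite step implies weak rewrite step}: any $\to_\xR$-step on constrained terms is also a $\leadsto_\xR$-step, so a $\leadsto_\xR$-normal form is also a $\to_\xR$-normal form.

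The main obstacle is the second step. The paper does not spell out the term-level rewrite relation $\to_\xR$ of an LCTRS, so the argument must fix the standard definition (as in~\cite{KN13frocos}): $\ell\gamma \to r\gamma$ whenever $\gamma$ maps $Z$ to values and $\vDash_\xM \pi\gamma$. The essential check is that every pair in $\AllInst{\rho}$, closed under contexts and substitutions, is contained in this relation. The subtle points are the existential witnesses for $\pvec{z}'$ and the fact that the instantiation $\theta$ must not disturb the value-valued part of $\sigma$. The latter holds because values are ground constants.
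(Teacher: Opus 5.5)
Your proposal is correct and matches the paper's proof: contraposition, \Cref{lem:Characterization of Weak Rewrite Steps}~\Bfnum{1} to obtain an instance $u$ with $u \to^p_{\AllInst{\rho}} v$, and then the passage from $\to_{\AllInst{\rho}}$ to the term-level step $\to_\rho$. The paper simply asserts this correspondence, whereas you justify the needed direction by adding value witnesses for $\pvec{z}'$. You also derive the parenthetical claim explicitly from \Cref{thm:most general rewrite step implies weak rewrite step}, which the paper leaves implicit.
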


The converse of 
\Cref{prop: instantiation-normal implies normal w.r.t. partial rewriting}
does not hold.
Consider $\rho: \mathsf{f}(\mathsf{a}) \to \mathsf{b}~[\mathsf{true}]$.
Then $\CTerm{\varnothing}{\mathsf{f}(x)}{}{\mathsf{true}}$
is a normal form with respect to $\leadsto_{\xR}$.
However, 
$\mathsf{f}(\mathsf{a}) \in
\AllInst{\CTerm{\varnothing}{\mathsf{f}(x)}{}{\mathsf{true}}}$
and for this we have
$\mathsf{f}(\mathsf{a}) \to_{\xR} \mathsf{b}$.
Hence $\CTerm{\varnothing}{\mathsf{f}(x)}{}{\mathsf{true}}$ is not
instantiation-normal.

\section{Related Work}
\label{sec:related-work}

Partial constrained reduction implicitly appears in various forms in 
the applications and the analyses of LCTRSs, and also in
a related formalism of rewriting with built-in data structures. 

\subparagraph*{Critical Pair Analysis}

Critical pair analysis is a standard technique for checking confluence of
LCTRSs. Many approaches rely on closure conditions for constrained critical
pairs. Splitting a constrained critical pair~\cite{SM25} partitions its set of instances
into subsets, each of which may admit a rewrite step. The goal is that, after
splitting, a concrete rewrite step becomes applicable and satisfies the closing
condition, which corresponds to a partial rewrite step on the original constrained
critical pair. Partial rewrite steps can also be used to determine whether such a
split is possible, as exploited to some extent in tools like \crest{}~\cite{SM25}.
Partial rewrite steps can also be used to detect non-confluence in LCTRSs, since it
suffices that an instance of a critical pair diverges. In this case, explicit
splitting is not required: a partial step could directly expose the subset of
instances responsible for divergence. Tools like \crest{} exploit this principle
to efficiently find counterexamples without actual partitioning.

\subparagraph*{Constrained Narrowing}

Narrowing is a symbolic computation that extends rewriting;
unlike rewriting, it may involve a minimal instantiation of the target term by
replacing matching with unification.
The notion of narrowing has been incorporated for constrained terms.
``Constrained Narrowing''~\cite{KN24jip} narrows both, 
the term part and the constraint part,
by instantiation for the former and by addition of constraints to the latter. 
More precisely, to apply a
constrained rewrite rule to a constrained term, variables of the
constrained term are minimally instantiated, 
and the constraint of the rule is added to the 
constraint of the constrained term. Then the rewrite rule is applied.
The constraint on the term is not required to imply that of the rewrite
rule, and its satisfiability is determined through a satisfiability check of the
result. Thus, constrained narrowing can be considered as ``Narrowing'' +
``Partial Rewriting''.


\subparagraph*{Constrained Rewriting Induction}


An equation is said to be an inductive theorem of an equational theory 
when it is valid in the initial algebra of the theory; 
in other words, when all ground instantiations of the equation
are equational consequences of the theory (e.g.,~\cite{HuetOppen}).
Rewriting induction, that has been initiated in~\cite{ReddyRI}, is an 
approach to verify inductive theorems based on rewriting techniques.
The notion of inductive theorems and the rewriting induction approach
have been incorporated to the LCTRS formalism in~\cite{FKN17tocl,KN15},
where the latter has been called ``Constrained Rewriting Induction.''
A key inference rule in rewriting induction is the rule Expand, which
consists of a collection of narrowing steps applied at an appropriate position
in an equation using all rules applicable at that position. 
In constrained rewriting induction, instead of using only pure narrowing
steps, a combination of narrowing steps and partial constrained rewrite steps is
employed.
This approach collects the reducts of all possible rewrite steps for
minimal instantiations, while potentially adding constraints to the constrained
equation at an appropriate position.
In addition, although not covered in~\cite{FKN17tocl},
partial constrained reduction may be useful to 
strengthen the Disprove rule for refuting a conjecture in constrained
rewriting induction, similarly to the part on critical pair analysis above.

\subparagraph*{Rewriting Modulo SMT}
``Rewriting Modulo SMT''~\cite{RMM17} is a sophisticated and intricate
rewriting formalism (in a broad sense) that is based on a framework and
terminology different from that of LCTRSs. Nevertheless, it shares key features
with LCTRSs, such as the ability to handle built-in data structures and the use
of constrained rewrite rules and rewriting on constrained terms.
Although precise comparisons between the two formalisms are beyond the scope
of this paper, roughly speaking, this formalism is designed to model state
transition systems for more application-oriented specifications. In contrast to
LCTRSs, it includes extended features such as order-sorted signatures and
rewriting modulo equations, as well as restrictions like rewriting only at the
root position, focusing on initial models, and a hierarchical signature for term
sorts over theory sorts.
We would like to highlight two noteworthy correspondences
between results in~\cite{RMM17} and ours.
First, the basic rewrite relation (referred to as the symbolic rewrite relation in~\cite{RMM17})
corresponds more closely to partial constrained rewriting
rather than to most general constrained rewriting.
Second, papers establish soundness and completeness
results with respect to the standard interpretations.
Readers may compare the
soundness results (\Cref{lem:Characterization of Weak Rewrite Steps,lem:Characterization of Weak Rewrite Steps by Value Interpretation}
versus \cite[Theorem~1]{RMM17})
and 
completeness results (\Cref{lem:Normal Forms of Weak Reduction,exp:characterization of weak rewrite step III}
versus \cite[Theorem~2]{RMM17}).
Overall, our results help to clarify the differences 
between the rewriting modulo SMT and LCTRS formalisms and highlight
the respective advantages of each.


\section{Conclusion}
\label{sec:conclusion}

In this paper, we have introduced a novel notion of partial constrained
rewriting for existentially constrained terms. The key difference between the
definition of most general constrained rewrite steps and that of partial
constrained rewrite steps lies in the definition of redex---validity is 
needed for most general rewriting while satisfiability
suffices for partial rewriting. We have shown that most general rewrite
steps are partial, while partial rewrite steps can be simulated by the
composition of subsumption, most general rewrite steps and equivalence. We have
further developed rigorous characterizations of partial and most general
constrained reductions through value and standard interpretations. We have
shown that while the standard interpretation captures some behavior of partial
and most general rewrite steps, it is insufficient to distinguish partial
rewrite steps from most general rewrite steps. 
By introducing the notion of value interpretation, we fully describe normal
forms of both types of constrained reductions, including successfully
highlighting subtle differences between partial and most general constrained
reductions. 
We have also revisited immature ideas of partial rewriting that appeared in the literature;
we anticipate that our
results not only clarify the theoretical structure of logically constrained
rewriting but also lay the groundwork for further exploration of 
(non-)reachability, (non-)joinability and (non-)convertibility
verification on logically constrained terms,
that appear in many applications and analyses of LCTRSs.

We focus in this paper on constrained rewriting with left-linear constrained rewrite rules.
Despite some complications, we anticipate that one can extend
partial constrained rewriting to non-left-linear rules.
We leave this extension as future work. Another question is,
whether partial constrained rewriting and equivalence commute.
We expect this to be the case, but solving this problem
also remains as future work.

\bibliography{biblio}

\begin{thebibliography}{10}

\bibitem{ANS24}
Takahito Aoto, Naoki Nishida, and Jonas Sch{\"o}pf.
\newblock Equational theories and validity for logically constrained term
  rewriting.
\newblock In Jakob Rehof, editor, {\em Proceedings of the 9th International
  Conference on Formal Structures for Computation and Deduction}, volume 299 of
  {\em Leibniz International Proceedings in Informatics}, pages 31:1--31:21,
  Dagstuhl, Germany, 2024. Schloss Dagstuhl -- Leibniz-Zentrum f{\"u}r
  Informatik.
\newblock \href {https://doi.org/10.4230/LIPIcs.FSCD.2024.31}
  {\path{doi:10.4230/LIPIcs.FSCD.2024.31}}.

\bibitem{ANS26FSCD-arxiv}
Takahito Aoto, Naoki Nishida, and Jonas Sch{\"o}pf.
\newblock Partial rewriting and value interpretation of logically constrained
  terms (full version).
\newblock {\em CoRR}, abs/2601.22191, 2026.
\newblock \href {https://doi.org/10.48550/arXiv.2601.22191}
  {\path{doi:10.48550/arXiv.2601.22191}}.

\bibitem{BN98}
Franz Baader and Tobias Nipkow.
\newblock {\em Term Rewriting and All That}.
\newblock Cambridge University Press, 1998.
\newblock \href {https://doi.org/10.1145/505863.505888}
  {\path{doi:10.1145/505863.505888}}.

\bibitem{FKN17tocl}
Carsten Fuhs, Cynthia Kop, and Naoki Nishida.
\newblock Verifying procedural programs via constrained rewriting induction.
\newblock {\em ACM Transactions on Computational Logic}, 18(2):14:1--14:50,
  2017.
\newblock \href {https://doi.org/10.1145/3060143} {\path{doi:10.1145/3060143}}.

\bibitem{HuetOppen}
G.~Huet and D.~C. Oppen.
\newblock Equations and rewrite rules: a survey.
\newblock Technical report, Stanford University, Stanford, CA, USA, 1980.

\bibitem{KN24jip}
Misaki Kojima and Naoki Nishida.
\newblock A sufficient condition of logically constrained term rewrite systems
  for decidability of all-path reachability problems with constant
  destinations.
\newblock {\em Journal of Information Processing}, 32:417--435, 2024.
\newblock \href {https://doi.org/10.2197/IPSJJIP.32.417}
  {\path{doi:10.2197/IPSJJIP.32.417}}.

\bibitem{Kop13}
Cynthia Kop.
\newblock Termination of {LCTRSs}.
\newblock In {\em Proceedings of the 13th Workshop on Termination}, pages 1--5,
  2013.
\newblock URL: \url{https://doi.org/10.48550/arXiv.1601.03206}.

\bibitem{KN13frocos}
Cynthia Kop and Naoki Nishida.
\newblock Term rewriting with logical constraints.
\newblock In Pascal Fontaine, Christophe Ringeissen, and Renate~A. Schmidt,
  editors, {\em Proceedings of the 9th International Symposium on Frontiers of
  Combining Systems}, volume 8152 of {\em Lecture Notes in Computer Science},
  pages 343--358, Berlin, Heidelberg, 2013. Springer Berlin Heidelberg.
\newblock \href {https://doi.org/10.1007/978-3-642-40885-4_24}
  {\path{doi:10.1007/978-3-642-40885-4_24}}.

\bibitem{KN14aplas}
Cynthia Kop and Naoki Nishida.
\newblock Automatic constrained rewriting induction towards verifying
  procedural programs.
\newblock In Jacques Garrigue, editor, {\em Proceedings of the 12th Asian
  Symposium on Programming Languages and Systems}, volume 8858 of {\em Lecture
  Notes in Computer Science}, pages 334--353. Springer, 2014.
\newblock \href {https://doi.org/10.1007/978-3-319-12736-1_18}
  {\path{doi:10.1007/978-3-319-12736-1_18}}.

\bibitem{KN15}
Cynthia Kop and Naoki Nishida.
\newblock {C}ons{T}rained {R}ewriting too{L}.
\newblock In Martin Davis, Ansgar Fehnker, Annabelle McIver, and Andrei
  Voronkov, editors, {\em Proceedings of the 20th International Conference on
  Logic for Programming, Artificial Intelligence, and Reasoning}, volume 9450
  of {\em Lecture Notes in Computer Science}, pages 549--557, Berlin,
  Heidelberg, 2015. Springer Berlin Heidelberg.
\newblock \href {https://doi.org/10.1007/978-3-662-48899-7_38}
  {\path{doi:10.1007/978-3-662-48899-7_38}}.

\bibitem{NW18}
Naoki Nishida and Sarah Winkler.
\newblock Loop detection by logically constrained term rewriting.
\newblock In Ruzica Piskac and Philipp R{\"u}mmer, editors, {\em Proceedings of
  the 10th International Conference on Verified Software. Theories, Tools, and
  Experiments}, volume 11294 of {\em Lecture Notes in Computer Science}, pages
  309--321, Cham, 2018. Springer International Publishing.
\newblock \href {https://doi.org/10.1007/978-3-030-03592-1_18}
  {\path{doi:10.1007/978-3-030-03592-1_18}}.

\bibitem{ReddyRI}
Uday~S. Reddy.
\newblock Term rewriting induction.
\newblock In Mark~E. Stickel, editor, {\em Proceedings of the 10th
  International Conference on Automated Deduction}, volume 449 of {\em Lecture
  Notes in Computer Science}, pages 162--177. Springer-Verlag, 1990.
\newblock \href {https://doi.org/10.1007/3-540-52885-7_86}
  {\path{doi:10.1007/3-540-52885-7_86}}.

\bibitem{RMM17}
Camilo Rocha, Jos{\'{e}} Meseguer, and C{\'{e}}sar~A. Mu{\~{n}}oz.
\newblock Rewriting modulo {SMT} and open system analysis.
\newblock {\em Journal of Logic and Algebraic Programming}, 86(1):269--297,
  2017.
\newblock \href {https://doi.org/10.1016/J.JLAMP.2016.10.001}
  {\path{doi:10.1016/J.JLAMP.2016.10.001}}.

\bibitem{SM23}
Jonas Sch{\"o}pf and Aart Middeldorp.
\newblock Confluence criteria for logically constrained rewrite systems.
\newblock In Brigitte Pientka and Cesare Tinelli, editors, {\em Proceedings of
  the 29th International Conference on Automated Deduction}, volume 14132 of
  {\em Lecture Notes in Artificial Intelligence}, pages 474--490, Cham, 2023.
  Springer Nature Switzerland.
\newblock \href {https://doi.org/10.1007/978-3-031-38499-8_27}
  {\path{doi:10.1007/978-3-031-38499-8_27}}.

\bibitem{SM25}
Jonas Sch{\"o}pf and Aart Middeldorp.
\newblock Automated analysis of logically constrained rewrite systems using
  crest.
\newblock In Arie Gurfinkel and Marijn Heule, editors, {\em Proceedings of the
  31st International Symposium on Principles and Practice of Declarative
  Programming}, volume 15696 of {\em Lecture Notes in Computer Science}, pages
  124--144, Cham, 2025. Springer Nature Switzerland.
\newblock \href {https://doi.org/10.1007/978-3-031-90643-5_7}
  {\path{doi:10.1007/978-3-031-90643-5_7}}.

\bibitem{SMM24}
Jonas Sch{\"o}pf, Fabian Mitterwallner, and Aart Middeldorp.
\newblock Confluence of logically constrained rewrite systems revisited.
\newblock In Christoph Benzmüller, Marijn~J.H. Heule, and Renate~A. Schmidt,
  editors, {\em Proceedings of the 12th International Joint Conference on
  Automated Reasoning}, volume 14740 of {\em Lecture Notes in Artificial
  Intelligence}, pages 298--316, Cham, 2024. Springer Nature Switzerland.
\newblock \href {https://doi.org/10.1007/978-3-031-63501-4_16}
  {\path{doi:10.1007/978-3-031-63501-4_16}}.

\bibitem{TSNA25LOPSTR}
Kanta Takahata, Jonas Sch{\"o}pf, Naoki Nishida, and Takahito Aoto.
\newblock Characterizing equivalence of logically constrained terms via
  existentially constrained terms.
\newblock In Santiago Escobar and Laura Titolo, editors, {\em Proceedings of
  the 35th International Symposium on Logic-Based Program Synthesis and
  Transformation}, volume 16117 of {\em Lecture Notes in Computer Science},
  pages 180--195. Springer, 2025.
\newblock \href {https://doi.org/10.1007/978-3-032-04848-6\_12}
  {\path{doi:10.1007/978-3-032-04848-6\_12}}.

\bibitem{TSNA25-PPDP}
Kanta Takahata, Jonas Sch{\"{o}}pf, Naoki Nishida, and Takahito Aoto.
\newblock Recovering commutation of logically constrained rewriting and
  equivalence transformations.
\newblock In Carlos~Olarte Ma{\l{}}gorzata~Biernacka, editor, {\em Proceedings
  of the 27th International Symposium on Principles and Practice of Declarative
  Programming}, Association for Computing Machinery, pages 9:1--9:13, 2025.
\newblock \href {https://doi.org/10.1145/3756907.3756916}
  {\path{doi:10.1145/3756907.3756916}}.

\bibitem{WM18}
Sarah Winkler and Aart Middeldorp.
\newblock Completion for logically constrained rewriting.
\newblock In H\'{e}l\`{e}ne Kirchner, editor, {\em Proceedings of the 3rd
  International Conference on Formal Structures for Computation and Deduction},
  volume 108 of {\em Leibniz International Proceedings in Informatics}, pages
  30:1--30:18, Dagstuhl, Germany, 2018. Schloss Dagstuhl -- Leibniz-Zentrum
  f{\"u}r Informatik.
\newblock \href {https://doi.org/10.4230/LIPIcs.FSCD.2018.30}
  {\path{doi:10.4230/LIPIcs.FSCD.2018.30}}.

\bibitem{WM21}
Sarah Winkler and Georg Moser.
\newblock Runtime complexity analysis of logically constrained rewriting.
\newblock In Maribel Fern{\'a}ndez, editor, {\em Proceedings of the 30th
  International Symposium on Logic-Based Program Synthesis and Transformation},
  volume 12561 of {\em Lecture Notes in Computer Science}, pages 37--55.
  Springer, 2021.
\newblock \href {https://doi.org/10.1007/978-3-030-68446-4_2}
  {\path{doi:10.1007/978-3-030-68446-4_2}}.

\end{thebibliography}

\newpage

\appendix

\section{Omitted Proofs}

\LemmaRedexIsAWeakRedex*
\begin{proof}
Suppose $\CTerm{X}{s}{\vec{x}}{\varphi}$ has a $\rho$-redex at position $p \in
\Pos(s)$ using substitution $\gamma$. Clearly, it suffices to show that
item~\Bfnum{4} in \Cref{def:weak-rho-redex} follows. Since
$\CTerm{X}{s}{\vec{x}}{\varphi}$ is satisfiable, $\ECO{\vec{x}}{\varphi}$ is
satisfiable. Thus, there exists a valuation $\rho$ such that $\vDash_{\xM,\rho}
\ECO{\vec{x}}{\varphi}$. By our assumption, $\xM \vDash_\rho
(\ECO{\vec{x}}{\varphi}) \Rightarrow (\ECO{\vec{z}}{\pi\gamma})$. Thus, 
$\vDash_{\xM,\rho} \ECO{\vec{z}}{\pi\gamma}$. Hence $\xM \vDash_\rho
(\ECO{\vec{x}}{\varphi}) \land (\ECO{\vec{z}}{\pi\gamma})$ from which it follows
that $(\ECO{\vec{x}}{\varphi}) \land (\ECO{\vec{z}}{\pi\gamma})$ is satisfiable.
\end{proof}

\TheoremMostGeneralRewriteStepImpliesWeakRewriteStep*
\begin{proof}
Suppose $\CTerm{X}{s}{\vec{x}}{\varphi} \to^p_{\rho,\gamma} \CTerm{Y}{t}{\vec{y}}{\psi}$.
Then $\CTerm{X}{s}{\vec{x}}{\varphi}$ has a $\rho$-redex at $p$ using $\gamma$,
which is a partial $\rho$-redex by \Cref{lem:redex is a weak redex}.
Hence,
we have a partial rewrite step
$\CTerm{X}{s}{\vec{x}}{\varphi} \leadsto^p_{\rho,\gamma} \CTerm{Y}{t}{\vec{y}}{\psi}$.
\end{proof}

\LemmaFromWeakRewriteStepToMostGeneralStep*
\begin{proof}
Suppose 
$\CTerm{X}{s}{\vec{x}}{\varphi} \leadsto^p_{\rho,\gamma} \CTerm{Y}{t}{\vec{y}}{\psi}$.
Then, $\CTerm{X}{s}{\vec{x}}{\varphi}$ has a partial $\rho$-redex at $p$ using $\gamma$.
Thus, we have
(1) $\Dom(\gamma) = \Var(\ell)$,
(2) $s|_p = \ell\gamma$,
(3) $\gamma(x) \in \Val \cup X$ for all $x \in \Var(\ell) \cap Z$, and
(4) $(\ECO{\vec{x}}{\varphi}) \land (\ECO{\vec{z}}{\pi\gamma})$ is satisfiable in $\xM$.

By $\SET{\vec{z}} = \Var(\pi) \setminus \Var(\ell)$ and condition (1)
we have $\FVar(\ECO{\vec{z}}{\pi\gamma}) \subseteq \SET{ \gamma(x) \mid x \in \Var(\ell) \cap \Var(\pi)}$.
Then, since $\Var(\pi) \subseteq Z$ by the definition of rewrite rule,
it follows that~(4.5) $\FVar(\ECO{\vec{z}}{\pi\gamma}) \subseteq X$ by condition (3).

By $\Var(\rho) \cap \Var(s,\varphi) = \varnothing$,
we have $\SET{\vec{x}} \cap \SET{\vec{z}} = \varnothing$
and $\Var(\varphi) \cap \SET{\vec{z}} = \varnothing$.
Furthermore, by $\Var(\pi\gamma) \subseteq \Var(s) \cup \SET{\vec{z}}$,
we  have $\FVar(\exists \vec{z}.\pi\gamma) \cap \SET{\vec{x}} = \varnothing$.
Hence, $\ECO{\pvec{x},\vec{z}}{\varphi \land \pi\gamma}$
is a prenex normal form of $(\ECO{\pvec{x}}{\varphi}) \land (\ECO{\vec{z}}{\pi\gamma})$.
It follows that
(5) $\FVar(\ECO{\vec{x},\vec{z}}{\varphi \land \pi\gamma}) 
= \FVar((\ECO{\vec{x}}{\varphi}) \land (\ECO{\vec{z}}{\pi\gamma}))$
and
(6) $\vDash_\xM 
(\ECO{\vec{x},\vec{z}}{\varphi \land \pi\gamma}) 
\Leftrightarrow
((\ECO{\vec{x}}{\varphi}) \land (\ECO{\vec{z}}{\pi\gamma}))$.

By the definition  of constrained terms,
we have $\FVar(\ECO{\vec{x}}{\varphi}) \subseteq X$.
Hence, it follows from~(4.5)
and~(5)
that
(7) $\FVar(\ECO{\vec{x},\vec{z}}{\varphi \land \pi\gamma}) \subseteq X$.

By condition (1), we have $\SET{\vec{z}} \subseteq \Var(\pi\gamma)$.
Then, since $\SET{\vec{x}} \subseteq \Var(\varphi)$,
we have $\SET{\vec{x},\vec{z}} \subseteq \Var(\varphi \land \pi\gamma)$.
Also by $\SET{\vec{x}} \cap \Var(s) = \varnothing$
and $\Var(\rho) \cap \Var(s,\varphi) = \varnothing$,
we have $\SET{\vec{x},\vec{z}} \cap \Var(s) = \varnothing$.
Together with (7) we have that, 
$\CTerm{X}{s}{\vec{x},\vec{z}}{\varphi \land \pi\gamma}$
is an existentially constrained term.

From (6),
we have $\vDash_\xM (\ECO{\vec{x},\vec{z}}{\varphi \land \pi\gamma}) \Rightarrow (\ECO{\vec{x}}{\varphi})$
and $\vDash_\xM (\ECO{\vec{x},\vec{z}}{\varphi \land \pi\gamma}) \Rightarrow (\ECO{\vec{z}}{\pi\gamma})$.
The former gives
$\CTerm{X}{s}{\vec{x},\vec{z}}{\varphi \land \pi\gamma} \subsetsim \CTerm{X}{s}{\vec{x}}{\varphi}$.

Consider a renamed variant $\rho': \CRu{Z'}{\ell'}{r'}{\pi'}$ of $\rho$
satisfying $\Var(\rho') \cap \Var(s,\varphi \land \pi\gamma) = \varnothing$.
Let $\delta$ be a renaming such that $\rho' = \rho \delta$,
and let $\gamma'=  \gamma \circ \delta^{-1}$.
Then, we have $\ell'\gamma' = \ell\gamma$,
$r'\gamma' = r\gamma$ and $\pi'\gamma' = \pi\gamma$.
It also follows that 
$\vDash_{\xM} (\ECO{\vec{x},\vec{z}}{\varphi \land \pi\gamma}) \Rightarrow (\ECO{\pvec{z}'}{\pi'\gamma'})$,
where $\SET{\pvec{z}'} = \Var(\pi') \setminus \Var(\ell')$.
We obtain that 
$\CTerm{X}{s}{\vec{x},\vec{z}}{\varphi \land \pi\gamma}$ 
has a $\rho'$-redex at $p$ using $\gamma'$,
and 
$\CTerm{X}{s}{\vec{x},\vec{z}}{\varphi \land \pi\gamma} \to_\rho 
\CTerm{Y'}{t'}{\pvec{y}'}{\psi'}$,
where 
$t' = s[r'\gamma']_p$,
$\psi' = \varphi \land \pi\gamma \land \pi'\gamma'$,
$\SET{\vec{y}'} = \Var(\psi') \setminus \Var(t')$, and
$Y' = \ExVar(\rho) \cup (X \cap \Var(t'))$.
Observe now that $t' = s[r'\gamma']_p = s[r\gamma]_p = t$,
that $\vDash_{\xM} \psi \Leftrightarrow \psi'$ 
by $\psi = \varphi \land \pi\gamma$ and
$\psi' = \varphi \land \pi\gamma \land \pi'\gamma'
       = \varphi \land \pi\gamma \land \pi\gamma$,
and that $\Var(\psi') = \Var(\psi)$.
Hence, 
$\CTerm{Y'}{t'}{\pvec{y}'}{\psi'} \sim \CTerm{Y}{t}{\vec{y}}{\psi}$.
\end{proof}

\LemmaSubsumptionByInstances*
\begin{proof}
Clearly~\Bfnum{2} immediately follows from~\Bfnum{1} by the definition of equivalence.
To show~\Bfnum{1}, let $\CTerm{X}{s}{\vec{x}}{\varphi}$, $\CTerm{Y}{t}{\vec{y}}{\psi}$
be existentially constrained terms.
We start by proving the \emph{only if}-direction.
Assume that $\CTerm{X}{s}{\vec{x}}{\varphi} \subsetsim \CTerm{Y}{t}{\vec{y}}{\psi}$.
To show 
$\AllInst{\CTerm{X}{s}{\vec{x}}{\varphi}} \subseteq \AllInst{\CTerm{Y}{t}{\vec{y}}{\psi}}$,
take an element $u \in \AllInst{\CTerm{X}{s}{\vec{x}}{\varphi}}$.
Then, by definition, there exists an $X$-valued substitution $\sigma$ with 
$\sigma \vDash_{\xM} \ECO{\vec{x}}{\varphi}$
and $u = s\sigma$. 
Then, by definition of $\subsetsim$, there exists a $Y$-valued substitution
$\gamma$ with $\gamma \vDash_\xM \ECO{\vec{y}}{\psi}$ and $s\sigma = t\gamma$.
It follows that $u = s\sigma = t\gamma \in \AllInst{\CTerm{Y}{t}{\vec{y}}{\psi}}$.
For the \emph{if}-direction we assume that
$\AllInst{\CTerm{X}{s}{\vec{x}}{\varphi}} \subseteq \AllInst{\CTerm{Y}{t}{\vec{y}}{\psi}}$ holds.
Fix an arbitrary $X$-valued substitution $\sigma$ with 
$\sigma \vDash_\xM \ECO{\vec{x}}{\varphi}$.
Then, as we have $s\sigma \in \AllInst{\CTerm{X}{s}{\vec{x}}{\varphi}}$,
by our assumption
$s\sigma \in \AllInst{\CTerm{Y}{t}{\vec{y}}{\psi}}
= \SET{t\gamma \mid \gamma(Y) \subseteq \Val, \gamma \vDash_\xM \ECO{\vec{y}}{\psi}}$.
Thus, $s\sigma = t\gamma$ for some $Y$-valued substitution $\gamma$
such that $\gamma \vDash_\xM \ECO{\vec{y}}{\psi}$.
Hence, we conclude $\CTerm{X}{s}{\vec{x}}{\varphi} \subsetsim \CTerm{Y}{t}{\vec{y}}{\psi}$.
\end{proof}

\LemmaCharacterizationOfWeakRewriteSteps*
\begin{proof}
Let $\rho: \CRu{Z}{\ell}{r}{\pi}$ be a left-linear constrained rewrite rule
satisfying $\Var(\rho) \cap \Var(s,\varphi) = \varnothing$ together with
$\CTerm{X}{s}{\vec{x}}{\varphi} \leadsto^p_{\rho,\gamma} \CTerm{Y}{t}{\vec{y}}{\psi}$.
We have 
(1) $\Dom(\gamma) = \Var(\ell)$,
(2) $s|_p = \ell\gamma$,
(3) $\gamma(x) \in \Val \cup X$ for all $x \in \Var(\ell) \cap Z$, and
(4) $(\ECO{\vec{x}}{\varphi}) \land (\ECO{\vec{z}}{\pi\gamma})$ is satisfiable
where $\SET{\vec{z}} = \Var(\pi) \setminus \Var(\ell)$.
Furthermore, we have
(5) $t = s[r\gamma]_p$,
(6) $\psi = \varphi \land \pi\gamma$,
(7) $\SET{\vec{y}} = \Var(\psi) \setminus \Var(t)$, and
(8) $Y = \ExVar(\rho) \cup (X \cap \Var(t))$.

\begin{enumerate}
\item 
By~(4),
there exists a valuation $\delta$
such that $\vDash_\delta (\ECO{\vec{x}}{\varphi}) \land (\ECO{\vec{z}}{\pi\gamma})$.
Then, $\vDash_\delta \ECO{\vec{z}}{\pi\gamma}$,
and hence $\vDash_{\delta[\vec{z} \mapsto \vec{w}]} \pi\gamma$ 
for some $\vec{w} \in \Val^*$.
Since $ \SET{ \vec{z} } \cap X = \varnothing$,
let $\hat \delta$ be a substitution
obtained from $\delta[\vec{z} \mapsto \vec{w}]$ by restricting its domain
to $X \cup \Var(\pi\gamma) \cup \ExVar(\rho)$.
Note that, by $\FVar(\ECO{\vec{x}}{\varphi}) \subseteq  X$,
we have 
$\FVar(\ECO{\vec{x}}{\varphi}) \subseteq  \VDom(\hat \delta)$
and 
$\vDash_{\hat \delta} \ECO{\vec{x}}{\varphi}$,
and thus $\hat\delta \vDash \ECO{\vec{x}}{\varphi}$ follows.
Moreover, by choosing this specific $\hat \delta$,
it follows that $\hat \delta(X) \subseteq \Val$.
Thus, $s \hat\delta\in \AllInst{\CTerm{X}{s}{\vec{x}}{\varphi}}$.

Let $u = s\hat \delta$ and let us 
show that 
$u \to^p_{\AllInst{\rho}} v$
for some $v$.
For this, it suffices to show
$\ell \gamma \hat\delta \to^\epsilon_{\AllInst{\rho}} r \gamma\hat\delta$
as $u|_p = (s \hat\delta)|_p
= s|_p \hat\delta
= \ell \gamma \hat\delta$.
Let 
$\sigma = (\hat\delta \circ \gamma)|_{Z \cap \Var(\ell,r)}$.
It remains to show that
$\Dom(\sigma) = \VDom(\sigma) = Z \cap \Var(\ell,r)$
and
$\sigma \vDash \ECO{\pvec{z}'}{\pi}$
where $\SET{ \pvec{z}' } = \Var(\pi) \setminus \Var(\ell,r)$.

For the former, by our choice of $\sigma$, it remains to
show $Z \cap \Var(\ell,r) \subseteq \VDom(\sigma)$.
Assume $z \in Z \cap \Var(\ell,r)$.
We distinguish two cases.
If $z \in Z \cap \Var(\ell)$,
then we have $\gamma(z) \in X$ by~(3);
thus, $\sigma(z) = \hat\delta(\gamma(z)) \in \Val$ by
our choice of $\hat\delta$.
Otherwise,  we have $z \in Z \cap \ExVar(\rho) =\ExVar(\rho)$.
Then, $\sigma(z) = \hat\delta(\gamma(z)) = \hat\delta(z) \in \Val$
by $\ExVar(\rho) \cap \Dom(\gamma) = \varnothing$
and our choice of $\hat\delta$.

To show $\FVar(\ECO{\pvec{z}'}{\pi}) \subseteq \VDom(\sigma)$,
let $z \in \FVar(\ECO{\pvec{z}'}{\pi})$,
i.e., 
$z \in \Var(\pi) \cap \Var(\ell,r)$.
If $z \in \Var(\ell)$, then $z \in \Var(\ell) \cap X$ by $z \in \Var(\pi)$,
and hence $\gamma(z) \in \Val$. Thus, $\sigma(z) = \hat\delta(\gamma(z)) \in \Val$ follows.
Otherwise $z \in \ExVar(\rho)$.
Then $\gamma(z) = z$, and hence $\sigma(z) = \hat\delta(z) \in \Val$  by our choice of $\hat\delta$.
Thus, $\FVar(\ECO{\pvec{z}'}{\pi}) \subseteq \VDom(\sigma)$
and it remains to show $\vDash_{\sigma} \ECO{\pvec{z}'}{\pi}$.

By our choice of $\hat\delta$,
we have $\vDash_{\hat \delta} \pi\gamma$.
Thus, $\vDash_{[\![ \gamma]\!]_{\hat \delta}} \pi$.
Hence $\vDash_{[\![ \gamma]\!]_{\hat \delta}} \ECO{\pvec{z}'}{\pi}$.
We now show $[\![ \gamma]\!]_{\hat \delta}(x) = 
\sigma(x)$ for all $x \in \FVar(\ECO{\pvec{z}'}{\pi})$.
From $x \in \FVar(\ECO{\pvec{z}'}{\pi})$
it follows $x \in \Var(\pi) \setminus \Var(\ell,r)$,
and hence $x \notin \Var(\ell)$.
Thus, $\gamma(x) = x$.
Hence 
$[\![ \gamma(x) ]\!]_{\hat \delta}
= [\![ x ]\!]_{\hat \delta}
= \hat\delta(x)
= \hat\delta(\gamma(x))
= \sigma(x)$ follows.

Hence we conclude the claim.

\item 
Suppose 
$w \in \AllInst{\CTerm{Y}{t}{\vec{y}}{\psi}}$.
Then, by definition,
(9) $w = t\delta$, (10) $\delta(Y) \subseteq \Val$, and 
(11) $\delta \vDash_\xM \ECO{\vec{y}}{\psi}$ for some substitution $\delta$.
without loss of generality assume $\Dom(\delta) \subseteq \Var(t)$.

As $\Var(t) \cap \SET{\vec{y}} = \varnothing$,
one can extend $\delta$ to $\delta' = \delta \cup \SET{ \vec{y} \mapsto \vec{v} }$ so that
$\vDash_{\delta'} \psi$ for some $\vec{v} \in \Val^*$.
We further extend 
$\delta'$ to $\delta'' = \delta' \cup \SET{ \pvec{x} \mapsto \vec{w} }$
where $\SET{\pvec{x}} = X \setminus \Var(t)$
by taking an arbitrary $\vec{w} \in \Val^*$.

We now show $s\delta'' \in \AllInst{\CTerm{X}{s}{\vec{x}}{\varphi}}$.
Firstly, by $Y = (X \cap \Var(t)) \cup (\Var(r) \setminus \Var(\ell))$ and~(10), 
we have $X \cap \Var(t) \subseteq \VDom(\delta)$.
Thus, $\delta''(X) \subseteq \Val$, by our choice of $\delta''$.
It remains to show 
$\delta'' \vDash \ECO{\vec{x}}{\varphi}$.
We have $\FVar(\ECO{\vec{y}}{\psi}) \subseteq \VDom(\delta)$ by (11).
Hence, 
$\FVar(\ECO{\vec{x}}{\varphi}) 
\subseteq \Var(\varphi) 
\subseteq \Var(\psi) 
\subseteq \VDom(\delta')
\subseteq \VDom(\delta'')$.
Also, by $\vDash_{\delta'} \psi~(= \varphi \land \pi\gamma)$,
we have $ \vDash_{\delta'} \ECO{\vec{x}}{\varphi}$.
Thus, since $\delta'(x) = \delta''(x)$ for $x \notin \SET{ \vec{x} }$,
we have $\vDash_{\delta''} \ECO{\vec{x}}{\varphi}$.
Hence $\delta'' \vDash \ECO{\vec{x}}{\varphi}$ as claimed.
Thus, $s\delta'' \in \AllInst{\CTerm{X}{s}{\vec{x}}{\varphi}}$.

Let $u = s\delta''$.
Since $\delta(x) = \delta''(x)$ for all $x \in \Var(t)$,
we have $u|_p = s|_p\delta'' = \ell\gamma\delta''$
and $w = t\delta = t\delta'' 
= s[r\gamma]_p\delta'' 
= s\delta'' [r\gamma\delta'' ]_p
= u[r\gamma\delta'' ]_p$.
Thus, to obtain $u \to^p_{\AllInst{\rho}} w$,
it remains to show 
$\Dom(\sigma) = \VDom(\sigma) = Z \cap \Var(\ell,r)$
and
$\sigma \vDash \ECO{\pvec{z}'}{\pi}$, 
where $\sigma$ is a substitution with 
$\Dom(\sigma) = Z \cap \Var(\ell,r)$
given by $\sigma(x) = \delta''(\gamma(x))$.

For the former, it suffices to 
show $Z \cap \Var(\ell,r) \subseteq \VDom(\sigma)$
by our choice of $\sigma$.
Let $z \in Z \cap \Var(\ell,r)$.
If $z \in Z \cap \Var(\ell)$ then $\gamma(z) \in \Val \cup X$.
Then, by $\delta''(X) \subseteq \Val$,
it follows $\sigma(z) = \delta''(\gamma(z)) \in \Val$.
If $z \in Z \cap \ExVar(\rho)$,
then $z \in Y$ by the definition of rewrite steps,
and hence $\sigma(z) = \delta''(\gamma(z)) = \delta(z)  \in \Val$ by (10).
Hence, all in all,
$Z \cap \Var(\ell,r) \subseteq \VDom(\sigma)$ holds.

As $\FVar(\ECO{\pvec{z}'}{\pi}) \subseteq Z \cap \Var(\ell,r)$,
we have $\FVar(\ECO{\pvec{z}'}{\pi}) \subseteq \VDom(\sigma)$.
Furthermore,
from $\vDash_{\delta'} \psi~(= \varphi \land \pi\gamma)$
and $\Var(\pi\gamma) \cap \SET{ \vec{x} } = \varnothing$,
it follows that $\vDash_{\delta''} \pi\gamma$.
Hence,
we have $\vDash_{[\![ \gamma]\!]_{\delta''}} \pi$.
By $\Dom(\gamma) \cap \Var(\pi) \subseteq \Var(\ell) \cap X$,
we have 
$[\![ \gamma]\!]_{\delta''}(x) 
= \delta''(\gamma(x))
= \sigma(x)$ for all $x \in \Var(\pi)$.
Hence, $\vDash_{\sigma} \pi$.
Thus,  $\vDash_{\sigma} \ECO{\pvec{z}'}{\pi}$, 
and $\sigma \vDash \ECO{\pvec{z}'}{\pi}$ follows by $\FVar(\ECO{\pvec{z}'}{\pi}) \subseteq \VDom(\sigma)$.
This concludes the claim.

\item 
Suppose $u \in \AllInst{\CTerm{X}{s}{\vec{x}}{\varphi}}$
and 
$u \to^p_{\AllInst{\rho}} v$.
Then $u = s\delta$ 
for substitution $\delta$
such that $\delta(X) \subseteq \Val$ and $\delta \vDash \ECO{\vec{x}}{\varphi}$.
without loss of generality suppose $\Dom(\delta) \subseteq \Var(s)$.

From 
$u \to^p_{\AllInst{\rho}} v$,
there exists a substitution $\sigma$
such that $u|_p = \ell\sigma$, 
$v = u[r\sigma]_p$,
$Z \cap \Var(\ell,r) \subseteq \VDom(\sigma)$
and $\sigma \vDash \ECO{\pvec{z}'}{\pi}$
where $\SET{\pvec{z}'} = \Var(\pi) \setminus \Var(\ell,r)$.
Since $u|_p = s|_p\delta = \ell\gamma\delta$,
we have 
$\sigma(x) = \delta(\gamma(x))$ for all $x \in \Var(\ell)$.
We are now going to show
$v = u[r\sigma]_p \in \AllInst{\CTerm{Y}{t}{\vec{y}}{\psi}}$.

By $Z \cap \Var(\ell,r) \subseteq \VDom(\sigma)$,
we know $\ExVar(\rho) \subseteq \VDom(\sigma)$.
Thus, for $x \in \ExVar(\rho)$,
we have $\sigma(x) \in \Val$.
Thus, if $x = r|_{q_x}$ then $\sigma(x) = (r\sigma)|_{q_x} = v|_{p.q_x}$ holds.
Thus, one can extend $\delta$ to $\delta'$ by $\delta'(x) = v|_{p.q_x}$ for
$x \in \ExVar(\rho)$.
Then, for $x \in \ExVar(\rho)$, we have
$\sigma(x) = v|_{p.q_x} = \delta'(x) = \delta'(\gamma(x))$
and also clearly, $\delta(x) = \delta'(x)$ holds also for $x \in \Var(\ell)$.
Thus, we have
$\sigma(x) = \delta'(\gamma(x))$ for all $x \in \Var(\ell,r)$.

Note that $\Var(s) \cap \ExVar(\rho) = \varnothing$,
we have $s \delta = s\delta'$.
Thus, 
$u[r\sigma]_p = 
s\delta[r\sigma]_p =
s\delta'[r\gamma\delta']_p =
(s[r\gamma]_p)\delta' = t\delta'$.

Thus, it remains to to show 
$\delta'(Y)  \subseteq \Val$
and
$\delta' \vDash \ECO{\vec{y}}{\psi}$,
i.e.,
$\delta' \vDash \ECO{\vec{y}}{\varphi \land \pi\gamma}$.

Let $y \in Y = (X \cap \Var(t)) \cup \ExVar(\rho)$.
If $y \in X \cap \Var(t) \subseteq X$
then $\delta'(y) = \delta(y) \in \Val$ by $\delta(X) \subseteq \Val$.
If $y \in \ExVar(\rho)$, then 
$\delta'(y)  \in \Val$ by our construction of $\delta'$.
Thus, $\delta'(Y)  \subseteq \Val$ holds.

As $\FVar(\ECO{\vec{y}}{\psi}) \subseteq Y$,
$\FVar(\ECO{\vec{y}}{\psi}) \subseteq \VDom(\delta')$ follows
from $\delta'(Y)  \subseteq \Val$.
Thus, it remains to show $\vDash_{\delta'} \ECO{\vec{y}}{\psi}$.
From $\vDash_\delta \ECO{\vec{x}}{\varphi}$
we have $\vDash_{\delta'} \ECO{\vec{x}}{\varphi}$.

From $\sigma \vDash \ECO{\pvec{z}'}{\pi}$,
we have $\vDash_{\delta' \circ \gamma} \ECO{\pvec{z}'}{\pi}$.
Thus, $\vDash_{\delta'} (\ECO{\pvec{z}'}{\pi})\gamma$.
As $\Dom(\gamma) = \Var(\ell)$ and $\SET{\pvec{z}'} = \Var(\pi) \setminus \Var(\ell,r)$,
and $\Dom(\gamma) \cap \SET{\pvec{z}'} = \varnothing$.
Also, by $\gamma(\Var(\ell)) \subseteq \Var(s)$,
and $\Ran(\gamma) \cap \SET{\pvec{z}'} = \varnothing$.
Thus, $\vDash_{\delta'} \ECO{\pvec{z}'}{\pi\gamma}$.
Hence $\vDash_{\delta'} (\ECO{\vec{x}}{\varphi}) \land (\ECO{\pvec{z}'}{\pi\gamma})$. 
From this, 
it follows $\vDash_{\delta'} \ECO{\vec{x},\pvec{z}'}{\varphi \land \pi\gamma}$. 
Since $\SET{\vec{x}} \cap \Var(t) = \varnothing$ and 
$\SET{\pvec{z}'} \cap \Var(t) = \varnothing$,
it follows $\vDash_{\delta'} \ECO{\vec{y}}{\varphi \land \pi\gamma}$.
\end{enumerate}
\end{proof}

\TheoremCharacterizationbyInterpreationForWeakReduction*
\begin{proof}
\Bfnum{1} follows from item~\Bfnum{2} of \Cref{lem:Characterization of Weak Rewrite Steps}.
\Bfnum{2} follows from the items~\Bfnum{1} and~\Bfnum{3} of \Cref{lem:Characterization of Weak Rewrite Steps}.
%
\end{proof}

\LemmaCharacterizationOfMGRewriteSteps*
\begin{proof}
Let $\rho: \CRu{Z}{\ell}{r}{\pi}$ be a left-linear constrained rewrite rule
satisfying $\Var(\rho) \cap \Var(s,\varphi) = \varnothing$,
and suppose $\CTerm{X}{s}{\vec{x}}{\varphi} \to^p_{\rho,\gamma} \CTerm{Y}{t}{\vec{y}}{\psi}$.
Thus, we have 
(1) $\Dom(\gamma) = \Var(\ell)$,
(2) $s|_p = \ell\gamma$,
(3) $\gamma(x) \in \Val \cup X$ for all $x \in \Var(\ell) \cap Z$, and
(4) $\vDash_\xM (\ECO{\vec{x}}{\varphi}) \Rightarrow (\ECO{\vec{z}}{\pi\gamma})$,
where $\SET{\vec{z}} = \Var(\pi) \setminus \Var(\ell)$.
Furthermore, we have
(5) $t = s[r\gamma]_p$,
(6) $\psi = \varphi \land \pi\gamma$,
(7) $\SET{\vec{y}} = \Var(\psi) \setminus \Var(t)$, and
(8) $Y = \ExVar(\rho) \cup (X \cap \Var(t))$.

Let $u \in \AllInst{\CTerm{X}{s}{\vec{x}}{\varphi}}$.
Then, by definition,
(9) $u = s\delta$, (10) $\delta(X) \subseteq \Val$, and 
(11) $\delta \vDash_\xM \ECO{\vec{x}}{\varphi}$ for some substitution $\delta$
such that $\Dom(\delta) \subseteq \Var(s)$.

Now we define a substitution $\sigma$ as follows:
\begin{itemize}
\item For $x \in \Var(\ell)$, let $\sigma(x) = \delta(\gamma(x))$.

\item For $x \in \Var(r) \setminus (\ell)$,
we define $\sigma(x)$ as follows.
For this, let $\SET{\vec{w}} = \Var(r) \setminus \Var(\ell)$.
Let $\hat \delta$ be a valuation such that
$\hat \delta(x) = \delta(x)$ for all $x \in X$.
Then, by (11) and $\FVar(\ECO{\vec{x}}{\varphi}) \subseteq X$, we have 
$\vDash_{\hat \delta} \ECO{\vec{x}}{\varphi}$.
Hence, by (4), we know
$\vDash_{\hat \delta} (\ECO{\vec{z}}{\pi\gamma})$.
From this, it follows 
$\vDash_{\hat \delta [\vec{w} \mapsto \vec{v}]} \ECO{\pvec{z}'}{\pi\gamma}$
for some $\vec{v} \in \Val^*$,
where $\SET{\pvec{z}'} = \Var(\pi) \setminus \Var(\ell,r)$
and
$\hat \delta [\vec{w} \mapsto \vec{v}]$ denotes the update of 
the valuation $\hat \delta$ by  $\vec{w} \mapsto \vec{v}$.
Then, we take $\sigma(\vec{w}) = \vec{v}$.

\item For $x \in \xV \setminus \Var(\ell,r)$,
we take an arbitrary $\sigma(x)$.
\end{itemize}

Here, for the later use, let 
$\hat\delta' = \hat \delta[\vec{w} \mapsto \vec{v}]$.
Clearly, since $\FVar(\ECO{\vec{x}}{\varphi}) \cap \SET{\vec{w}} = \varnothing$,
we have 
$\vDash_{\hat \delta'} \ECO{\vec{x}}{\varphi}$.
and hence 
(12) $\vDash_{\hat \delta'} (\ECO{\vec{x}}{\varphi}) \land (\ECO{\pvec{z}'}{\pi\gamma})$ holds.

From~(2) and~(9),
it follows that 
$u|_p = s\delta|_p = s|_p\delta = \ell\gamma\delta = \ell\sigma$ by 
our choice of $\sigma$ for $x \in \Var(\ell)$.
To obtain $v$ such that 
$u \to^p_{\AllInst{\rho}} v$,
it suffices to show 
$\ell\sigma \to_{\AllInst{\rho}} r\sigma$.
For this, 
we take $\sigma' = \sigma|_{Z \cap \Var(\ell,r)}$
and show 
$\Dom(\sigma') = \VDom(\sigma') = Z \cap \Var(\ell,r)$
and 
$\sigma' \vDash \ECO{\pvec{z}'}{\pi}$.


The former follows from $Z \cap \Var(\ell,r)  \subseteq \VDom(\sigma')$
by our choice of $\sigma'$.
Let $z \in Z \cap \Var(\ell,r)$.
If $z \in Z \cap \Var(\ell)$ then $\gamma(z) \in \Val \cup X$ by (3),
and thus, $\sigma'(z) = \delta(\gamma(z)) \in \Val$ by (10).
Otherwise 
$z \in (Z \cap \Var(r)) \setminus (Z \cap \Var(\ell))
\subseteq \Var(r) \setminus \Var(\ell)$,
and hence $\sigma'(z) \in \Val$ by our choice of $\sigma'$
for $z \in \Var(r) \setminus \Var(\ell)$.
Thus, $Z \cap \Var(\ell,r)  \subseteq \VDom(\sigma')$.

We proceed to show the latter.
We first show $\FVar(\ECO{\pvec{z}'}{\pi}) \subseteq \VDom(\sigma')$.
This follows from the former
because $\FVar(\ECO{\pvec{z}'}{\pi}) =  
\Var(\pi) \setminus (\Var(\pi) \setminus \Var(\ell,r))
= \Var(\pi) \cap \Var(\ell,r)
\subseteq  Z \cap \Var(\ell,r)$.

It remains to show 
$\vDash_{\hat \sigma} \ECO{\pvec{z}'}{\pi}$
for the valuation $\hat \sigma$ such that 
$\hat \sigma(z) = \sigma'(z)$ for all $z \in \FVar(\ECO{\pvec{z}'}{\pi})$.
From $\Ran(\gamma) \subseteq \Var(s)$,
$\Ran(\gamma) \cap \SET{\pvec{z}'} = \varnothing$.
We also have 
$\Dom(\gamma) = \Var(\ell)$ and $\Var(\ell) \cap \SET{\pvec{z}'} = \varnothing$,
and hence 
$\Dom(\gamma) \cap \SET{\pvec{z}'} = \varnothing$.
Thus, it follows from 
$\vDash_{\hat\delta'} \ECO{\pvec{z}'}{\pi\gamma}$
that 
$\vDash_{\hat\delta'} (\ECO{\pvec{z}'}{\pi})\gamma$.
Hence, we have $\vDash_{[\![ \gamma ]\!]_{\hat\delta'}} \ECO{\pvec{z}'}{\pi}$.

We now show $[\![ \gamma ]\!]_{\hat\delta'}
(z) = \hat \sigma(z)$ for all 
$z \in \FVar(\ECO{\pvec{z}'}{\pi}) = \Var(\pi) \cap \Var(\ell,r)$.
From this, our claim $\vDash_{\hat \sigma} \ECO{\pvec{z}'}{\pi}$ clearly follows.
We distinguish two cases.
\begin{itemize}
\item 
Suppose $z \in \Var(\pi) \cap \Var(\ell)$.
Then,
$[\![ \gamma ]\!]_{\hat\delta'}(z) 
= [\![ \gamma ]\!]_{\hat \delta[\vec{w} \mapsto \vec{v}]}(z) 
= [\![ \gamma(z) ]\!]_{\hat \delta[\vec{w} \mapsto \vec{v}]}
= [\![ \gamma(z) ]\!]_{\hat \delta}
= \hat \delta(\gamma(z))
= \delta(\gamma(z)) = \hat{\sigma}(z)$,
as $\gamma(z) \in X \cup \Val$.

\item 
Otherwise, let $z \in \Var(\pi) \cap (\Var(r) \setminus \Var(\ell))$.
Then, we have
$[\![ \gamma ]\!]_{\hat\delta'}(z) 
= [\![ \gamma ]\!]_{\hat \delta[\vec{w} \mapsto \vec{v}]}(z) 
= [\![ \gamma(z) ]\!]_{\hat \delta[\vec{w} \mapsto \vec{v}]}
= [\![ z ]\!]_{\hat \delta[\vec{w} \mapsto \vec{v}]}
= \SET{ \vec{w} \mapsto \vec{v} }(z)
= \hat{\sigma}(z)$.
\end{itemize}
%
This concludes the claim.
\end{proof}


\LemmaObtainAllInstFromValueInst*
\begin{proof}
($\subseteq$)
Suppose $u \in \AllInst{\CTerm{X}{s}{\vec{x}}{\varphi}}$.
Then there exists a substitution $\sigma$
such that $u = s\sigma$, $\sigma(X) \subseteq \Val$
and $\sigma \vDash \ECO{\vec{x}}{\varphi}$.
Take $\sigma' := \sigma|_X$ and $\theta := \sigma|_{\overline{X}}$.
Then, 
(1) $\sigma'(X) = \sigma(X) \subseteq \Val$,
(2) since $\sigma'(x) = x \in \xV$ for $x \in \Var(s) \setminus X$,
$\sigma'(\Var(s) \setminus X) \subseteq \xV$ holds, and
(3)
since $\FVar(\ECO{\vec{x}}{\varphi}) \subseteq X$
and $\sigma'(x) = \sigma(x)$ for any $x \in X$,
$\sigma' \vDash \ECO{\vec{x}}{\varphi}$ holds.
Thus,  it follows that 
$s\sigma' \in \AllInst{\CTerm{X}{s}{\vec{x}}{\varphi}}_\mathsf{v}$.
Finally, since $\sigma = \theta \circ \sigma'$,
we obtain $u = s\sigma = s\sigma'\theta \in 
\SET{ t\theta \mid t \in \AllInst{\CTerm{X}{s}{\vec{x}}{\varphi}}_\mathsf{v}$,
$\theta$ is a substitution$}$.
($\supseteq$)
Suppose     
$u = t\theta$ for some substitution $\theta$ and
$t \in \AllInst{\CTerm{X}{s}{\vec{x}}{\varphi}}_\mathsf{v}$.
Then, 
$t = s\sigma$ for some $\sigma$
such that (1) $\sigma(X) \subseteq \Val$,
(2) $\sigma(\Var(s) \setminus X) \subseteq \xV$,
(3) $\sigma(x) \neq \sigma(y)$ for any $x,y \in \Var(s) \setminus X$
with $x \neq y$, and
(4) $\sigma \vDash \ECO{\vec{x}}{\varphi}$.
Take $\sigma' := \theta \circ \sigma$.
Then, $\sigma'(x) = \theta(\sigma(x)) = \sigma(x) \in \Val$ for any $x \in X$.
Thus, 
$\sigma'(X) = \sigma(X) \subseteq \Val$.
Also, $\sigma' \vDash \ECO{\vec{x}}{\varphi}$ follows
by $\sigma \vDash \ECO{\vec{x}}{\varphi}$
and $\FVar(\ECO{\vec{x}}{\varphi}) \subseteq X$.
Thus, $s\sigma' \in \AllInst{\CTerm{X}{s}{\vec{x}}{\varphi}}$.
Finally, the claim follows as $u = t\theta = s\sigma\theta = s\sigma'$.
\end{proof}

\LemmaObtainValueInstFromAllInst*
\begin{proof}
The claim follows from 
(1) 
$\AllInst{\CTerm{X}{s}{\vec{x}}{\varphi}}_\mathsf{v}
\subseteq \AllInst{\CTerm{X}{s}{\vec{x}}{\varphi}}$, and
(2) 
if $u,v \in \AllInst{\CTerm{X}{s}{\vec{x}}{\varphi}}$ with $u < v$,
then $v \notin \AllInst{\CTerm{X}{s}{\vec{x}}{\varphi}}_\mathsf{v}$.
(1) is immediate from \Cref{lem:simple properties of value instantiation}~\Bfnum{3}.
To show~(2),
suppose
$u,v \in \AllInst{\CTerm{X}{s}{\vec{x}}{\varphi}}$ with $u < v$,
i.e.,
there exist  substitutions $\sigma,\theta$
such that 
(3) $u = s\sigma$,
(4) $\sigma(X) \subseteq \Val$,
(5) $\sigma \vDash \ECO{\vec{x}}{\varphi}$, 
(6) $v = u\theta$, and
(7) $v \not\leqslant u$.
Note that from (3)--(5) 
we obtain $u \in \AllInst{\CTerm{X}{s}{\vec{x}}{\varphi}}$,
and thus $v \in \AllInst{\CTerm{X}{s}{\vec{x}}{\varphi}}$ 
follows by \Cref{lem:immediate properties of interpretation}~\Bfnum{2} and~\Bfnum{6}.
Now we suppose $v \in \AllInst{\CTerm{X}{s}{\vec{x}}{\varphi}}_\mathsf{v}$
and derive a contradiction.
Let $s = s[x_1,\ldots,x_m,y_1,\ldots,y_n]_{p_1,\ldots,p_m,q_1,\ldots,q_n}$
with $\Pos_X(s) = \SET{ p_1,\ldots,p_m }$
and $\Pos_{\xV \setminus X}(s)  = \SET{ q_1,\ldots,q_n }$.
Then, by~(3) and~(4), 
$u = s[v_1,\ldots,v_m,y_1\sigma,\ldots,y_n\sigma]$
with $v_1,\ldots,v_m \in \Val$.
By~(6), 
$v = u\theta = s[v_1,\ldots,v_m,y_1\sigma\theta,\ldots,y_n\sigma\theta]$.
From our assumption 
that $v \in \AllInst{\CTerm{X}{s}{\vec{x}}{\varphi}}_\mathsf{v}$,
it follows that the sequence of terms
$\langle y_1\sigma\theta,\ldots,y_n\sigma\theta \rangle$
is a sequence of variables which is a renaming of $\langle y_1,\ldots,y_n \rangle$.
Hence, $v \leqslant u$.
This is a contradiction to~(7).
\end{proof}


\LemmaCharacterizationOfWeakRewriteStepsByValueInterpretation*
\begin{proof}
\begin{enumerate}
\item 
The specific substitution $\hat\delta$ in the proof of
\Cref{lem:Characterization of Weak Rewrite Steps}
satisfies $\hat\delta(x) = x$ for $x \in \Var(s) \setminus X$,
as $(\Var(\pi\gamma) \cup \ExVar(\rho) \setminus X) \cap \Var(s) =  \varnothing$.
Thus, it follows that 
$u = s\hat\delta \in \AllInst{\CTerm{X}{s}{\vec{x}}{\varphi}}_\mathsf{v}$.

\item 
In the proof of \Cref{lem:Characterization of Weak Rewrite Steps},
assume moreover that 
(10') $\delta(\Var(t) \setminus Y) \subseteq \xV$
and 
(10'') for any $x,y \in \Var(t) \setminus Y$,
$x \neq y$ implies $\delta(x) \neq \delta(y)$.
From the definitions of $\delta'$ and $\delta''$,
clearly they also satisfy these properties.
Since $(\Var(t) \setminus Y) \subseteq (\Var(s) \setminus X)$,
it follows that 
$u = s\delta'' \in \AllInst{\CTerm{X}{s}{\vec{x}}{\varphi}}_\mathsf{v}$.

\item
In the proof of \Cref{lem:Characterization of Weak Rewrite Steps},
assume moreover that 
$\delta(\Var(s) \setminus X) \subseteq \xV$
and 
for any $x,y \in \Var(s) \setminus X$,
$x \neq y$ implies $\delta(x) \neq \delta(y)$.
To show $t\delta' \in \AllInst{\CTerm{Y}{t}{\vec{y}}{\psi}}_\mathsf{v}$,
it remains to show
(1) $\delta'(\Var(t) \setminus Y) \subseteq \xV$
and 
(2) for any $x,y \in \Var(t) \setminus Y$,
$x \neq y$ implies $\delta(x) \neq \delta(y)$.
They follow as $\ExVar(\rho) \subseteq Y$,
$\Var(t)\subseteq \Var(s) \cup \ExVar(\rho)$,
and 
$(\Var(t)\setminus Y) \subseteq (\Var(s) \setminus X)$.

\end{enumerate}
\end{proof}

\LemmaNormalFormsOfWeakReduction*
\begin{proof}
Let $u \in \AllInst{\CTerm{X}{s}{\vec{x}}{\varphi}}_\mathsf{v}$.
Then, $u = s\mu$ for a substitution $\mu$ such that 
(1) $\mu(X) \subseteq \Val$,
(2) $\mu(\Var(s) \setminus X) \subseteq \xV$, and
(3) $\mu \vDash \ECO{\vec{x}}{\varphi}$.
Let $\rho: \CRu{Z}{\ell}{r}{\pi}$ be a left-linear constrained rewrite rule
without loss of generality such that $\Var(\rho) \cap \Var(s,\varphi) = \varnothing$,
and 
suppose that $u \to^p_{\AllInst{\rho}} v$.
More precisely,
suppose that $\sigma$ is a substitution such that
(4) $\Dom(\sigma) = \VDom(\sigma) = Z \cap \Var(\ell,r)$
and 
(5) $\sigma \vDash_{\xM} \ECO{\pvec{z}'}{\pi}$ where
$\SET{\pvec{z}'} = \Var(\pi) \setminus \Var(\ell,r)$,
(6) $u|_p = \ell\sigma\delta$, and 
(7) $v = u[r\sigma\delta]_p$
so that $\ell\sigma \to r\sigma \in \AllInst{\rho}$
and $u \to^p_{\ell\sigma \to r\sigma} v$.

We first show
$\CTerm{X}{s}{\vec{x}}{\varphi} \leadsto^p_{\rho,\gamma} \CTerm{Y}{t}{\vec{y}}{\psi}$
for some $\gamma$.
From~(1) and~(2), we have $\mu(\Var(s)) \subseteq \Val \cup \xV$.
Thus, $\Pos(s|_p) = \Pos(s\mu|_p) =  \Pos(\ell\sigma)$.
Hence, as $\ell$ is linear and value-free,
there exists a substitution $\gamma$ 
such that $\ell\gamma = s|_p$ and $\Dom(\gamma) = \Var(\ell)$.
Using~(6), we now have $\ell\gamma\mu = s|_p \mu = s\mu|_p =  u|_p = \ell\sigma\delta$,
i.e., 
(8) $\mu(\gamma(x)) = \delta(\sigma(x))$ for all $x \in \Var(\ell)$.

Let $x \in \Var(\ell) \cap Z$.
Then (9) $\mu(\gamma(x)) = \delta(\sigma(x)) = \sigma (x) \in \Val$ by~(4) and~(8).
Thus $\gamma(x) \in \Val \cup \xV$.
Moreover, $\gamma(x) \in \Val \cup \Var(s)$ by $\ell\gamma = s|_p$.
Hence $\gamma(x) \in \Val \cup X$;
for, if $\gamma(x) \in \Var(s) \setminus X$
then $\mu(\gamma(x)) \in \xV$ by~(2),
which contradicts $\mu(\gamma(x)) \in \Val$.
Thus, (10) $\gamma(x) \in \Val \cup X$ for all $x \in \Var(\ell) \cap Z$.

Let $\SET{\vec{z}} = \Var(\pi) \setminus \Var(\ell)$.
From (5) and $\SET{\pvec{z}'} \subseteq \SET{\vec{z}}$,
we have $\vDash_\sigma \ECO{\vec{z}}{\pi}$.
As $\FVar(\ECO{\vec{z}}{\pi}) = \Var(\ell) \cap Z$,
we have $\vDash_{\mu \circ \gamma} \ECO{\vec{z}}{\pi}$ by~(9).
Thus, $\vDash_{\mu} (\ECO{\vec{z}}{\pi})\gamma$.
By $\SET{\vec{z}} \cap \Dom(\gamma) = 
\SET{\vec{z}} \cap \Ran(\gamma) = \varnothing$,
we have $\vDash_{\mu} \ECO{\vec{z}}{\pi\gamma}$.
Thus, together with~(3),
we have 
(11) $\vDash_\mu (\ECO{\vec{x}}{\varphi}) \land (\ECO{\vec{z}}{\pi\gamma})$.

All in all, we know that
$\Dom(\gamma) = \Var(\ell)$,
$s|_p = \ell\gamma$,
$\gamma(x) \in \Val \cup X$ for all $x \in \Var(\ell) \cap Z$ by~(10), and
$(\ECO{\vec{x}}{\varphi}) \land (\ECO{\vec{z}}{\pi\gamma})$ is satisfiable by~(11).
Thus, we have $\CTerm{X}{s}{\vec{x}}{\varphi} \leadsto^p_{\rho,\gamma} \CTerm{Y}{t}{\vec{y}}{\psi}$,
where 
(15) $t = s[r\gamma]_p$,
(16) $\psi = \varphi \land \pi\gamma$,
(17) $\SET{\vec{y}} = \Var(\psi) \setminus \Var(t)$, and
(18) $Y = \ExVar(\rho) \cup (X \cap \Var(t))$.
It remains to show
$v \in \AllInst{\CTerm{Y}{t}{\vec{y}}{\psi}}_\mathsf{v}$.

Let $\mu'$ by a substitution given 
by $\mu'(x) = \mu(x)$ for $x \in \Var(s) \cup X$,
$\mu'(x) = \sigma(x)$ for each $x \in \ExVar(\rho)$,
and $\mu'(x) = x$ otherwise.
Clearly, (19) $s\mu = s\mu'$.

Let $y \in \Var(r)$.
If $y \in \ExVar(\rho)$ then 
$\mu'(\gamma(y)) = \mu'(y) = \sigma(y) = \delta(\sigma(y))$,
as $\sigma(y) \in \Val$ by~(4).
Otherwise $y \in \Var(\ell)$.
Then, $\mu(\gamma(y)) =  \delta(\sigma(y))$ by~(8).
Also, as $\gamma(y)$ is a subterm of $s$,
$\mu'(\gamma(y)) = \mu(\gamma(y))$ by~(19).
Hence $\mu'(\gamma(y)) =  \delta(\sigma(y))$.
Thus, $\mu'(\gamma(y)) =  \delta(\sigma(y))$ for all $y \in \Var(r)$,
and so (20) $r\gamma\mu' =  r\sigma\delta$.

Now, using~(19) and~(20), we have 
(21)
$v = u[r\sigma\delta]_p 
= s\mu[r\sigma\delta]_p 
= s\mu'[r\gamma\mu']_p 
= s[r\gamma]_p\mu' 
= t\mu'$.

Let $y \in Y = \ExVar(\rho) \cup (X \cap \Var(t))$.
If $y \in \ExVar(\rho)$,
then $y \in \ExVar(r) \cap Z$,
and hence, $\mu'(y) = \sigma(y) \in \Val$ by~(4).
If $y \in X \cap \Var(t)$
then $\mu'(y) = \mu(y)$ by $y \in X$,
and hence it follow from $y \in X$ that $\mu'(y) = \mu(y) \in \Val$ by~(1).
Thus, (22) $\mu'(Y) \subseteq \Val$ holds.

Let $y \in \Var(t) \setminus Y$.
Then $y \in \Var(t) \setminus (\ExVar(\rho) \cup (X \cap \Var(t)))$.
Thus, $y \in \Var(t) \setminus (\ExVar(\rho) \cup X)$.
Clearly, if $y \notin \Var(s) \cup X$, then $\mu'(x) = x \in \xV$.
If $y \in \Var(s) \cup X$, 
then $y \in \Var(s) \setminus X$,
and hence $\mu'(y) = \mu(y) \in \xV$ by (2).
Thus, (23) $\mu'(\Var(t) \setminus Y) \subseteq \xV$ holds.

We now show $\vDash_{\mu'} \ECO{\vec{y}}{\psi}$.
From~(11), we have
$\vDash_\mu \ECO{\vec{x}}{\varphi}$
and $\vDash_\mu \ECO{\vec{z}}{\pi\gamma}$.
As $\FVar(\ECO{\vec{x}}{\varphi}) \subseteq X$,
we have $\vDash_{\mu'} \ECO{\vec{x}}{\varphi}$.
Also,
by $\FVar(\ECO{\vec{z}}{\pi\gamma}) \subseteq \Var(s)$
we have $\vDash_{\mu'} \ECO{\vec{z}}{\pi\gamma}$.
Since $\SET{\vec{x}} \cap \SET{\vec{z}} = \varnothing$,
one can extend $\mu'$ to $\mu''$ 
so that $\vDash_{\mu''} \varphi \land \pi\gamma$.
Since $\SET{\vec{x},\vec{z}} \subseteq \SET{\vec{y}}$,
it follows (24) $\vDash_{\mu'} \ECO{\vec{y}}{\psi}$.

Thus, from~(21)--(24),
we conclude
$v \in \AllInst{\CTerm{Y}{t}{\vec{y}}{\psi}}_\mathsf{v}$.
\end{proof}

\TheoremNormalFormsOfWeakReduction*
\begin{proof}
($\Longleftarrow$)
We show the contraposition.
Suppose $\CTerm{X}{s}{\vec{x}}{\varphi} \leadsto^p_\rho \CTerm{Y}{t}{\vec{y}}{\psi}$.
From 
\Cref{lem:Characterization of Weak Rewrite Steps by Value Interpretation}~\Bfnum{1},
there exists 
$u \in \AllInst{\CTerm{X}{s}{\vec{x}}{\varphi}}_{\mathsf{v}}$ and 
$v$ such that 
$u \to^p_{\AllInst{\rho}} v$.

($\Longrightarrow$)
We proceed with a proof by contradiction.
Assume there exists $u \in \AllInst{\CTerm{X}{s}{\vec{x}}{\varphi}}_\mathsf{v}$
such that $u \to_{\AllInst{\xR}}^p v$ for some $v$.
Then, by the definition of value instantiation,
$u = s\mu$ for some $X$-valued substitution $\mu$
such that $\Dom(\mu) = X$ and $\mu \vDash_\xM \ECO{\vec{x}}{\varphi}$.
By $u \to_{\AllInst{\xR}}^p v$,
we have $\langle u|_p, v|_p \rangle \in \AllInst{\rho}$ for some $\rho \in \xR$.
However, it follows from \Cref{lem:Normal Forms of Weak Reduction}
that $\CTerm{X}{s}{\vec{x}}{\varphi} \leadsto^p_\rho \CTerm{Y}{t}{\vec{y}}{\psi}$.
\end{proof}

\LemmaInstantiationsAreRedexIsEnoughForRewriteSteps*
\begin{proof}
Let $\rho: \CRu{Z}{\ell}{r}{\pi}$ be a left-linear and left-value-free constrained rewrite rule
satisfying $\Var(\rho) \cap \Var(s,\varphi) = \varnothing$.
Suppose that for any $u \in \AllInst{\CTerm{X}{s}{\vec{x}}{\varphi}}_\mathsf{v}$
there exists $v$ such that
$\langle u|_p,v|_p  \rangle \in 
\SET{ \langle \ell\sigma,r \sigma \rangle \mid
Z \cap \Var(\ell,r)  \subseteq \VDom(\sigma),
\sigma \vDash_\xM \ECO{\pvec{z}'}{\pi}
}$,
where $\SET{\pvec{z}'} = \Var(\pi) \setminus \Var(\ell,r)$.

It suffices to show 
that $\CTerm{X}{s}{\vec{x}}{\varphi}$ has a $\rho$-redex at $p$
by some $\gamma$.
That is, 
we have to show there exists a substitution $\gamma$ such that
(i) $\Dom(\gamma) = \Var(\ell)$,
(ii) $s|_p = \ell\gamma$,
(iii) $\gamma(x) \in \Val \cup X$ for all $x \in \Var(\ell) \cap Z$, and
(iv) $\vDash_\xM (\ECO{\vec{x}}{\varphi}) \Rightarrow (\ECO{\vec{z}}{\pi\gamma})$,
where $\SET{\vec{z}} = \Var(\pi) \setminus \Var(\ell)$.

First, we show there exists $s|_p$ matches with $\ell$,
that is, $s|_p = \ell\gamma$ for some substitution $\gamma$.

By the satisfiability of $\CTerm{X}{s}{\vec{x}}{\varphi}$,
there exists $\delta$ such that 
$\delta \vDash \ECO{\vec{x}}{\varphi}$.
Then $u = s\delta \in \AllInst{\CTerm{X}{s}{\vec{x}}{\varphi}}_\mathsf{v}$.
Note here that 
(0) $\Pos(u) = \Pos(s\delta) = \Pos(s)$
and $\delta(x) \in \Val$ for all $x \in X$
and $\delta(y) \in \xV$ for all $y \in \xV \setminus X$
by the definition of value instantiation.
By our assumption,
there exists a substitution $\sigma$
such that (1) $u|_p = \ell\sigma$,
(2) $Z \cap \Var(\ell,r)  \subseteq \VDom(\sigma)$,
and (3) $\sigma \vDash_\xM \ECO{\pvec{z}'}{\pi}$
where $\SET{\pvec{z}'} = \Var(\pi) \setminus \Var(\ell,r)$.

Let $q \in \FPos(\ell)$.
Then $\ell_q = \ell\sigma|_q = u|_{p.q} = s\delta_{p.q}$.
If $p.q \in \FPos(s)$, then $\ell(q) = s\delta(p.q) = s(p.q)$.
Otherwise $p.q \in \VPos(s)$, as $\Pos(s) = \Pos(s\delta)$.
Since $q \in \FPos(\ell)$, it follows 
from $\ell(q) = s\delta(p.q)$  that $s\delta(p.q) \in \Val$.
However, this contradicts our assumption that $\ell$ is value-free.
Thus, we have shown $s(p.q) = \ell(q)$ for all $q \in \FPos(\ell)$.
Hence, there exists a substitution $\gamma$
such that $s|_p = \ell\gamma$ by the left-linearity of $\rho$.
without loss of generality one can assume $\Dom(\gamma) = \Var(\ell)$.
Thus, we have shown~(i) and~(ii).

Now we have $\ell\gamma\delta = (s|_p)\delta = s\delta|_p = u|_p = \ell\sigma$.
Let $z \in \Var(\ell) \cap Z$ and $q$ be a position such that $\ell(q) = z$.
Then we have $\sigma(z) \in \Val$ by~(2).
Thus,
$\delta(\gamma(z)) = \ell\gamma\delta_q = \ell\sigma|_q = \sigma(z) \in \Val$.
Then $\gamma(z) \in \Val \cup \xV$.
If $\gamma(z) = x \in \xV$, then $\delta(x) \in \Val$,
and hence $x \in X$ by~(0).
Thus, $\gamma(z) \in \Val \cup X$.
Therefore, it holds that
$\gamma(z) \in \Val \cup X$ for any $z \in \Var(\ell) \cap Z$.
Thus, we have shown~(iii).

It remains to show~(iv).
For this, suppose $\xi$ is a valuation such that $\vDash_{\xM,\xi} \ECO{\vec{x}}{\varphi}$.
Take a substitution $\hat\xi = \xi|_X$.
Then $\hat\xi$ is $X$-valued
and $\hat\xi \vDash \ECO{\vec{x}}{\varphi}$
by $\FVar(\ECO{\vec{x}}{\varphi}) \subseteq X$.
Also, as $\SET{ \vec{x} } \cap X = \varnothing$,
we have $\inter{\ECO{\vec{x}}{\varphi\hat\xi}}_{\xM} = \mathsf{true}$.
Let $u' = s\hat \xi$.
Then, $u' \in \AllInst{\CTerm{X}{s}{\vec{x}}{\varphi}}_\mathsf{v}$.
Thus, by our assumption,
there exists a substitution $\sigma'$
such that (1') $u'|_p = \ell\sigma'$,
(2') $Z \cap \Var(\ell,r)  \subseteq \VDom(\sigma')$,
and (3') $\sigma' \vDash_\xM \ECO{\pvec{z}'}{\pi}$
where $\SET{\pvec{z}'} = \Var(\pi) \setminus \Var(\ell,r)$.

Let $x \in \FVar(\ECO{\vec{z}}{\pi\gamma})$,
i.e., $x \in \Var(\pi\gamma) \setminus (\Var(\pi) \setminus \Var(\ell))$.
Then $x \in \SET{ \gamma(z) \in \xV \mid z \in \Var(\ell) \cap Z }$.
Thus, let $q,z$ be such that 
$z \in \Var(\ell) \cap Z$, $x = \gamma(z)$,
and $\ell|_q = z$.
Then $\sigma'(z) \in \Val$ by~(2'),
and hence using~(1'), we have
$u'|_q = \ell\sigma'|_q  = (\ell|_q)\sigma' = \sigma'(z) \in \Val$.
Then, as $s \hat\xi = u' \in \AllInst{\CTerm{X}{s}{\vec{x}}{\varphi}}_\mathsf{v}$,
we have $s|_q \in X \cup \Val$.
Then,
$\hat\xi(x) 
= \hat\xi(\gamma(z)) 
= \ell|_q\gamma\hat\xi
= (\ell\gamma|_q)\hat\xi
= (s|_q)\hat\xi \in \Val$, as $\hat\xi = \xi|_X$.
Thus, for any $x \in \FVar(\ECO{\vec{z}}{\pi\gamma})$,
we have $\hat\xi(x) \in \Val$.
Furthermore, 
$\hat\xi(\gamma(z)) = \hat\xi(x) = (s|_q)\hat \xi = u'|_q =  \sigma'(z)$.
Thus, 
$\vDash_{\xM,\hat\xi} \ECO{\vec{z}}{\pi\gamma}$ by~(3').
Hence,
$\vDash_{\xM,\xi} \ECO{\vec{z}}{\pi\gamma}$.
Thus, it follows that
$\vDash_{\xM,\xi} (\ECO{\vec{x}}{\varphi}) \Rightarrow (\ECO{\vec{z}}{\pi\gamma})$
for any valuation $\xi$, and therefore~(iv) holds.
\end{proof}

\PropositionIntantiationNormalImpliesNormalwrtPartialRewriting*
\begin{proof}
We first note that for any non-constrained term $u$,
$u \to_{\rho} v$ if and only if $u \to_{\AllInst{\rho}} v$.
We show a contraposition.
Suppose 
$\CTerm{X}{s}{\vec{x}}{\varphi} \leadsto_{\xR} \CTerm{Y}{t}{\vec{y}}{\psi}$.
Then, by \Cref{lem:Characterization of Weak Rewrite Steps},
there exists $u \in \AllInst{\CTerm{X}{s}{\vec{x}}{\varphi}}$
such that $u \to_{\AllInst{\xR}} v$ for some $v$.
Thus, $u \to_{\xR} v$ for some $v$.
\end{proof}

\end{document}